\newcommand{\footremember}[2]{%
    \footnote{#2}
    \newcounter{#1}
    \setcounter{#1}{\value{footnote}}%
}
\newcommand{\R}{\mathbf{R}}
\newcommand{\x}{\mathbf{x}}
\newcommand{\y}{\mathbf{y}}
\newcommand{\h}{\mathbf{h}}
\newcommand{\w}{\mathbf{w}}
\DeclareMathOperator{\E}{\mathbb E}
\DeclareMathOperator{\work}{work}
\DeclareMathOperator{\argmin}{argmin}
\newtheorem{claim}{Claim}[section]
\newtheorem{example}{Example}[section]
\newtheorem{proposition}{Proposition}
\newtheorem{theorem}{Theorem}
\newtheorem{corollary}{Corollary}
\newtheorem{lemma}{Lemma}
\newtheorem{remark}{Remark}
\newenvironment{proof}[1][]{\par \noindent {\bf Proof #1}\ }{\hfill$\Box$\par \vspace{11pt}}
\def\qed{\hfill$\Box$}
\newlength{\RoundedBoxWidth}
\newsavebox{\GrayRoundedBox}
\newenvironment{GrayBox}[1]%
   {\setlength{\RoundedBoxWidth}{.93\textwidth}
    \def\boxheading{#1}
    \begin{lrbox}{\GrayRoundedBox}
       \begin{minipage}{\RoundedBoxWidth}}%
   {   \end{minipage}
    \end{lrbox}
    \begin{center}
    \begin{tikzpicture}%
       \node(Text)[draw=black!80,fill=white,rounded corners,%
             inner sep=2ex,text width=\RoundedBoxWidth]%
             {\usebox{\GrayRoundedBox}};
        \coordinate(x) at (current bounding box.north west);
        \node [draw=white,rectangle,inner sep=3pt,anchor=north west,fill=white]
        at ($(x)+(6pt,.75em)$) {\boxheading};
    \end{tikzpicture}
    \end{center}}
\newenvironment{defproblemx}[2][]{\noindent\ignorespaces%
                                \FrameSep=6pt%
                                \parindent=0pt%
                \vspace*{-1.5em}
                \ifthenelse{\isempty{#1}}{%
                  \begin{GrayBox}{\textsc{#2}}%
                }{%
                  \begin{GrayBox}{\textsc{#2} parameterized by~{#1}}%
                }
                \begin{tabular*}{\textwidth}{@{\hspace{.1em}} >{\itshape} p{1.8cm} p{0.8\textwidth} @{}}%
            }{
                \end{tabular*}%
                \end{GrayBox}%
                \ignorespacesafterend
            }
\renewcommand\cite{\citep}
\renewcommand\citet{\citep}
\renewenvironment{proof}[1][{\bfseries Proof}]{{\bfseries #1. }}{\qed}
\title{Dynamic Resource Allocation in the Cloud\\with Near-Optimal Efficiency }
\author{Sebastian Perez-Salazar\footremember{2}{Georgia Institute of Technology, \texttt{sperez@gatech.edu}}
	\and Ishai Menache\footremember{1}{Microsoft Research, \texttt{ishai@microsoft.com}}
	\and Mohit Singh\footremember{3}{Georgia Institute of Technology, \texttt{mohit.singh@isye.gatech.edu}}
	\and Alejandro Toriello\footremember{4}{Georgia Institute of Technology, \texttt{atoriello@isye.gatech.edu}}
}
\begin{document}

\maketitle

\begin{singlespace}
\begin{abstract}
Cloud computing has motivated renewed interest in resource allocation problems with new consumption models. A common goal is to share a resource, such as CPU or I/O bandwidth, among distinct users with different demand patterns as well as different quality of service requirements. To ensure these service requirements, cloud offerings often come with a service level agreement (SLA) between the provider and the users. An SLA specifies the amount of a resource a user is entitled to utilize. In many cloud settings, providers would like to operate resources at high utilization while simultaneously respecting individual SLAs. There is typically a tradeoff between these two objectives; for example, utilization can be increased by shifting away resources from idle users to ``scavenger'' workload, but with the risk of the former then becoming active again. We study this fundamental tradeoff by formulating a resource allocation model that captures basic properties  of cloud computing systems, including SLAs, highly limited feedback about the state of the system, and variable and unpredictable input sequences. Our main result is a simple and practical algorithm that achieves near-optimal performance on the above two objectives. First, we guarantee nearly optimal utilization of the resource even if compared to the omniscient offline dynamic optimum. Second, we simultaneously satisfy all individual SLAs up to a small error. The main algorithmic tool is a multiplicative weight update algorithm, and a primal-dual argument to obtain its guarantees. We also provide numerical validation on real data to demonstrate the performance of our algorithm in practical applications.

\vspace{1em}
\noindent\emph{Key words:} Cloud Computing, Online Algorithms
\end{abstract}
\end{singlespace}

\section{Introduction}


Cloud computing has motivated renewed interest in resource allocation, manifested in new consumption models (e.g., AWS spot pricing), as well as the design of resource-sharing platforms \citep{hindman2011mesos,vavilapalli2013apache}. These platforms need to support a heterogenous set of users, also called tenants, that share the same physical computing resource, e.g., CPU, memory, I/O bandwidth. Providers such as Amazon, Microsoft and Google offer cloud services with the goal of benefiting from economies of scale. However, the inefficient use of resources -- over-provisioning on the one hand or congestion on the other -- could result in a low return on investment or in loss of customer goodwill, respectively. Hence, resource allocation algorithms are key for efficiently utilizing cloud resources.






To ensure quality of service, cloud offerings often come with a \emph{service level agreement} (SLA) between the provider and the users. An SLA specifies the amount of resource the user is entitled to consume. Perhaps the most common example is renting a virtual machine (VM) that guarantees an explicit amount of CPU, memory, etc. Naturally, VMs that guarantee more resources are more expensive. In this context, a simple allocation policy is to assign each user the resources specified by their SLAs. However, such an allocation can be wasteful, as users may not need the resource at all times. In principle, a dynamic allocation of resources can increase the total efficiency of the system. However, allocating resources dynamically without carefully accounting for SLAs can lead to user dissatisfaction.

Recent scheduling proposals address these challenges through work-maximizing yet fair schedulers \citep{zaharia2010delay,ghodsi2011dominant}. However, such schedulers do not have explicit SLA guarantees. On the other hand, other works focus on enforcing SLAs \citep{rayon,morpheus,grandl2016altruistic}, but do not explicitly optimize the use of extra resources.

Our goal in this work is to understand the fundamental tradeoff between high utilization of resources and SLA satisfaction of individual users. In particular, we design algorithms that guarantee \emph{both} near optimal utilization as well as the satisfaction of individual SLAs, simultaneously. To that end, we formulate a basic model for online dynamic resource allocation. We focus on a single divisible resource, such as CPU or I/O bandwidth, that has to be shared among multiple users. Each user also has an SLA that specifies the fraction of the resource it expects to obtain. The actual demand of the user is in general time-varying, and may exceed the fraction specified in the SLA. As in many real systems, the demand is not known in advance, but rather arrives in an online manner. Arriving demand is either processed or queued up, depending on the resource availability. In many real-world scenarios, it is difficult to measure the actual demand size (see, e.g., \cite{SQLVM}). Accordingly, we assume that the system (and the underlying algorithm) receives only a simple \emph{binary feedback} per user at any given time: whether the user queue is empty (the user's work arriving so far has been completed), or not. This is a plausible assumption in many systems, because one can observe workload activity, yet anticipating how much of the resource a job will require is more difficult. Additionally, it also models settings where demands are not known in advance.

While online dynamic resource allocation problems have been studied in different contexts and communities (see Section \ref{sec:related} for an overview), our work aims to address the novel aspects arising in the cloud computing paradigm, particularly the presence of SLAs, the highly limited feedback about the state of the system, and a desired robustness over arbitrary input sequences. For the algorithm design itself, we pay close attention to practicality; our approach involves fairly simple computations that can be implemented with minimal overhead of space or time. Our algorithm achieves nearly optimal utilization of the resource, as well as approximately satisfying the SLA of each individual user. We see two main use-cases for the algorithm:
\begin{itemize}
\item In enterprise settings (``private cloud''), different applications or organizations share the same infrastructure. These often have SLAs, but providers would still like to maximize the ROI by maximizing utilization \citep{mercury}.
\item In public clouds, users buy VMs, which are offered at different ``sizes'' (which is practically the SLA). In addition, the service providers offer ``best-effort'' alternatives, such as Azure Batch (MS) or Spot instances (AWS). In our model, these services can be modeled by giving an SLA of zero. Here, satisfying the VM SLAs and achieving high utilization are both important; indeed, the provider is paid for the best-effort workloads only if it completes these jobs. Our work can be viewed as a principled way to accommodate such services, and even give VMs better service than expected, an important consideration as public cloud offerings gradually become commoditized.
\end{itemize}

\subsection{The Model}

We consider the problem of having multiple tenants or users sharing a single resource, such as CPU, I/O or networking bandwidth. For simplicity, we assume that the total resource capacity is normalized to $1$. We have $N$ users sharing the resource, a finite but possibly unknown discrete time horizon indexed $t=1,\ldots, T$, and an underlying queuing system. For each user $i$, we are also given an expected share of resource $\beta(i)\geq 0$ satisfying $\sum_{i=1}^N \beta(i)\leq 1$. The input is an online sequence of workloads $L_1,\ldots,L_T\in \R_+^N$, where $L_t(i)\geq 0$ corresponds to $i$'s workload  arising at time $t$. The system maintains a queue $Q_t(i)$, denoting $i$'s remaining work at time $t$. In our model, the decision maker does \emph{not} have direct access to the values of the queues or the workloads. This allows us to consider settings where the job sizes are not known in advance and minimal information is available about the underlying system, a regular occurrence in many cloud settings. At time $t$, the following happens:
\begin{enumerate}
\item \textbf{Feedback:} The decision maker observes which queues are non-empty (the set of users $i$ with $Q_t(i)>0$, the \emph{active} users), and which are empty ($ Q_t(i) = 0 $, the \emph{inactive} users). 
	
\item \textbf{Decision:} The decision maker updates user resource allocations $h_{t}(i)$, satisfying $\sum_i h_t(i)\leq 1$.
	
\item \textbf{Update:} The load $L_t(i)$ for each $i$ arrives and each user processes as much of the work from the queue plus the arriving workload as possible. The work completed by user $i$ in step $t$ is
	$$w_t(i):=\min\{h_t(i), L_t(i)+Q_t(i)\}.$$
	The queues at the end of the time step are updated accordingly,
	\[Q_{t+1}(i) = \max\{0, L_t(i) + Q_t(i) - h_t(i)\}.\]
\end{enumerate}
We assess the performance of any algorithm based on two measures.
\begin{enumerate}
	\item \textbf{Work Maximization.} The algorithm should maximize the total work completed over all users, and thus utilize the resource as much as possible.
	\item \textbf{SLA Satisfaction.} The algorithm should (approximately) satisfy the SLAs in the following manner. The work completed by user $i$ up to any time $1\leq t \leq T$ should be no less than the work completed for this user up to $t$ if it were given a constant $ \beta(i) $ fraction  of the resource over the whole horizon.
\end{enumerate}
Achieving either of the criteria on their own is straightforward. A greedy strategy that takes away resources from an idle user and gives them to any user whose queue is non-empty is approximately work-maximizing (see Appendix~\ref{sec:localgreedy} for details). On the other hand, to satisfy the SLAs, we give each user a \emph{static} assignment of $h_t(i):=\beta(i)$ for all $t$. Naturally, the two criteria compete with each other; the following examples illustrate why these simple algorithms do not satisfy both simultaneously.

\begin{figure}[h!]
	
	\centering
	
	\includegraphics[width=1.01\linewidth]{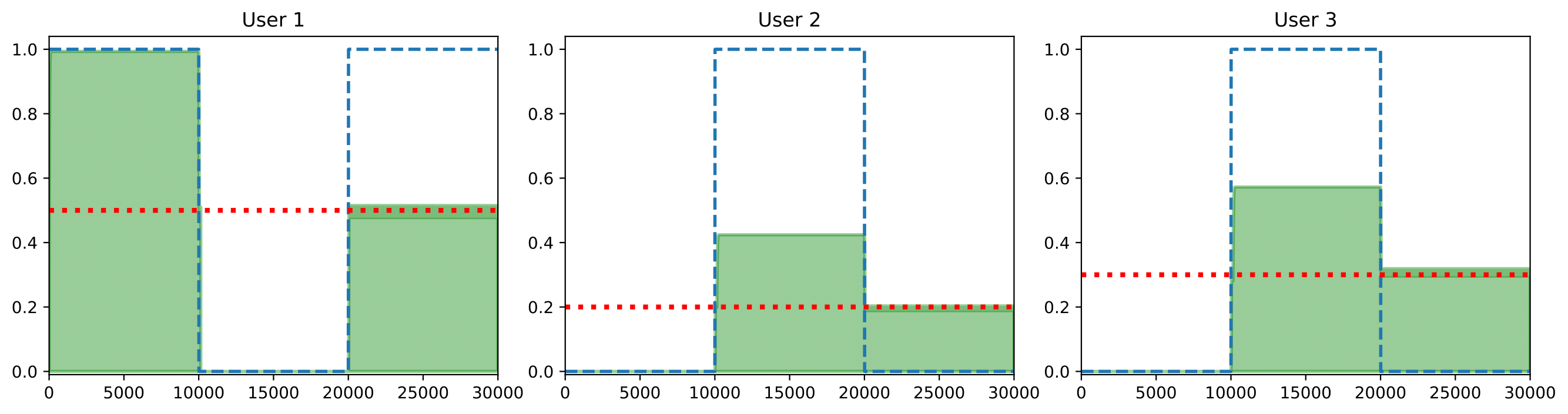}
	\caption{Example of loads and work for 3 users. The blue dashed lines show each user's workload. The green areas represent the work done by the users. The red dashed lines depict SLAs.}
	\label{fig:graph}
\end{figure}

\begin{example}
	We have a shared system with three users and corresponding SLAs $\beta(1)= 0.5, \beta(2)=0.2$ and $\beta(3)=0.3$. Loads are defined by
	{\small\[
		L_t(1)=\begin{cases}
		1 & t=1,\ldots,T/3,2T/3+1,\ldots,T\\
		0 & t=T/3+1,\ldots,2T/3
		\end{cases}, \text{ and} \quad L_t(2)=L_t(3)= 1- L_t(1).
		\]}
	We assume that $T$ is divisible by $3$. In Figure~\ref{fig:graph} we show the three users' loads in blue dashed lines and the corresponding SLAs in dotted red lines. The static solution given by the SLAs, i.e.\ $h_t(i)=\beta(i)$ for all $t$, ensures a total of $5T/6$ work done. However, the dynamic policy shown in the green line, given by
	{\small
		\begin{center}
			\begin{tabular}{|c|c|c|c|}
				\hline
				$t$ & $[0,T/3]$ & $[T/3+1,2T/3]$ & $[2T/3+1,T]$\\
				\hline
				$h_t(1)$ & $1$    & $0$  & $0.5$\\
				\hline
				$h_t(2)$ & $0$    & $0.4$  & $0.2$\\
				\hline
				$h_t(3)$ & $0$    & $0.6$  & $0.3$\\
				\hline
			\end{tabular}
		\end{center}
	}
\noindent ensures $T$ work is done (the green area). Moreover, it also ensures SLA satisfaction at all times. An alternative policy is
	{\small
		\begin{center}
			\begin{tabular}{|c|c|c|c|}
				\hline
				$t$ & $[0,T/3]$ & $[T/3+1,2T/3]$ & $[2T/3+1,T]$\\
				\hline
				$h_t(1)$ & $1$    & $0$  & $0$\\
				\hline
				$h_t(2)$ & $0$    & $1$  & $0$\\
				\hline
				$h_t(3)$ & $0$    & $0$  & $1$\\
				\hline
			\end{tabular}\\
		\end{center}
	}
\noindent which is also work maximizing. However, it does not ensure SLA satisfaction. Indeed, this policy does not satisfy user $3$'s SLA at any time in $(T/3,2T/3]$.
\end{example}

We remark that achieving both criteria is relatively simple if we allow the decision maker to observe demand or even the queue length. This can be achieved by first allocating to each user as much of the resource as necessary up to their SLA, and then distributing the remaining resource arbitrarily among users with additional demand. 
The versatility of our setting stems from the limited feedback in the form of binary information about idle and busy users. In our cloud computing context, full demand information or even visible queue lengths are unrealistic assumptions.

\subsection{Our Results and Contributions}

We design a simple and efficient online algorithm that achieves approximate work maximization as well as approximate SLA satisfaction even in the limited feedback model that we consider. For work maximization, we analyze the performance by comparing our algorithm to the \emph{optimal offline dynamic allocation} that knows all the data up front. In contrast, our online algorithm receives limited  feedback even in an online setting.  Thus, our aim  is to minimize the quantity $$\work_{\h_1^*,\ldots,\h_T^*} -\work_{\text{alg}} = \sum_{t=1}^T \sum_i w_t^*(i)  - \sum_{t=1}^T \sum_i w_t(i),$$
where $\work_{{\h}_1^*,\ldots,{\h}_T^*}$ is the optimal offline work done by dynamic allocations $\h_1^*,\ldots,\h_T^*$ , $\w_t^*= (w_t^*(1),\ldots, w_t^*(N)) $ is the work done at time $t$ by these allocations, and $\work_{\text{alg}}$ is the work done by the algorithm with allocations $\h_1,\ldots,\h_T$ and work $\w_t= (w_t(1),\ldots, w_t(N))$ at time $t$. The objective of the decision maker is to \emph{minimize} this quantity by constructing a sequence of good allocations that approach the best allocations in hindsight. Note that our benchmark is \emph{dynamic}, rather than the more common \emph{static} offline optimum usually considered in regret minimization \citep{arora2012multiplicative,hazan2016introduction,shalev2012online}. Similarly, for SLA satisfaction, our benchmark is the total work done for a user if they were given $\beta(i)$ resources for each time $1\leq t\leq T$. We give a bi-criteria online algorithm that achieves nearly the same performance as the benchmarks if the resources for the latter are slightly more constrained than that of the algorithm.
Algorithm~\ref{alg:SMB1}, which we formally describe in Section \ref{sec:algorithm}, follows a multiplicative weight approach. The idea is to boost the allocations of active users by a factor greater than $1$, with more emphasis on users with current allocation below their SLA. This intuition translates into a simple update that ensures high utilization of the resource and SLA satisfaction; formally:

\begin{theorem}\label{thm:main} For any input parameter $0<\varepsilon\leq \frac{1}{10}$, SLAs $\beta=(\beta(1),\ldots,\beta(N))$ satisfying $\beta(i) \geq  2\frac{\varepsilon}{N}$, and online loads $L_1,\ldots,L_T\in \R_{\geq 0}^N$, Algorithm~\ref{alg:SMB1} achieves the following guarantees:
	\begin{enumerate}
		\item \textbf{Approximate Work Maximization.}  Let $\h_1^*,\ldots,\h_T^*\in [0,1]^N$ be an optimal offline sequence of allocations such that $\sum_i h_t^*(i)=1$ for all $1\leq t\leq T$. Then
		\[
		\textstyle	\work_{\text{alg}} \geq  (1-\varepsilon) \work_{\h_1^*,\ldots,\h_T^*} -  \mathcal{O} \left({N {\varepsilon^{-2}\log(N/\varepsilon) }}\right).
		\]

		\item \textbf{Approximate SLA Satisfaction.} There exists $\tilde{p} = \tilde{p}(N,\varepsilon) =\mathcal{O} \left( N^2 \varepsilon^{-3}  \log (N/\varepsilon)  \right)$ such that for any user $i$ and time $t$, if we take $\h_1',\ldots,\h_T'\in [0,1]^N$ to be any sequence of allocations with $h_t'(i)\leq \beta(i)$, then
		\[
			\sum_{\tau=1}^{t} w_\tau(i) \geq \left(1-2\varepsilon\right) \sum_{\tau = 1 }^{t}  w_\tau'(i) -\beta(i) \tilde{p} ,
		\]
		where $\w_t'$ is the work performed by the allocations $\h_1',\ldots,\h_T'$.
	\end{enumerate}
\end{theorem}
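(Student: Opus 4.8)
The plan is to read Algorithm~\ref{alg:SMB1} as a multiplicative weights procedure and to derive both guarantees from a single potential-function (duality) argument, anchored by a conservation identity that ties completed work to leftover queue. I would first record the bookkeeping fact that, with empty initial queues, telescoping the recursion $Q_{t+1}(i)=\max\{0,L_t(i)+Q_t(i)-h_t(i)\}$ against $w_t(i)=\min\{h_t(i),L_t(i)+Q_t(i)\}$ gives, in both cases of the minimum, $w_t(i)=L_t(i)+Q_t(i)-Q_{t+1}(i)$, hence $\sum_{\tau\le t} w_\tau(i)=\sum_{\tau\le t} L_\tau(i)-Q_{t+1}(i)$. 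So maximizing completed work is the same as minimizing leftover queue mass, and per-user work tracks per-user queue length. This identity is exactly what lets a static, regret-style estimate speak about the dynamic offline benchmark.

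Next I would set up the multiplicative weights inequality. Writing the allocation as a (sub)normalized weight vector whose coordinates are boosted on the active set, more aggressively for users currently below $\beta(i)$, I would track the potential $\Phi_t=\sum_i y_t(i)$. The one-step estimate $\ln(\Phi_{t+1}/\Phi_t)\le \eta\langle p_t,g_t\rangle+O(\eta^2)\|g_t\|$, summed over $t$ and compared to the growth of any single coordinate, yields the familiar guarantee that the algorithm's cumulative gain is at least $(1-\eta)$ times any comparator's, minus $\tfrac{\log(\text{number of levels})}{\eta}$. Choosing $\eta\asymp\varepsilon$ and noting that the granularity forced by $\beta(i)\ge 2\varepsilon/N$ makes the effective level set of size $\asymp N/\varepsilon$ produces the $\varepsilon^{-1}\log(N/\varepsilon)$ additive scale; the further $N$ and $\varepsilon^{-1}$ factors in the theorem come from aggregating per-coordinate guarantees across the $N$ users and from converting weighted gain back into actual completed work.

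For work maximization I would use the conservation identity to collapse the dynamic benchmark: since uncompleted work merely persists in the queue, completing it as early as capacity permits is optimal, so $\work_{\h_1^*,\dots,\h_T^*}$ equals the total completed by a work-conserving allocation that never idles the resource while work is available. The algorithm's shortfall is then precisely the capacity it wastes, namely allocation $h_t(i)$ handed to a user whose realized $L_t(i)+Q_t(i)$ falls below it. Reading the gain $g_t$ off as per-user utilization, the multiplicative weights bound caps this cumulative waste at $\mathcal{O}(N\varepsilon^{-2}\log(N/\varepsilon))$, which together with the $(1-\varepsilon)$ factor gives the first claim; this is the duality step that upgrades a static regret bound into a guarantee against the omniscient dynamic optimum. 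For SLA satisfaction I would run a per-user potential argument: fix $i$, note the best capped benchmark takes $h_t'(i)=\beta(i)$, and charge each step where $i$ is active yet receives less than $\beta(i)$. The below-SLA boost makes $y_t(i)$ grow geometrically on exactly these deficit steps; since the weight is bounded, their number, and hence the charged work, is capped, and translating this cap through the conservation identity bounds user $i$'s shortfall by the $(1-2\varepsilon)$ multiplicative loss plus the additive $\beta(i)\tilde s$, with $\tilde s=\mathcal{O}(N^2\varepsilon^{-3}\log(N/\varepsilon))$ the adaptation horizon.

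The main obstacle I anticipate is that one and the same update must simultaneously deliver the utilization regret and the per-user catch-up, and these objectives genuinely compete under the capacity normalization $\sum_i h_t(i)\le 1$. Concretely, controlling $\tilde s$ requires showing that the below-SLA boost of an underserved user prevails against the utilization boost of its rivals fast enough, while the coupling through normalization does not let the waste bound of the first part degrade. Getting the bookkeeping right when a user both exhausts its queue and sits below its SLA, so that the two boosts interact, and ensuring the discretization into $\asymp N/\varepsilon$ levels does not inflate the constants, are the delicate points where the argument must be handled with care.
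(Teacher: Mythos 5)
The central gap is in your work-maximization argument. The ``familiar'' multiplicative-weights potential bound you invoke compares the algorithm's cumulative gain against a \emph{static} comparator $\x\in\Delta_\varepsilon$, and no static comparator certifies the dynamic optimum in this model. Concretely: if at each step exactly one user is active (with unit load) and the identity of that user rotates among the $N$ users, the dynamic offline optimum does work $\approx T$, while $\max_{\x\in\Delta_\varepsilon}\sum_t\langle \x,g_t\rangle\approx(1+\lambda)T/N$; so the regret inequality you cite can only certify utilization $\approx T/N$, which is far from the claim. Your conservation identity $\sum_{\tau\le t}w_\tau(i)=\sum_{\tau\le t}L_\tau(i)-Q_{t+1}(i)$ is correct and is used by the paper, but it merely converts the shortfall into leftover queue mass; it gives no mechanism forcing the algorithm's queues to stay small, which is the actual content of the theorem. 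The paper is explicit that its proofs do \emph{not} follow the standard mirror-descent analysis: instead it proves structural monotonicity lemmas about the update itself (Lemma~\ref{lem:monotonicity}: whenever utilization is below $1-\varepsilon$, \emph{every} active user's allocation is multiplied by at least $1+c$ with $c=\tfrac{\varepsilon\eta}{4N}$, and otherwise shrinks by at most a factor $1-\varepsilon c$), splits the horizon at a time $s^\star$ (Claims~\ref{cl:gooduser1} and~\ref{cl:gooduser2}: before $s^\star$ essentially all arrived load has been completed; after $s^\star$ every block of $\tilde s$ steps contains a user active throughout, so the block can contain at most $4\varepsilon\tilde s$ low-utilization steps, else that user's allocation would exceed $1$), and then exhibits an explicit feasible solution of the dual LP $(D_{4\varepsilon})$ whose value matches the algorithm's work up to $8\tfrac{N}{\varepsilon^2\eta}\ln(N/\varepsilon)$ (Theorem~\ref{thm:main1}, followed by a scaling step $\bar\h_t=(1-\varepsilon)\h_t$ to get the bi-criteria form). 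None of this follows from the Hedge potential inequality; your ``duality step that upgrades a static regret bound'' is exactly the missing argument, not a routine one. Note also that Algorithm~\ref{alg:SMB1} re-projects onto the truncated simplex every step, so $\sum_i h_t(i)\equiv 1$ and the unnormalized-potential bookkeeping you describe does not even apply as stated.

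The SLA part has a related flaw. You claim the number of deficit steps (active but below $\beta(i)$) is capped ``since the weight is bounded.'' Globally this is false: whenever user $i$ goes idle their allocation decays toward $\varepsilon/N$, so deficit phases recur arbitrarily often over the horizon, and their total number is $\Omega(T)$ in general. What is true, and what the paper proves (Theorem~\ref{thm:main2}), is a \emph{per-window} statement established by induction on $t$: in any window of $\tilde s$ consecutive steps, either user $i$'s queue empties (and then the conservation identity closes the induction) or $i$ stays active, climbs back to $(1-\varepsilon)\beta(i)$ within $\tilde s$ steps, and thereafter never drops below $(1-2\varepsilon)\beta(i)$ while active; the additive loss $\beta(i)\tilde s$ comes from this window shift, not from a global count of deficit steps. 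Moreover, the geometric growth on deficit steps that both arguments need is not automatic from the boost: the KL projection renormalizes and can in principle shrink a boosted coordinate. Lemma~\ref{lem:monotonicity2} rules this out only through a nontrivial argument that some other user must sit above their SLA outside the clamped set $S_{t+1}$ (using $\sum_j\beta(j)\le 1$ and $\beta(j)\ge 2\varepsilon/N$), so that mass can flow to $i$. This is precisely the ``below-SLA boost prevails against rivals under the normalization coupling'' difficulty you correctly flag in your final paragraph, but flagging it is not resolving it --- it is the heart of the proof.
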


The first part of Theorem~\ref{thm:main} asserts that if $T$ is known for the system, we can achieve $\work_{\text{offline}}-\work_{\text{alg}}\leq \mathcal{O}(T^{2/3}\log T)$ with $\varepsilon= \Theta\left(\frac{N^{1/3}}{T^{1/3}}\right)$. 
However, this choice of $\varepsilon$ is suboptimal for SLA satisfaction. We can achieve an improved bound for SLA satisfaction (when $t=T$) by picking $\varepsilon= \Theta\left(\frac{N^{1/3}}{T^{1/4}}\right)$. If $T$ is unknown, we can use a standard doubling trick, see for example \citep{shalev2012online}. Summarizing, we obtain the following result.
\begin{corollary}\label{cor:UpperboundAlg}
	For $\varepsilon= \Theta\left( \frac{N^{1/3}}{T^{1/4} }  \right)$, Algorithm~\ref{alg:SMB1}  guarantees
	$$\textstyle \work_{\h_1^*,\ldots,\h_T^*} - \work_{\text{alg}} \leq \mathcal{O}(N^{1/3} T^{3/4} + N^{1/3} T^{1/2} \log(N T) ) =\mathcal{O}(N^{1/3}T^{3/4}) ,$$
	where $\h_1^*,\ldots,\h_T^*$ are optimal offline dynamic allocations.
\end{corollary}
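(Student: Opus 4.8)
The plan is to derive Corollary~\ref{cor:UpperboundAlg} directly from the two guarantees in Theorem~\ref{thm:main}, specializing to the work-maximization bound with a carefully chosen $\varepsilon$. The only nontrivial work is bookkeeping: we must verify that the chosen $\varepsilon$ is admissible (i.e.\ lies in $(0,\tfrac1{10}]$ for $T$ large enough, and is compatible with the hypothesis $\beta(i)\geq 2\varepsilon/N$), and then substitute it into the additive error term to read off the $\mathcal{O}(T^{3/4})$ rate.

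First I would invoke part~(1) of Theorem~\ref{thm:main}. Rearranging its conclusion gives
\[
\work_{\h_1^*,\ldots,\h_T^*} - \work_{\text{alg}} \;\leq\; \varepsilon\,\work_{\h_1^*,\ldots,\h_T^*} \;+\; \mathcal{O}\!\left(N\varepsilon^{-2}\log(N/\varepsilon)\right).
\]
Now I would bound the first term on the right using the trivial observation that the offline optimum completes at most one unit of work per time step (since $\sum_i h_t^*(i)=1$ and completed work never exceeds the allocation), so $\work_{\h_1^*,\ldots,\h_T^*}\leq T$. Hence the regret is at most $\varepsilon T + \mathcal{O}(N\varepsilon^{-2}\log(N/\varepsilon))$. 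With $N$ treated as a constant, the two terms are balanced when $\varepsilon T \asymp \varepsilon^{-2}$, i.e.\ $\varepsilon = \Theta(T^{-1/3})$, which recovers the $\mathcal{O}(T^{2/3}\log T)$ claim made in the paragraph following the theorem. For the stated corollary, however, I would instead substitute $\varepsilon = \Theta(T^{-1/4})$: the first term becomes $\varepsilon T = \Theta(T^{3/4})$ and the additive term becomes $\mathcal{O}(N T^{1/2}\log(NT^{1/4})) = \mathcal{O}(T^{1/2}\log T)$, which is dominated by $T^{3/4}$. Summing, the regret is $\mathcal{O}(T^{3/4})$, as desired.

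The remaining point is admissibility of $\varepsilon = \Theta(T^{-1/4})$. Since $T^{-1/4}\to 0$, for all sufficiently large $T$ we have $\varepsilon\leq\tfrac1{10}$, so the hypothesis $0<\varepsilon\leq\tfrac1{10}$ of Theorem~\ref{thm:main} holds; for the finitely many small values of $T$ the bound is trivial up to adjusting the hidden constant. The constraint $\beta(i)\geq 2\varepsilon/N$ is likewise satisfied for large $T$ whenever the SLAs are bounded below by a positive constant (again $2\varepsilon/N\to 0$). If $T$ is not known in advance, I would close the argument by appealing to the standard doubling trick cited in the excerpt (\citep{shalev2012online}): run the algorithm in phases of geometrically growing guessed horizons, restarting with the phase-appropriate $\varepsilon$, and sum the per-phase regrets as a geometric series, which preserves the $\mathcal{O}(T^{3/4})$ rate up to a constant factor.

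The main obstacle I anticipate is not analytical depth but precision in the constant-management: one must decide explicitly whether $N$ is held fixed (so that factors of $N$ and $\log N$ are absorbed into the $\mathcal{O}(\cdot)$) and confirm that, under the chosen $\varepsilon$, the SLA-satisfaction term $\beta(i)\tilde s = \mathcal{O}(N^2\varepsilon^{-3}\log(N/\varepsilon))$ does not silently degrade the work bound --- but here it does not interfere, since the corollary concerns \emph{only} the work-maximization quantity and the two guarantees of Theorem~\ref{thm:main} hold simultaneously under the same single run. Thus the proof is essentially a one-line substitution plus a verification that the chosen $\varepsilon$ meets the theorem's preconditions.
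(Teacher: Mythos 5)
Your proposal is correct and follows essentially the same route as the paper: the corollary is obtained by rearranging part~(1) of Theorem~\ref{thm:main}, bounding $\work_{\h_1^*,\ldots,\h_T^*}\leq T$ (since $\sum_i h_t^*(i)=1$), substituting $\varepsilon=\Theta(T^{-1/4})$ so that the $\varepsilon T=\Theta(T^{3/4})$ term dominates the additive $\mathcal{O}(T^{1/2}\log T)$ term, and invoking the doubling trick when $T$ is unknown. Your admissibility checks ($\varepsilon\leq\tfrac1{10}$ and $\beta(i)\geq 2\varepsilon/N$ for large $T$) are a reasonable extra precaution; note they could be avoided entirely by citing Theorem~\ref{thm:main1} directly, whose work-maximization guarantee does not require the SLA lower bound.
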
	

As $T$ grows, Corollary~\ref{cor:UpperboundAlg} guarantees that the \emph{rate} of work done by our algorithm $\frac{\work_{\text{alg}}}{T}$,
approaches the rate of work done by the optimal dynamic solution $\frac{\work_{{\h}_1^*,\ldots,{\h}_T^*}}{T}$.
We emphasize again that this is a stronger guarantee using the much more powerful optimal dynamic solution as a benchmark, rather than the typical static allocation used in regret analysis for online algorithms. Such a guarantee can be obtained in our model because the incomplete work remains stored in the queues until we are able to finish it; this allows the algorithm to \emph{catch up} with the incomplete work. 

The second result in Theorem~\ref{thm:main} states that the work done by any individual user is comparable to the work done by their promised SLA. In other words, the user's queue length is not much larger than it would be under a static SLA allocation. By using $\varepsilon= \Theta\left( \frac{N^{1/3}}{T^{1/4}} \right)$ we obtain the following result.
\begin{corollary}\label{cor:Queues}
	Let $\varepsilon= \Theta\left( \frac{N^{1/3}}{T^{1/4}}  \right) $. For a user $i$,  $Q_T(i) \leq Q_{T}'(i) + \mathcal{O}(NT^{3/4}\log (NT))$, where $Q_t$ is the queue given by Algorithm~\ref{alg:SMB1} and $Q_t'$ is the queue induced by any dynamic policy $\h_1',\ldots,\h_T'$ with $h_t'(i)\leq \beta(i)$.
\end{corollary}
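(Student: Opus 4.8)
The plan is to derive the queue bound directly from the SLA-satisfaction guarantee in the second part of Theorem~\ref{thm:main}, specialized to the time horizon $t=T$ and to the particular benchmark policy we want to compare against. The key observation is that the queue length $Q_T(i)$ is determined by the cumulative gap between arrivals and work completed. Concretely, unrolling the update rule $Q_{t+1}(i)=\max\{0,L_t(i)+Q_t(i)-h_t(i)\}$ and using $w_t(i)=\min\{h_t(i),L_t(i)+Q_t(i)\}$, one gets the identity
\[
Q_{T+1}(i) = \sum_{\tau=1}^{T} L_\tau(i) - \sum_{\tau=1}^{T} w_\tau(i),
\]
valid because all arriving work that is not completed simply accumulates in the queue (nothing is lost). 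The same identity holds for the benchmark policy $\h_1',\ldots,\h_T'$, giving $Q_{T+1}'(i)=\sum_\tau L_\tau(i)-\sum_\tau w_\tau'(i)$. Subtracting, the common arrival term cancels and we obtain
\[
Q_{T+1}(i) - Q_{T+1}'(i) = \sum_{\tau=1}^{T} w_\tau'(i) - \sum_{\tau=1}^{T} w_\tau(i).
\]

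Next I would bound the right-hand side using Theorem~\ref{thm:main}(2) with $t=T$. That guarantee says $\sum_\tau w_\tau(i) \geq (1-2\varepsilon)\sum_\tau w_\tau'(i) - \beta(i)\tilde{s}$, so rearranging yields
\[
\sum_{\tau=1}^T w_\tau'(i) - \sum_{\tau=1}^T w_\tau(i) \leq 2\varepsilon \sum_{\tau=1}^T w_\tau'(i) + \beta(i)\tilde{s}.
\]
Since the benchmark completes at most one unit of work per time step (work is capped by the allocation, which sums to at most $1$, and in particular $w_\tau'(i)\le h_\tau'(i)\le \beta(i)\le 1$), we have $\sum_\tau w_\tau'(i)\leq T$. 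Substituting $\varepsilon=\Theta(T^{-1/4})$ makes the first term $2\varepsilon T = \mathcal{O}(T^{3/4})$. This alone is too weak; the point is that the sharper bound comes from using $\sum_\tau w_\tau'(i)\le T$ together with the correct accounting of $\tilde{s}$.

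The main obstacle is obtaining the claimed $\mathcal{O}(T^{1/2}\log T)$ rather than the naive $\mathcal{O}(T^{3/4})$. To get the stated rate I would not bound $\sum_\tau w_\tau'(i)$ by $T$ in the $2\varepsilon$ term, but instead substitute $\varepsilon=\Theta(T^{-1/4})$ into the additive term $\beta(i)\tilde{s}$: since $\tilde{s}=\mathcal{O}(N^2\varepsilon^{-3}\log(N/\varepsilon))$, with $\varepsilon=\Theta(T^{-1/4})$ this gives $\tilde s = \mathcal{O}(N^2 T^{3/4}\log T)$, and $\beta(i)\tilde s$ enters as the dominant additive loss. The crucial subtlety is that for a fixed user~$i$ the relevant comparison is against a policy with $h_\tau'(i)\le\beta(i)$, so the per-step work of the benchmark for user~$i$ is at most $\beta(i)$ and thus $\sum_\tau w_\tau'(i)\le \beta(i)T$; this reduces the multiplicative term to $2\varepsilon\beta(i)T=\mathcal{O}(\beta(i)T^{3/4})$. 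Balancing these contributions and tracking the dependence on $\beta(i)$ carefully is what yields the final $\mathcal{O}(T^{1/2}\log T)$ bound after the factor $\beta(i)\le 1$ is absorbed; verifying that the constants and the $N$-dependence collapse correctly under the stated choice of $\varepsilon$ is the one place where the routine-looking algebra must be done with care, since a loose step reverts the bound to $T^{3/4}$.
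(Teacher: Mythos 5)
Your setup is exactly the derivation the paper intends for Corollary~\ref{cor:Queues}: the identity $Q_T(i)-Q_T'(i)=\sum_{\tau=1}^{T-1} w_\tau'(i)-\sum_{\tau=1}^{T-1} w_\tau(i)$ (both systems start empty, see the same loads, and no arriving work is ever discarded), followed by Theorem~\ref{thm:main}(2) at the final time, followed by the substitution $\varepsilon=\Theta(T^{-1/4})$. Everything up to and including the inequality $\sum_\tau w_\tau'(i)-\sum_\tau w_\tau(i)\leq 2\varepsilon\sum_\tau w_\tau'(i)+\beta(i)\tilde{s}$ is correct.

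The gap is your final paragraph. With $\varepsilon=\Theta(T^{-1/4})$, the two terms you exhibit are $2\varepsilon\beta(i)T=\Theta(\beta(i)T^{3/4})$ and $\beta(i)\tilde{s}=\Theta\bigl(\beta(i)N^2T^{3/4}\log T\bigr)$; both carry exponent $3/4$, and ``absorbing $\beta(i)\leq 1$'' can only delete the factor $\beta(i)$, never lower the exponent. Re-optimizing $\varepsilon$ does not help either: the quantity available from Theorem~\ref{thm:main}(2) is of the form $\beta(i)\bigl(\varepsilon T+N^2\varepsilon^{-3}\log(N/\varepsilon)\bigr)$, whose minimum over $\varepsilon$ is $\Theta\bigl(\beta(i)T^{3/4}\log T\bigr)$, attained precisely at $\varepsilon=\Theta(T^{-1/4})$. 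So the ``careful algebra'' you defer to cannot exist; the honest conclusion of your (correct) setup is $Q_T(i)\leq Q_T'(i)+\mathcal{O}(T^{3/4}\log T)$ for general SLAs. The only way to reach $T^{1/2}\log T$ along this route is to additionally assume $\beta(i)=\mathcal{O}(\varepsilon)=\mathcal{O}(T^{-1/4})$ --- e.g., $\beta(i)=\Theta(\varepsilon/N)$, its minimum permitted value --- in which case $\beta(i)\tilde{s}=\mathcal{O}\bigl(N\varepsilon^{-2}\log(N/\varepsilon)\bigr)=\mathcal{O}(NT^{1/2}\log T)$ and $2\varepsilon\beta(i)T=\mathcal{O}(T^{1/2})$. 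Since the corollary is stated for arbitrary $\beta(i)$, your claim that tracking $\beta(i)$ makes the exponent collapse is not a proof step but wishful thinking; the discrepancy between the $T^{3/4}\log T$ your argument actually delivers and the stated $T^{1/2}\log T$ is a defect of the corollary's statement (the paper offers no explicit proof of it), not something any bookkeeping in your proposal can repair.
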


We remark that we need to bound the SLAs away from $\frac{\varepsilon}{N}$; in Theorem~\ref{thm:main} we use the bound $\beta(i)\geq 2\frac{\varepsilon}{N}$. This condition is necessary in our analysis to guarantee that there is an over-provisioned ($h_t(i)>\beta(i)$) user from which we can move allocation to an under-provisioned user. See Theorem~\ref{thm:main2} for details and a more relaxed bound.

%

Corollary~\ref{cor:UpperboundAlg}'s guarantee is near-optimal asymptotically in $T$ in terms of work maximization, as the following result shows. The proof of this result appears in Appendix~\ref{sec:Lower}.
\begin{theorem}\label{thm:GeneralLB_main}
	For any online deterministic algorithm $\mathcal{A}$ for our model, there is a sequence of online loads $L_1,\ldots,L_T$ such that $\work_{\h_1^*,\ldots,\h_T^*} - \work_\mathcal{A} = \Omega\left(\sqrt{T}\right)$, where $\h_1^*,\ldots,\h_T^*$ are optimal offline dynamic allocations.
\end{theorem}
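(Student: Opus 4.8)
The plan is to reduce the work gap to a statement purely about final backlogs, and then build an instance on which the offline optimum clears all work while any online algorithm is forced to leave $\Omega(\sqrt T)$ in its queues. First I would record the conservation identity: since $Q_1(i)=0$ and $Q_{t+1}(i)=L_t(i)+Q_t(i)-w_t(i)$, telescoping gives $\sum_{t=1}^T w_t(i)=\sum_{t=1}^T L_t(i)-Q_{T+1}(i)$, so for any policy $\work=\Lambda-\sum_i Q_{T+1}(i)$, where $\Lambda=\sum_{t,i}L_t(i)$ is the (policy-independent) total load. Because the adversary commits a single load sequence and both the algorithm and the offline optimum run on it,
\[
\work_{\h_1^*,\dots,\h_T^*}-\work_{\mathcal A}=\sum_i Q^{\mathcal A}_{T+1}(i)-\sum_i Q^{*}_{T+1}(i).
\]
I would design the instance so that $\Lambda\le T$ and the offline optimum, knowing every $L_t$ in advance, serves all of it (zero final backlog); the gap then equals the online algorithm's final backlog, equivalently the total capacity it wastes. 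It therefore suffices to force $\sum_i Q^{\mathcal A}_{T+1}(i)=\Omega(\sqrt T)$.

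\textbf{The hard instance and why feedback is the bottleneck.} Next I would construct the instance with a constant number of users (two is enough). The only information the algorithm receives is the binary activity pattern, and crucially it must commit $h_t$ \emph{before} $L_t$ arrives; since $\mathcal A$ is deterministic, the adversary can simulate it and reveal loads to punish its committed allocation. The central mechanism is that the algorithm can never learn a user's \emph{queue size}, only whether it is positive. The adversary keeps one ``test'' user active with a backlog held near a hidden threshold, so that the algorithm cannot tell how much of its committed capacity that user can actually absorb. If the algorithm over-commits to the test user, the adversary withholds load and the surplus allocation is wasted in the step the queue hits zero (work done $=L_t(i)+Q_t(i)<h_t(i)$); if it under-commits, the adversary pours load on that user, which then backs up. Either way the online policy pays, while the offline policy, knowing the revealed loads, routes capacity correctly and stays at zero backlog.

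\textbf{Extracting $\sqrt T$ and the main obstacle.} Finally I would quantify the unavoidable loss. A cautious algorithm can keep its per-step waste small only by hedging with a margin on the test user, but a small margin exposes it to a large one-shot loss when its guess about a big allocation is wrong; balancing these gives an optimal hedge of order $1/\sqrt T$ and an accumulated waste of order $\sqrt T$. I would make this rigorous by tracking the signed discrepancy between the capacity the online policy commits to the test user and the amount that user can truly absorb; this is a bounded-increment process the adversary steers, and an anti-concentration / extremal-deviation argument shows its cumulative worst-case value is $\Omega(\sqrt T)$, which lower-bounds the online backlog. The main obstacle is the tension between two competing needs: keeping the binary feedback uninformative (so that $\mathcal A$ cannot adapt) while still forcing genuine waste --- and waste requires a queue to approach empty, which \emph{reveals} information. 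The resolution is to arrange that each information-revealing transition occurs only \emph{after} the corresponding waste is already locked in, and to charge all of the resulting backlog difference to the online policy through the conservation identity; a secondary obstacle is verifying that the same instance admits an offline schedule with exactly zero final backlog, so that the entire gap is attributed to the algorithm, and that the argument holds uniformly over all deterministic $\mathcal A$ (including those that do not use full capacity, which only increases waste).
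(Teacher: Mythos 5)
Your first two steps do match the paper's proof: the conservation identity (once the loads sum to $1$ in every step, the offline optimum clears everything and the gap equals the online algorithm's final backlog), the restriction to two users, and the key use of determinism plus binary feedback --- two load sequences that keep both queues positive elicit identical allocations, so the adversary may decide retroactively which sequence it ``was playing.'' The genuine gap is in your third step, which is where all the quantitative content lives. First, ``hedging with a margin of order $1/\sqrt T$'' presumes the algorithm plays a particular style; a lower bound must defeat \emph{every} deterministic algorithm, including those that never hedge, so you cannot argue by balancing the algorithm's trade-off --- you must exhibit an adversary and analyze it against an arbitrary opponent. Second, the proposed tool (anti-concentration of a bounded-increment process the adversary ``steers'') points at a mechanism that cannot work here: waste occurs at step $t$ only when some user's allocation exceeds its queue plus arriving load, so no waste at all can occur at a step where both queues are at least $1$. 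An adversary that merely perturbs loads step by step therefore forces only $O(1)$ total waste before the queues saturate and absorb everything; after that, each additional $\Omega(1)$ of waste requires first draining a queue down to size below the allocation, and draining a queue of size $q$ takes at least about $q$ steps since the service rate is at most $1$. That drain time, growing linearly with the accumulated backlog, is exactly why the answer is $\sqrt T$ and not $T$, and it is absent from your sketch.

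The paper's proof supplies precisely this structure. It runs phases: at a phase start one queue is empty and the other holds $q$; the adversary parks $\varepsilon$ in the empty queue (so the feedback stays ``both active'') and replays $L_t=\h_t$, which freezes the system. If within $2q+1$ steps the algorithm ever allocates at least $1/2$ to the $\varepsilon$-user, the adversary switches that step's load to $(0,1)$ and collects waste $\geq 1/2-\varepsilon$; otherwise the other user receives $>1/2$ throughout, so the adversary retroactively switches to loads $(1,0)$ (indistinguishable until the switch point), drains that user's queue in $\leq 2q+1$ steps, and collects waste $\geq 1/4$ when it is nearly empty. Each phase thus adds $\geq 1/4$ to the backlog but lasts $\Theta(q)$ steps, so after $m$ phases $q_m \geq m/4$ while $T \leq \sum_{i=1}^m (2q_i+2) = O(q_m^2)$, giving backlog $\Omega(\sqrt T)$; the conservation identity converts this into the work gap. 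Note also that your dichotomy ``under-commit $\Rightarrow$ backs up $\Rightarrow$ pays'' is not right as stated: in this model backlog created by under-commitment is not by itself a loss, since queued work can be caught up later (the paper stresses exactly this catch-up property). In the under-commit branch the loss materializes only on the \emph{other}, over-committed user, and only after the $\Theta(q)$-step drain --- which is the step your proposal is missing.
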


Our algorithm follows a mirror descent approach \citep{ben2001lectures,hazan2016introduction}. Unable to see the lengths of the queues, a (naive) approach is to pretend that active users have gigantic queues. From this approach we extract a simple update rule that multiplicatively boosts active users; however, active users who are under their SLA are boosted slightly more than other active users. If inactive users are assigned more than $\varepsilon$ of the resource, where $\varepsilon$ is the algorithm parameter, active users ramp up their allocation in few iterations. In the opposite case, at least $1-\varepsilon$ of the resource is assigned to active users, and the slight boost to users below their SLA ensures a healthy re-balancing of the resource.  We show that this efficient heuristic strategy is enough to achieve approximate work maximization and SLA satisfaction. We remark that the mirror descent interpretation is only used to provide intuition for the algorithm, and our proofs follow a different path than the usual mirror descent analysis. Later, we detail a slight modification that enjoys the same theoretical guarantees as Algorithm~\ref{alg:SMB1} but in practice exhibits more desirable behavior (see Algorithm~\ref{alg:SMB2}). Intuitively, among active users, the modified algorithm tries to keep allocations proportional to their SLAs. This behavior is appealing; for example, a user $A$ with twice the SLA of another user $B$ would expect in practice to perform at least twice as much work. Similarly, user $B$ would expect to receive no less than half of user $A$'s allocation. This second algorithm exhibits another interesting feature; it can be applied to over-committed regimes with $\sum_{i=1}^N \beta (i) > 1$, remain work-maximizing, and satisfy a normalized version of SLA satisfaction (see Section~\ref{sec:Extension}).


The analysis of the algorithm relies on a primal-dual fitting approach. For work maximization, we can write the offline dynamic optimal allocation as a solution to a linear program and then construct feasible dual solutions with objective value close to the algorithm's resource utilization. A crucial ingredient of the algorithm is the use of entropic projection on the \emph{truncated} simplex, which ensures every user gets at least a $\varepsilon/N$ fraction of the resource at all times. Intuitively, this means any user with a non-empty queue will recover their SLA requirement in a few steps.

We do an extensive analysis of our algorithm on synthetic data as well as real data obtained from CPU traces of a production service in Microsoft's cloud. We aim to quantify the performance of the algorithm on three objectives, (i) work maximization, (ii) SLA guarantee and (iii) queue behaviour. While our theoretical results give guarantees for these objectives, we show experimentally that the algorithm exceeds these guarantees. We benchmark the algorithm against natural online algorithms, such as a static allocation as given by the SLA guarantee, or the algorithm that aims to proportionally distribute the resource among active clients (according to their SLA). We also benchmark against offline algorithms that know all input data up front; our algorithm's performance on various measures is comparable to the offline algorithms.


This work is organized as follows. In Section \ref{sec:algorithm}, we present the preliminaries and the basic version of the multiplicative weight algorithm. Section \ref{sec:Analysis} contains the proof of Theorem \ref{thm:main} in the bi-criteria form. We split the proof into two parts: work maximization in Section \ref{sec:throughput} and SLA satisfaction in \ref{sec:SLAsat}. In Section \ref{sec:Extension}, we present the extension of our algorithm and its guarantees. Finally, in Section~\ref{sec:Numerical} we present numerical experiments that empirically validate our results, and conclude in Section \ref{sec:Conclusions}.


\subsection{Related Work}\label{sec:related}


There has been growing interest in resource allocation problems arising from cloud computing applications, both from a practical as well as a theoretical standpoint \citep{grandl2016altruistic,hindman2011mesos,mercury,morpheus,rayon,SQLVM,vavilapalli2013apache,narasayya2015sharing,menache2015online}. The focus of many of these works has been to understand the trade-offs between efficiency and ensuring guarantees to individual users.

On the more theoretical side, the cloud computing allocation problem has been modeled as a stochastic allocation problem \citep{maguluri2014heavy, maguluri2012stochastic, maguluri2014scheduling}. The underlying models draw inspiration from a large body of work on stochastic network control, originating from the seminal work of
\citet{tassiulas1990stability,tassiulas1993dynamic}, followed by additional related research, e.g., on uplink and downstream scheduling in wireless networks \citep{neely2007optimal,neely2008order}. The analytical results in these papers characterize the stability region of the arrival processes under certain stochastic assumptions (e.g., i.i.d.~processes), and suggest algorithms that achieve maximal throughput. The main distinction between these works and ours is that we assume an \emph{adversarial} input, i.e., we do not make stationary distributional assumptions on the input. Another difference is that our model centers on the notion of an SLA which is known to the algorithm. This allows us to address the over-commited case (see Algorithm~\ref{alg:SMB2}), which is especially relevant in cloud settings.


Despite these modeling differences, there are some parallels worth mentioning. For example, we design algorithms with rate of work $\frac{\work_{\text{alg} } }{T}$ approaching the optimal offline rate of work; this translates to the average delay $\frac{1}{T}\sum_{t=1}^T \sum_{i=1}^N Q_t(i)$ converging to the optimal offline average delay. This result can be compared to the limiting behavior of the Markov process in many of the stochastic network models. For example, \cite{neely2008order} studies a system of $N$ users connecting to a server via ON/OFF channels. The paper presents an algorithm that ensures bounded average delay $\limsup_T \frac{1}{T} \E\left[ \sum_{t=1}^T \sum_{i=1}^N Q_{i}(t)  \right]$ for any input within the interior of the stability region; here $Q_i(t)$ is the $i$-th user's backlog (queue length) at time $t$. Much of the stochastic network literature assumes full knowledge of queue lengths, e.g.\ the LCQ policy in \citet{tassiulas1993dynamic}, although there are studies that limit the information available to the decision maker in a similar fashion to our model (see \citet{li2009energy, neely2007optimal,maguluri2014scheduling,shirani2010mimo}). 


More broadly, the general problem of online resource allocation has been studied in both stochastic and adversarial settings; we refer the reader to the books \citep{albers2003online,borodin2005online,srikant2013communication} on the topic. Our work differs from the aforementioned lines of research by combining the three following elements already present in the literature. First, as mentioned above, we digress from the stochastic arrival model to the adversarial setting and worst-case analysis. This makes our algorithm robust against unpredictable users' demands. For instance, demands in the morning could be totally different from demands in the afternoon or the morning of the next day. We are able to provide a single strategy that adapts easily to any scenario. Second, we provide simple online algorithms that perform well even under \emph{limited feedback}, a typical situation in cloud systems in which we can only determine the utilization of a resource after it has been allocated. Finally, we consider SLA satisfaction as a measure of user contentment, and seek to satisfy it up to a small error.


There is now an extensive literature devoted to the pricing of cloud computing services. In \citep{macias2011genetic} the authors study a genetic model for generating a suitable pricing function in the cloud market. In \citep{gera2011learning}, pricing is studied via a revenue management formulation to address resource provisioning decisions. See also \citep{passacantando2016service,sharma2012pricing} for more pricing models. A closely related topic is fairness in resource allocation \citep{zaharia2010delay,ghodsi2011dominant}. Although we do not directly consider pricing, work maximization could be interpreted as a way to obtain extra revenue by allocating unused resources to active users.

More recently, there has been work considering over-commitment in the cloud \citep{cohen2019overcommitment,gordon2011ginkgo,dabbagh2015efficient}, that is, selling resources beyond server capacity. One of the objectives of over-commitment is to reduce the number of servers opened in order to minimize energy consumption. In our basic model, we do not assume over-commitment, yet our algorithm can still be applied to that setting (see Algorithm~\ref{alg:SMB2}). Specifically, we obtain a normalized version of SLA satisfaction in the over-commitment setting, where the guarantees depend on how much the system is over-committed (see Section~\ref{sec:Extension}).

A fundamental tool in our design is \emph{mirror descent} algorithms \citep{ben2001lectures}. These first-order iterative algorithms have been widely used in optimization \citep{ben2001lectures}, online optimization and machine learning \citep{hazan2016introduction,mohri2018foundations,shalev2012online} to generate update policies under limited feedback. Similarly, multiplicative weights algorithms have been widely studied in optimization \citep{plotkin1995fast,arora2012multiplicative}, online convex optimization \citep{hazan2016introduction}, online competitive analysis \citep{buchbinder2009design} and learning theory \citep{freund1997decision,shalev2012online}. Our results bear some resemblance to regret analysis, where typically the benchmark is the optimal offline static policy \citep{shalev2012online,hazan2016introduction,abernethy2008optimal,bubeck2012regret,freund1997decision}; the use of a dynamic benchmark (as in our work) is scarcer in the literature, see e.g.\ \citet{hall2015online,mokhtari2016online,zinkevich2003online,zhang2017improved}.

\section{Algorithm}\label{sec:algorithm}


\subsection{Preliminaries}\label{sec:prelim}

For $N\geq 1$, we identify the set of \emph{users} with the set $[N]=\{1, \ldots, N \}$. For $0< \varepsilon< 1$, we call an allocation $\h =(h(1),\ldots, h(N) )\in [0,1]^N$ a \emph{$(1-\varepsilon)$-allocation} if $\sum_i h(i)\leq 1-\varepsilon$. We assume $N\geq 2$, that is, the system consists of at least two users.

For any $t$, we define the set of \emph{active users} at that time as the set of users with non-empty queue, and denote this set by $A_t$. Observe that $h_t(i) = w_t(i)$ for all active users. Let $B_t=[N]\setminus A_t$ be the sets of users with empty queues at time $t$; we call these users  \emph{inactive}. $A_t$ and $B_t$ correspond to the feedback given to the decision maker. Also, let $A_t^1 = \{ i\in A_t : h_t(i)< \beta(i)  \}$ be the set of active users with allocation below their SLA and $A_t^2 = A_t\setminus A_t^1$ be the set of active users receiving at least their SLAs.

We assume without loss of generality that the allocations set by the decision maker always add up to $1$. We propose an algorithm that uses a multiplicative weight strategy to boost a subset of users by multiplying their allocation by factor greater than one. Because the allocations do not sum to one after applying the update, we then project them onto the simplex using the KL-divergence metric. Furthermore, to ensure no user gets an allocation arbitrarily close to zero, we in fact project onto the \emph{truncated simplex},
\[
\Delta_\varepsilon = \{ \x =(x(1),\ldots,x(N)) :  \|\x\|_1 = 1, x(i)\geq \varepsilon/N, \forall i  \}.
\]
To fix notation, let $\pi_{\Delta_\varepsilon}(\cdot)$ be the \emph{projection function} onto $\Delta_\varepsilon$ using Kullback–Leibler divergence (KL-divergence for short), i.e., $\pi_{\Delta_\varepsilon}(\y) := \argmin_{\x \in \Delta_\varepsilon} \sum_i x(i)\log ( x(i) / y(i) )$, where $\y=(y(1),\ldots,y(N) )\in \R_{\geq 0}^N$. In Appendix \ref{sec:projection}, we show how to efficiently compute this projection. The following proposition states some basic facts that are useful in our analysis. The proof appears in Appendix \ref{sec:projection}.
\begin{proposition}\label{prop:projection}
	Let $\y \in \R_+^N$, $\x =\pi_{\Delta_\varepsilon}(\y)$, and $S=\{ i : x(i) = \frac{\varepsilon}{N}  \}$. Then:
	\begin{enumerate}[label=(\alph*)]
		\item If $y(1)\leq y(2)\leq \cdots \leq y(N)$, then $S=\{1,\ldots, k  \}$ for some $k\geq 0$.
		
		\item $x(i) = y(i) e^{\mu_i} C$, where   $ C= \left( \frac{1-\frac{\varepsilon}{N}|S|}{\sum_{j \not\in S} y(j) }\right)$, $\mu_i \geq 0$ for all $i$ and $\mu_i=0$ for $i\notin S$.
		
		\item $\x$ can be computed in $\mathcal{O}(N\log N)$ time.
		
	\end{enumerate}
\end{proposition}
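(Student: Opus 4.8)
The plan is to treat the projection as a strictly convex program and read off all three parts from the KKT conditions of its unique optimum. Assuming each $y(i)>0$ (as holds here, since iterates stay strictly positive), the objective $f(\x)=\sum_i x(i)\log(x(i)/y(i))$ is finite and strictly convex on the positive orthant, and $\Delta_\varepsilon$ is a nonempty compact polytope (nonempty because $\sum_i \varepsilon/N=\varepsilon<1$, so e.g. the uniform point lies in its relative interior). Hence a unique minimizer $\x$ exists, and since all constraints are affine the KKT conditions are both necessary and sufficient. First I would introduce a multiplier $\lambda$ for $\sum_i x(i)=1$ and multipliers $\mu_i\ge 0$ for $x(i)\ge \varepsilon/N$. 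Stationarity of the Lagrangian gives $\log(x(i)/y(i))+1+\lambda-\mu_i=0$, i.e. $x(i)=y(i)\,e^{\mu_i}C$ with $C:=e^{-1-\lambda}>0$. Complementary slackness forces $\mu_i=0$ whenever $x(i)>\varepsilon/N$, i.e. whenever $i\notin S$, which is exactly the form asserted in part (b). Substituting $x(i)=\varepsilon/N$ for $i\in S$ and $x(i)=y(i)C$ for $i\notin S$ into $\sum_i x(i)=1$ and solving for $C$ produces the stated constant $C=(1-\tfrac{\varepsilon}{N}|S|)/\sum_{j\notin S}y(j)$, completing (b).

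For part (a) I would exploit the sign information in the multipliers to turn $S$ into a threshold set. For $i\in S$ we have $\varepsilon/N=y(i)e^{\mu_i}C\ge y(i)C$ since $e^{\mu_i}\ge 1$, so $y(i)\le \theta$ where $\theta:=\varepsilon/(NC)$; for $i\notin S$ we have $x(i)=y(i)C>\varepsilon/N$, so $y(i)>\theta$. Thus membership in $S$ is governed by the single threshold $\theta$. If the coordinates are sorted so that $y(1)\le\cdots\le y(N)$, a one-line exchange argument finishes: were there $i<j$ with $j\in S$ and $i\notin S$, we would get $\theta<y(i)\le y(j)\le\theta$, a contradiction; hence $S=\{1,\ldots,k\}$ with $k=|S|$.

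For part (c) the key reduction is that (a) and (b) determine the optimum entirely from the single integer $k=|S|$, replacing a search over subsets by a search over a prefix length. I would sort $y$ in $\mathcal{O}(N\log N)$ time, precompute suffix sums $\sum_{j>k}y(j)$, and then scan $k=0,1,\ldots,N$. For each candidate, set $C_k=(1-\tfrac{\varepsilon}{N}k)/\sum_{j>k}y(j)$ and check the consistency conditions $y(k)\,C_k\le \varepsilon/N < y(k+1)\,C_k$, using the conventions $y(0)=0$ and $y(N+1)=+\infty$. Uniqueness of the minimizer guarantees that exactly one $k$ passes, and each test is $\mathcal{O}(1)$, so the post-sort work is $\mathcal{O}(N)$ and the total cost is dominated by the sort. (Since the two conditions are monotone in $k$, a binary search works equally well.)

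The main obstacle is the justification that the KKT system is valid and that its feasible solution is \emph{the} optimum rather than merely a stationary point. This rests on the objective being strictly convex over a polytope defined by affine constraints alone, where no constraint qualification beyond feasibility is required for necessity, and where strict convexity upgrades sufficiency to uniqueness. A secondary subtlety worth flagging in the write-up is that both $\theta$ and $C$ depend on the unknown active set $S$; it is precisely the prefix structure established in part (a) that decouples this circular dependence and makes the linear scan of part (c) well-defined.
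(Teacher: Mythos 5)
Your proposal is correct, and for parts (a) and (b) it is essentially the paper's own proof: the same Lagrangian stationarity $x(i)=y(i)e^{\mu_i}C$, complementary slackness, normalization $\sum_i x(i)=1$ to pin down $C$, and the threshold/ordering argument showing $S$ is a prefix of the sorted coordinates. The genuine difference is in part (c). The paper never checks dual feasibility of a candidate prefix: it calls a prefix \emph{feasible} when the induced point satisfies the primal constraints, proves that feasibility is monotone in the prefix length, and then shows --- by an explicit convex-combination computation with $f(x)=x\ln x$ --- that growing the prefix past the first feasible one cannot improve the objective, so the \emph{first} feasible prefix is the optimum, found by sorting plus binary search. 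You instead certify a candidate $k$ by the full KKT system, adding the dual-feasibility test $y(k)C_k\le\varepsilon/N$ to the primal test $y(k+1)C_k>\varepsilon/N$, and then invoke KKT sufficiency plus strict convexity: two passing indices $k_1<k_2$ would yield two optima that differ at coordinate $k_2$ (the first has $x(k_2)=y(k_2)C_{k_1}>\varepsilon/N$, the second $x(k_2)=\varepsilon/N$), contradicting uniqueness. Your route buys a shorter correctness argument with $\mathcal{O}(1)$ work per candidate and no ad hoc objective comparison, at the price of leaning on standard convex-duality machinery; the paper's route is more elementary and self-contained, and isolates the structural fact (monotone feasibility) that justifies its binary search. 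Two small points to tidy in a final write-up: exclude $k=N$ from your scan (that candidate sums to $\varepsilon<1$ and $C_N$ is ill-defined), and if you keep the binary-search remark, your claim that both conditions are monotone in $k$ needs a proof --- it is true, via the chain of equivalences $y(k+1)C_k\ge\varepsilon/N \iff C_{k+1}\ge C_k \iff y(k+1)C_{k+1}\ge\varepsilon/N$, but it is not immediate; the linear scan you propose as the main algorithm needs none of this.
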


\subsection{The Multiplicative Weight Algorithm}

We propose an algorithm that follows a multiplicative weight strategy (see Algorithm~\ref{alg:SMB1}). We describe here the basic approach given by the mirror descent algorithm. In Section~\ref{sec:Extension} we present an extension of the algorithm that in practice shows a better relation between the allocations and the ratios between the SLAs.
\begin{center}
\begin{algorithm}[H]
	\KwIn{Parameters $0<\varepsilon\leq \frac{1}{10}, 0  <\eta < \frac{1}{3} $.}
	Initialization: $\h_1$ any allocation over $\Delta_\varepsilon$ and $\lambda= \frac{\varepsilon^2}{8 N}$. \\
	\For{$t=1,\ldots,T$}
	{
		Set allocation $\h_t$.\\
		Read active and inactive users $A_t$ and $B_t$. $A_t^1=\{ i \in A_t : h_t(i)< \beta(i) \}$, $A_t^2= A_t\setminus A_t^1$. \\
		Set gain function $g_t(i)=\begin{cases}
		1 + \lambda & i \in A_t^1\\
		1  & i\in A_t^2\\
		0  & i \in B_t
		\end{cases}$.\\
		Update allocation:
		\begin{align*}
		\widehat{h}_{t+1}(i) & = h_t(i)e^{\eta g_t(i)}.\\
		\h_{t+1}  &= \pi_{\Delta_\varepsilon}(\widehat{h}_{t+1}).
		\end{align*}
	}
	\caption{Multiplicative Weight Update Algorithm} \label{alg:SMB1}
\end{algorithm}
\end{center}

Intuitively, the algorithm boosts active users at the expense of inactive ones, and boosts users slightly more if they are currently under their SLA. The algorithm update rule comes from a mirror descent approach applied to a Lagrangian relaxation of a work-maximization linear function at time $t$. More formally, under the assumption that active users have a huge queue, we aim to maximize the objective $\sum_{i\in A_t} w_t(i)$ subject to $w_t(i)\geq \beta(i)$ for $i\in A_t$. The update rule is obtained after applying a mirror descent with a KL-divergence distance generating function over the simplex to the Lagrangian relaxation of the previous problem (see \cite{boyd2004convex,ben2001lectures}). We restrict the projection to the truncated simplex so no user gets an allocation too close to $0$. We use  this update rule solely to guide the algorithm's decisions; however, the proofs of work maximization and SLA satisfaction do not follow from the standard analysis of mirror descent.


\section{Analysis}\label{sec:Analysis}

To give the analysis of the algorithm and prove Theorem~\ref{thm:main}, we prove the following stronger guarantees about Algorithm~\ref{alg:SMB1}. We compare its performance to the optimal offline dynamic strategy that uses at most a $1-4\varepsilon$ fraction of the resources at each time step.

\begin{theorem}\label{thm:main1}
	Given loads $L_1,\ldots,L_T$, for any $\varepsilon>0$ and $\eta >0$ such that $\varepsilon \leq  1/10$, Algorithm \ref{alg:SMB1} guarantees
	\[
	\work_{\h_1^*,\ldots,\h_{t}^*} - \work_{\text{\textrm{alg}},t} \leq 8\frac{N}{\varepsilon^2 \eta} \ln(N/\varepsilon),
	\]
for any time $1\leq t\leq T$, where $\h_1^*,\ldots,\h_T^*$ is the optimal offline sequence of $(1-4 \varepsilon )$-allocations and $\work_{\text{\textrm{alg}},t}=\sum_{i}\sum_{\tau=1}^t w_{\tau}(i)$ is the work done by Algorithm~\ref{alg:SMB1} until time $t$.
\end{theorem}

The first guarantee of Theorem~\ref{thm:main} regarding work maximization now follows simply from Theorem~\ref{thm:main1}. Given any offline dynamic policy $\h_1,\ldots, \h_T$ such that $\sum_{i}h_t(i)=1$, we define $\bar{\h}_t:= (1-4\varepsilon)\h_t$, which satisfies the assumption of Theorem~\ref{thm:main1}.  Now we have
\begin{eqnarray*}
   \work_{\text{alg}} &\geq & \work_{\bar{\h}_1,\ldots,\bar{\h}_T} - 8 \frac{N}{(\varepsilon/4)^2\eta} \ln (4N/\varepsilon) \\
   &\geq & (1-4\varepsilon)\cdot \work_{{\h}_1,\ldots,{\h}_T} - 2000 \frac{N}{\varepsilon^2 \eta} \ln(N/\varepsilon),
\end{eqnarray*}
where the first inequality follows from Theorem~\ref{thm:main1}. To argue the second, let $\w_1,\ldots,\w_T$ and $\overline{\w}_1,\ldots,\overline{\w}_T$ respectively denote the work performed by allocations $\h$ and $\bar{\h}$.
%
%
Then $(1-4\varepsilon)\w_1,\ldots,(1-4\varepsilon)\w_T$ are feasible work patterns that the allocations $\overline{\h}_1,\ldots,\overline{\h}_T$ could do, since in this setting we have $1-4\varepsilon$ capacity and the same workload. Therefore, $(1-\varepsilon)\work_{\h_1,\ldots,\h_T} \leq \work_{\overline{\h}_1,\ldots,\overline{\h}_T}$, because the users try to use their allocations at maximum.


Similarly, for SLA satisfaction, we prove a stronger bi-criteria result that implies the SLA guarantee in Theorem~\ref{thm:main}.

\begin{theorem}\label{thm:main2}
	Let $0< \varepsilon\leq 1/10$ and $0< \eta \leq  1/3$. Take any SLAs $\beta(1),\ldots,\beta(N)$ such that $\beta(i) \geq e^{\eta(1+\lambda)} \frac{\varepsilon}{N}$, where $\lambda = \frac{\varepsilon^2}{8N}$ and let $\tilde{p}=32 \frac{N^2}{\varepsilon^3 \eta} \ln (N/\varepsilon)$. Then, for any user $i$ and time $t\leq T-\widetilde{p}$, if we take $\h_1',\ldots,\h_T'$ to be allocations such that $h_t'(i)=\left(1-2\varepsilon\right)\beta(i)$, the work done by Algorithm \ref{alg:SMB1} for user $i$ satisfies
	\[
	\sum_{\tau=1}^{t+\tilde{p}} w_\tau(i) \geq \sum_{\tau = 1 }^t w_\tau'(i),
	\]
	where $\w_t'$ is the work done by the allocations $\h_1',\ldots,\h_T'$. Moreover, $\sum_{\tau=1}^t w_\tau(i) \geq \sum_{\tau=1}^t w_\tau'(i) - \beta(i)\tilde{p}$.
\end{theorem}

\subsection{The Offline Formulation}\label{subsec:LP}

\begin{figure}[t!]
	\centering
	{\small \begin{tabular}{lp{2.5in}|lp{2.5in}}
	$(P_\varepsilon)$ & $  \displaystyle \max \quad\sum_{i=1}^N \sum_{t=1}^T w_t(i) $ & $(D_\varepsilon)$ & $\displaystyle\min\quad  \sum_{i=1}^N \sum_{t=1}^TL_t(i) \gamma_t(i) + (1-\varepsilon)\sum_{t=1}^T \beta_t$ \\
	s.t. & \begin{eqnarray}
	\forall t,i & \sum_{s=1}^t w_s(i)  \leq  \sum_{s=1}^t L_s(i)   \label{eq:const_1} \\
	\forall t & \sum_{i=1}^{N} w_t(i)  \leq  1-\varepsilon   \label{eq:const_2} \\
	\forall t & \quad \w_t  \geq  0
	\end{eqnarray} & s.t. & \begin{eqnarray}
	\forall t,i &  \gamma_t(i) +\beta_t  \geq  1 \\
	\forall t &\gamma_t  \geq  \gamma_{t+1}   \\
	\forall t & \beta_t, \gamma_t  \geq  0
	\end{eqnarray}
	\end{tabular}}
	\caption{The primal and dual LP formulation for the maximum work problem.}\label{fig:LP}
\end{figure}

Before presenting the proof of Theorem~\ref{thm:main1}, we state the offline LP formulation of the maximum work problem for $(1-\varepsilon)$-allocations. We denote by $\w_t=(w_t(1),\ldots, w_t(N) )$ the work done for each user at time $t$. Given loads $L_1,\ldots,L_T$, the offline formulation and its dual LP are given in Figure \ref{fig:LP}. As written, the dual LP includes a change of variable; see Appendix \ref{sec:Omittedproofs} for details.
Constraints \eqref{eq:const_1} state that the work done for any user up to time $t$ by the allocation cannot exceed the user's loads up to that time. Constraints \eqref{eq:const_2} limit the work performed at time $t$ to at most a $1-\varepsilon$ fraction of the resource.
The LP $(D_\varepsilon)$ will be of special importance in the analysis. Using our algorithm, we will construct a dual feasible solution.

Observe that $(P_\varepsilon)$ is feasible and bounded since the feasible region is a non-empty polytope. Let $v_{P_\varepsilon}$ be the optimal value of $(P_\varepsilon)$. The following proposition gives a simple characterization of $v_{P_{\varepsilon}}$; the proof appears in Appendix~\ref{sec:Omittedproofs}.
\begin{proposition}\label{prop:optimalLP}
	$v_{P_\varepsilon} = \min_{0\leq t\leq T} \left(\sum_{s=1}^t \sum_i L_s(i) + (1-\varepsilon)(T-t)\right)$.
\end{proposition}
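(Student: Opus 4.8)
The plan is to show that the claimed quantity
\[
M := \min_{0\le t\le T}\Bigl(\textstyle\sum_{s=1}^t\sum_i L_s(i) + (1-\varepsilon)(T-t)\Bigr)
\]
is simultaneously an upper and a lower bound for $v_{P_\varepsilon}$. The upper bound $v_{P_\varepsilon}\le M$ is the easy direction. Fix any feasible $\w$ and any index $0\le t\le T$, and split the objective across the cut at $t$:
\[
\sum_{s=1}^T\sum_i w_s(i) = \sum_i\Bigl(\sum_{s=1}^t w_s(i)\Bigr) + \sum_{s=t+1}^T\Bigl(\sum_i w_s(i)\Bigr).
\]
The first group is bounded by summing constraint~\eqref{eq:const_1} (at time $t$) over all users, giving $\sum_{s\le t}\sum_i L_s(i)$; the second is bounded by applying constraint~\eqref{eq:const_2} at each of the times $t+1,\dots,T$, giving $(1-\varepsilon)(T-t)$. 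Hence the objective is at most $\sum_{s\le t}\sum_i L_s(i)+(1-\varepsilon)(T-t)$ for every $t$, and minimizing over $t$ yields $v_{P_\varepsilon}\le M$.

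For the matching lower bound I would exhibit a feasible solution of value exactly $M$. First I pass to aggregate variables $W_s:=\sum_i w_s(i)$ and loads $\mathcal{L}_s:=\sum_i L_s(i)$: any solution feasible for the aggregate program ($W_s\le 1-\varepsilon$ and $\sum_{s\le t}W_s\le\sum_{s\le t}\mathcal{L}_s$ for all $t$) can be disaggregated into a per-user feasible $\w$ of the same total value, by sweeping $s=1,\dots,T$ and splitting each $W_s$ among users while keeping every user's cumulative work below its cumulative arrived load. This is always possible because the aggregate prefix constraint guarantees the available backlog $\sum_{s'\le s}\mathcal{L}_{s'}-\sum_{s'<s}W_{s'}$ is at least $W_s$. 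I then run the water-filling greedy $W_s=\min\{1-\varepsilon,\ \sum_{s'\le s}\mathcal{L}_{s'}-\sum_{s'<s}W_{s'}\}$, which is aggregate-feasible by construction. Letting $t^*$ be the last time the leftover backlog $U_s:=\sum_{s'\le s}\mathcal{L}_{s'}-\sum_{s'\le s}W_{s'}$ vanishes, every later step is capacity-capped (indeed $U_s>0$ forces $W_s=1-\varepsilon$, since a backlog-limited step would empty the backlog), so the total work equals $\sum_{s\le t^*}\mathcal{L}_s+(1-\varepsilon)(T-t^*)\ge M$. Combined with the upper bound this forces equality, and $t^*$ is the minimizer.

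The main obstacle is the lower bound, and within it the disaggregation step: one must be sure the aggregate optimum is attainable by a genuine per-user allocation, which is exactly where the per-user structure of \eqref{eq:const_1} could in principle cost something; the prefix-sum bookkeeping above shows it does not. An alternative that avoids disaggregating by hand is the max-flow/min-cut route the authors mention: build a time-indexed network with source arcs $\es\to\ell_s$ of capacity $\mathcal{L}_s$ into load nodes $\ell_s$, infinite-capacity arcs $\ell_{s'}\to r_s$ for $s\ge s'$ (work done at time $s$ on load arrived at $s'$), and sink arcs $r_s\to\te$ of capacity $1-\varepsilon$ out of slot nodes $r_s$. The finite cuts are exactly the threshold cuts of cost $\sum_{s\le t}\mathcal{L}_s+(1-\varepsilon)(T-t)$, so min-cut $=M$ and max-flow $=v_{P_\varepsilon}$ deliver both inequalities at once; one still invokes the aggregation equivalence to identify the flow value with the LP value.
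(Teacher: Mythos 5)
Your proof is correct, and it shares the paper's two-sided ``sandwich'' skeleton: show the threshold quantity is an upper bound via the constraints, then exhibit a greedy primal solution attaining it. The differences are in how each half is executed. For the upper bound, the paper derives the dual LP $(D_\varepsilon)$ and plugs in the explicit threshold dual solution ($\gamma_t=1,\beta_t=0$ up to the minimizing $t^\star$, then $\gamma_t=0,\beta_t=1$ afterwards), invoking weak duality; you instead sum constraint~\eqref{eq:const_1} at time $t$ over users and constraint~\eqref{eq:const_2} over times $t+1,\dots,T$ directly. These yield the same inequality --- your summation is precisely what that dual solution certifies --- but your version is more elementary, whereas the paper's dual phrasing has the side benefit of exhibiting the dual variables that the primal-dual analysis of Theorem~\ref{thm:main1} reuses (the same threshold structure appears there, and the remark after the paper's proof notes that the greedy computes optimal duals). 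For the lower bound, the paper runs a per-user greedy (complete user $1$'s pending work, then user $2$'s, and so on, capped at $1-\varepsilon$ per step), which is feasible for $(P_\varepsilon)$ as written and needs no further justification; you instead pass to the aggregate single-commodity program, run water-filling there, and then prove a disaggregation lemma to recover per-user feasibility. Your route carries that extra proof obligation --- which you discharge correctly via the backlog/prefix-sum invariant, since the aggregate prefix constraint at step $s$ says exactly that $W_s$ does not exceed the sum of individual backlogs --- but it isolates cleanly the only place where the per-user structure of \eqref{eq:const_1} could have cost something, and it makes rigorous the max-flow/min-cut viewpoint that the paper mentions only in passing before deferring to the LP argument.
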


\subsection{Work Maximization}\label{sec:throughput}

In this section we prove Theorem~\ref{thm:main1}. 
Our first Lemma characterizes the implications of the update rule. The proof follows from a careful analysis of the dynamics using the KL-divergence and appears in Appendix~\ref{sec:Omittedproofs}.

The first result of the lemma shows the behavior of active users' allocations when the system is under-utilized ($\leq 1-\varepsilon$). In this case, all the active users receive a multiplicative boost in their allocation. The second result shows a more general behavior (see also Lemma~\ref{lem:monotonicity2}). In this case, active users with allocation below their SLA do not decrease their allocations while the other active users might decrease their allocation, but in this case, the multiplicative penalization will be less severe.

\begin{lemma}\label{lem:monotonicity}
Let $c=\frac{\varepsilon \eta}{4 N}$. Then Algorithm~\ref{alg:SMB1} satisfies the following:
\begin{enumerate}
\item 	Suppose $\sum_{i \in A_t} h_t(i) \leq  1-\varepsilon$. If $i\in A_t$, then
	$
	h_{t+1}(i) \geq h_{t}(i)(1+c).
	$

\item In general, Algorithm~\ref{alg:SMB1} satisfies $h_{t+1}(i) \geq h_t(i)$ for $i\in A_t^1$ and $h_{t+1}(i)\geq h_t(i) (1-\varepsilon c)$ for $i\in A_t^2$.
\end{enumerate}
\end{lemma}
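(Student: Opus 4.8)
The plan is to reduce both parts to a single structural consequence of the entropic projection and then to two quantitative lower bounds on its normalizing constant. Writing $\widehat h=\widehat{h}_{t+1}$ and $S=\{i:h_{t+1}(i)=\varepsilon/N\}$, Proposition~\ref{prop:projection}(b) says that every coordinate satisfies $h_{t+1}(i)=\max\{\widehat h(i)\,C,\ \varepsilon/N\}$ for a single scalar $C$, and that $C$ is exactly the solution of $f(C)=1$ for the nondecreasing water-filling function $f(C):=\sum_i\max\{\widehat h(i)C,\varepsilon/N\}$. Since $\widehat h(i)=h_t(i)e^{\eta g_t(i)}$ with $g_t(i)\in\{0,1,1+\lambda\}$, and since every allocation lies in $\Delta_\varepsilon$ so that $h_t(i)\ge\varepsilon/N$, the terms $\widehat h(i)C_0$ are easy to compare coordinatewise against $h_t(i)$ and against the floor. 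The whole proof then rests on lower-bounding $C$: because $f$ is nondecreasing and equals $1$ at $C$, it suffices to verify $f(C_0)\le 1$ to certify $C\ge C_0$.

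For Part 2 I would first establish the coarse bound $C\ge e^{-\eta(1+\lambda)}$. This is immediate from the criterion above: for every $i$ we have $\widehat h(i)e^{-\eta(1+\lambda)}=h_t(i)e^{\eta(g_t(i)-1-\lambda)}\le h_t(i)$ since $g_t(i)\le 1+\lambda$, so each term of $f(e^{-\eta(1+\lambda)})$ is at most $\max\{h_t(i),\varepsilon/N\}=h_t(i)$ and hence $f(e^{-\eta(1+\lambda)})\le\sum_i h_t(i)=1$. With this in hand I split on whether $i\in S$. If $i\notin S$ then $h_{t+1}(i)=h_t(i)e^{\eta g_t(i)}C$; substituting $g_t(i)=1+\lambda$ for $i\in A_t^1$ gives $h_{t+1}(i)\ge h_t(i)$, and $g_t(i)=1$ for $i\in A_t^2$ gives $h_{t+1}(i)\ge h_t(i)e^{-\eta\lambda}\ge h_t(i)(1-\varepsilon c)$, where the last step uses $\eta\lambda=\tfrac{\varepsilon^2\eta}{8N}\le\varepsilon c$. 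If $i\in S$ then $h_{t+1}(i)=\varepsilon/N$, and the floor condition $\widehat h(i)C\le\varepsilon/N$ together with $C\ge e^{-\eta(1+\lambda)}$ forces $h_t(i)\le\varepsilon/N$ for $i\in A_t^1$ (so equality $h_{t+1}(i)=h_t(i)$ holds) and $h_t(i)\le\tfrac\varepsilon N e^{\eta\lambda}$ for $i\in A_t^2$ (so $h_{t+1}(i)=\tfrac\varepsilon N\ge h_t(i)e^{-\eta\lambda}\ge h_t(i)(1-\varepsilon c)$).

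For Part 1 I would sharpen the estimate to $C\ge(1+c)e^{-\eta}$, and this is where the hypothesis $a:=\sum_{i\in A_t}h_t(i)\le 1-\varepsilon$ enters. Checking $f\big((1+c)e^{-\eta}\big)\le 1$, I split the sum into active and inactive users. At $C_0=(1+c)e^{-\eta}$ the active terms are never floored, since $\widehat h(i)C_0=h_t(i)e^{\eta(g_t(i)-1)}(1+c)\ge\tfrac\varepsilon N(1+c)>\tfrac\varepsilon N$, so they contribute at most $(1+c)e^{\eta\lambda}a$; each inactive term is bounded via $\max\{x,\tfrac\varepsilon N\}\le x+(\tfrac\varepsilon N-x)^+$ by $(1+c)e^{-\eta}h_t(j)+\tfrac\varepsilon N\big(1-(1+c)e^{-\eta}\big)$. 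Writing $b=1-a$, the inequality $f(C_0)\le 1$ reduces to
$a\big[(1+c)e^{\eta\lambda}-1\big]\le\big[1-(1+c)e^{-\eta}\big]\big(b-\tfrac\varepsilon N|B_t|\big)$.
The bracket $1-(1+c)e^{-\eta}$ is positive, and since under-utilization gives $b\ge\varepsilon$ while the existence of an active user gives $|B_t|\le N-1$, one has $b-\tfrac\varepsilon N|B_t|\ge\tfrac\varepsilon N$; the remaining inequality then follows from the smallness of $\lambda$ and $c$ relative to $\tfrac{\eta\varepsilon}{N}$. Once $C\ge(1+c)e^{-\eta}$ is established, every active user obeys $h_{t+1}(i)=h_t(i)e^{\eta g_t(i)}C\ge h_t(i)e^\eta C\ge h_t(i)(1+c)$.

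The main obstacle is precisely the truncated-simplex projection: the floor $\varepsilon/N$ makes $C$ depend on the a priori unknown floored set $S$, so one cannot simply renormalize by the total mass. Recasting the projection through the monotone function $f$ and certifying the two lower bounds on $C$ by the one-sided test $f(C_0)\le 1$ is what removes this difficulty. The delicate case is Part~1, where the bound on $C$ is nearly tight: it survives only because $b-\tfrac\varepsilon N|B_t|\ge\tfrac\varepsilon N$ (a consequence of under-utilization together with the presence of at least one active user) and because the parameters were chosen of order $\lambda\sim\varepsilon^2\eta/N$ and $c\sim\varepsilon\eta/N$, keeping the positive active contribution below the negative inactive slack.
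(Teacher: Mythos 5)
Your proof is correct, but it takes a genuinely different route from the paper's. The paper works directly with the explicit normalization constant $C=\bigl(1-\tfrac{\varepsilon}{N}|S_{t+1}|\bigr)/\sum_{j\notin S_{t+1}}\widehat{h}_{t+1}(j)$ from Proposition~\ref{prop:projection}(b) and, for Part~1, splits into two cases according to whether some inactive user escapes the floor ($B_t\cap \overline{S}_{t+1}\neq\emptyset$) or all inactive users are floored ($B_t\subseteq S_{t+1}$): in the first case the slack comes from the $e^{-\eta}$ penalty on that surviving inactive user, and only in the second case does the under-utilization hypothesis $\sum_{i\in A_t}h_t(i)\leq 1-\varepsilon$ enter. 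You instead recast the projection as $h_{t+1}(i)=\max\{\widehat{h}_{t+1}(i)\,C,\varepsilon/N\}$ with $C$ pinned down by the water-filling equation $f(C)=1$, and certify lower bounds on $C$ by the one-sided test $f(C_0)\leq 1$. This eliminates the case analysis on the unknown floored set $S_{t+1}$ entirely, uses the under-utilization hypothesis uniformly (via $b-\tfrac{\varepsilon}{N}|B_t|\geq \tfrac{\varepsilon}{N}$), and handles both parts of the lemma with a single mechanism; it also supplies an honest argument for the monotonicity of $h_t(i)$ on $A_t^1$ (including the floored case), which the paper dismisses as ``easy to see.'' What the paper's route buys in exchange is that each step is an elementary chain of inequalities on the explicit formula, with no auxiliary function or certification principle needed.

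Two small points you should patch when writing this up. First, for a merely nondecreasing $f$ the inference ``$f(C_0)\leq 1=f(C)$ implies $C_0\leq C$'' can fail on flat pieces; here it is valid because $f$ is strictly increasing at $C$ (not all coordinates can be floored, since the all-floored value of $f$ is $\varepsilon<1$), and this one-line observation should be stated. Second, the final inequality in Part~1 is asserted rather than verified; it does hold with room to spare, since the left side is at most $c+2\eta\lambda(1+c)\leq \tfrac{\varepsilon\eta}{4N}(1+\varepsilon)+o\!\left(\tfrac{\varepsilon\eta}{N}\right)$ while the right side is at least $\left(\eta-\tfrac{\eta^2}{2}-c\right)\tfrac{\varepsilon}{N}\geq \tfrac{4}{5}\cdot\tfrac{\varepsilon\eta}{N}$ under $\varepsilon\leq\tfrac1{10}$, $\eta\leq\tfrac13$, but the computation should appear.
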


\begin{proof}[Proof of Theorem~\ref{thm:main1}]
Given loads $L_1,\ldots,L_T\in \R_+^N$, consider the following $\{0,1\}$-matrix $M$ of dimension $N\times T$ that encodes the information about the status of queues obtained while running Algorithm~\ref{alg:SMB1}
\[
M_{i,t} = \begin{cases}
0 &i\text{'s queue is empty at }t, Q_t(i)=0,\\
1 & i\text{'s queue is not empty at }t, Q_t(i)>0.
\end{cases}
\]

Let $\widetilde{s}=\frac{\ln (N/\varepsilon)}{\varepsilon c}$, where $c$ is defined in Lemma~\ref{lem:monotonicity}. Now, pick $s^\star$ to be the maximum non-negative integer $s$ (including $0$) such that
\begin{equation}\label{eq:switch2}
\sum_{t=1}^{s} \sum_{i} L_t(i) \leq \sum_{t=1}^{s+\widetilde{s}} \sum_{i} w_t(i)
\end{equation}

\begin{claim}\label{cl:gooduser1}
	Consider any block of time $[r,r+\tilde{s}]$ where $r> s^{\star}$; then there exists a user $i$ such that $M_{i,r'}=1$ for all $r'\in [r,r+\tilde{s}]$.
\end{claim}
\begin{proof}
	Suppose not. Then we claim that $s=r$ satisfies condition \eqref{eq:switch2}. Consider any user $i$ and let $r'_i\in [r,r+\tilde{s}]$ be such that $M_{i,r'_i}=0$. Then work done by the user $i$ up to time $r+\tilde{s}$ is at least
	$$\sum_{t=1}^{r+\tilde{s}}  w_t(i)\geq \sum_{t=1}^{r'_i}  w_t(i)= \sum_{t=1}^{r_i'}  L_t(i)\geq \sum_{t=1}^{r}  L_t(i).$$
	Now summing over all $i$, we get the desired contradiction.
\end{proof}

We now prove the following claim that shows that the algorithm ensures that, on average, the total resource utilization after $s^\star$ is close to $1-4\varepsilon$. The proof of the Claim relies on Lemma~\ref{lem:monotonicity} and appears in Appendix~\ref{sec:Omittedproofs}.
\begin{claim}\label{cl:gooduser2}
	Let $B=[r,r+\tilde{s})$ with $r>s^\star$ be a consecutive block of $\tilde{s}$ timesteps and let $B'=\{  t \in B :  \sum_{j\in A_t} h_t(j) \leq 1- \varepsilon  \}$ be the time steps in $B$ with low utilization. Then $|B'|\leq 4\varepsilon \tilde{s}$ and therefore,  $\sum_{t=s^\star +1}^T \sum_i w_t(i)  \geq ( 1 - 4 \varepsilon )(T-s^\star) -\widetilde{s}.$
\end{claim}

Now, consider the following feasible dual solution of $(D_{4\varepsilon})$: $\gamma_t(i)=1,\beta_t=0$ for all users $i$ and $t=1,\ldots,s^\star$, and $\gamma_t(i)=0, \beta_t=1$ for all users $i$ and $t=s^\star+1,\ldots,T$. Observe that $\sum_{t=1}^T \beta_t = T-s^\star$. For optimal $(1-4 \varepsilon)$-allocations $\h_1^*\ldots,\h_T^*$ we obtain
	\begin{align*}
	\work_{\h_1^*,\ldots,\h_T^*} & \leq v_\text{dual}(\gamma_1,\ldots,\gamma_T,\beta_1,\ldots\beta_T) \tag{weak duality} \\
	&= \sum_{t=1}^{s^\star}\sum_i L_t(i) + (1-4 \varepsilon)(T-s^\star)\\
	& \leq \sum_{t=1}^{s^\star + \widetilde{s}} \sum_i w_t(i) + (1-4 \varepsilon)(T-s^\star) \tag{choice of $s^\star$}\\
	& \leq \sum_{t=1}^{s^{\star}}  \sum_i w_t(i) + \widetilde{s} + \sum_{t=s^{\star}+1}^{T}\sum_i w_t(i) + \widetilde{s} \tag{Claim \ref{cl:gooduser2}}\\
	& = \work_{\text{alg}} + 8\frac{N}{\varepsilon^2 \eta} \ln(N/\varepsilon).
	\end{align*}
where we have used $\sum_{t=s^\star+1}^{s^\star+\widetilde{s}} w_t(i) \leq \widetilde{s}$ and the definition of $\widetilde{s}$. 
\end{proof}


\subsection{SLA Satisfaction}\label{sec:SLAsat}

In this section, we prove Theorem~\ref{thm:main2}. Recall that $\lambda = \frac{\varepsilon^2}{8N}$ and $A_t^1=\{ i\in A_t: h_t(i)<\beta(i)  \}$ is the set of active users receiving less than their SLAs and $A_t^2=A_t\setminus A_t^1$ is the set of active user receiving at least their SLA. Analogous to Lemma~\ref{lem:monotonicity}, we have the following lemma, whose proof appears in Appendix~\ref{sec:Omittedproofs}.

\begin{lemma}\label{lem:monotonicity2}
	Assume $\varepsilon \leq \frac{1}{10}$, $\eta\leq \frac{1}{3}$ and $\beta(i)\geq 2\frac{\varepsilon}{N}$ for all users. Then for any $i\in A_t^1$, Algorithm~\ref{alg:SMB1} guarantees $h_{t+1}(i)\geq h_t(i)(1+c')$, where $c'= \frac{\varepsilon\eta \lambda}{2N}$.
\end{lemma}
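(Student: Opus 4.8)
The plan is to follow user $i\in A_t^1$ through the two stages of the update—the multiplicative boost $e^{\eta g_t(i)}$ and the KL\nobreakdash-projection onto $\Delta_\varepsilon$—and show the gain factor $e^{\eta(1+\lambda)}$ (the boost assigned to $A_t^1$) is only mildly eroded by the projection. Write $\widehat h_{t+1}(j)=h_t(j)e^{\eta g_t(j)}$, so the boost is $e^{\eta(1+\lambda)}$ on $A_t^1$, $e^{\eta}$ on $A_t^2$, and $1$ on $B_t$. By Proposition~\ref{prop:projection}(b), letting $S=\{j:h_{t+1}(j)=\varepsilon/N\}$ and $C=(1-\tfrac{\varepsilon}{N}|S|)/\sum_{j\notin S}\widehat h_{t+1}(j)$, every \emph{unclipped} coordinate obeys $h_{t+1}(j)=\widehat h_{t+1}(j)\,C$. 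Thus it suffices to prove (i) $i\notin S$ and (ii) $e^{\eta(1+\lambda)}C\ge 1+c'$, since then $h_{t+1}(i)=h_t(i)e^{\eta(1+\lambda)}C\ge h_t(i)(1+c')$. Moreover (ii) implies (i): it forces $\widehat h_{t+1}(i)C=h_t(i)e^{\eta(1+\lambda)}C\ge \tfrac{\varepsilon}{N}(1+c')>\tfrac{\varepsilon}{N}$, which by Proposition~\ref{prop:projection}(b) (where $i\in S$ requires $\widehat h_{t+1}(i)C\le\varepsilon/N$) rules out $i\in S$. So everything reduces to a lower bound on $C$.

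To bound $C$ I would upper bound the denominator. Pulling out the maximal boost and writing $Q_2,Q_3$ for the $h_t$\nobreakdash-masses of the \emph{unclipped} users in $A_t^2,B_t$, one gets $\sum_{j\notin S}\widehat h_{t+1}(j)=e^{\eta(1+\lambda)}\sum_{j\notin S}h_t(j)-E$ with $E=(e^{\eta(1+\lambda)}-e^{\eta})Q_2+(e^{\eta(1+\lambda)}-1)Q_3\ge 0$. Since each clipped user had $h_t\ge\varepsilon/N$, we have $\sum_{j\notin S}h_t(j)\le 1-\tfrac{\varepsilon}{N}|S|$, which yields $1/C\le e^{\eta(1+\lambda)}-E$ and hence $e^{\eta(1+\lambda)}C\ge 1/(1-E/e^{\eta(1+\lambda)})$. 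A short computation then shows (ii) follows once $E\ge \varepsilon\eta\lambda/N$, where I use $e^{\eta(1+\lambda)}\le 2$ (from $\eta\le 1/3$ and $\lambda$ tiny) to absorb the factor $1/2$ hidden in $c'=\tfrac{\varepsilon\eta\lambda}{2N}$.

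The crux is therefore to exhibit at least one low\nobreakdash-boost user (in $A_t^2$ or $B_t$) that escapes $S$: any single such $j$ gives $Q_2\ge\varepsilon/N$ or $Q_3\ge\varepsilon/N$, and since $e^{\eta(1+\lambda)}-e^{\eta}\ge\eta\lambda$ and $e^{\eta(1+\lambda)}-1\ge\eta\ge\eta\lambda$, either way $E\ge\varepsilon\eta\lambda/N$. The first tool is a crude universal bound $C\ge(1-\varepsilon)/e^{\eta(1+\lambda)}>1/2$ (the numerator is $\ge 1-\varepsilon$ because $|S|\le N$, the denominator is $\le e^{\eta(1+\lambda)}$), valid under $\varepsilon\le\tfrac1{10},\eta\le\tfrac13$. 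This already shows \emph{no} $A_t^2$ user is clipped: a clipped $j\in A_t^2$ would need $h_t(j)e^{\eta}C\le\varepsilon/N$, but $h_t(j)\ge\beta(j)\ge 2\varepsilon/N$ forces $C\le e^{-\eta}/2<1/2$, contradicting $C>1/2$. Hence whenever $A_t^2\neq\emptyset$ we are done.

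The remaining and most delicate case—and the main obstacle—is $A_t^2=\emptyset$, i.e.\ every active user sits below its SLA; here $B_t\neq\emptyset$ (else $1=\sum_i h_t(i)<\sum_i\beta(i)\le 1$), and I must show not all of $B_t$ is clipped. A clipped $j\in B_t$ needs $h_t(j)C\le\varepsilon/N$, so $h_t(j)<2\varepsilon/N$ using $C>1/2$; if this held for all of $B_t$ then $\sum_{B_t}h_t<2\tfrac{\varepsilon}{N}|B_t|$, whence $P_1:=\sum_{A_t^1}h_t>1-2\tfrac{\varepsilon}{N}|B_t|$. But $A_t^2=\emptyset$ gives $P_1<\sum_{A_t^1}\beta=\sum_i\beta-\sum_{B_t}\beta\le 1-2\tfrac{\varepsilon}{N}|B_t|$, a contradiction. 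This is exactly where the hypothesis $\beta(i)\ge 2\varepsilon/N$ is decisive (it lower-bounds $\sum_{B_t}\beta$). Hence some inactive user survives $S$, giving $E\ge\varepsilon\eta\lambda/N$ and therefore (ii), which completes the proof. The whole difficulty lies in this interaction between the projection's clipping and the tiny $\lambda$\nobreakdash-gap, which is why I foreground the universal bound $C>1/2$ (to control which coordinates get clipped) and the $\beta\ge 2\varepsilon/N$ accounting (to guarantee surviving low\nobreakdash-boost mass).
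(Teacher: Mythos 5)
Your proof is correct and takes essentially the same route as the paper's: both arguments show that any coordinate clipped to $\varepsilon/N$ by the projection must have had $h_t < 2\varepsilon/N \leq \beta$, then use a mass-counting argument based on $\beta(i)\geq 2\varepsilon/N$ and $\sum_i \beta(i)\leq 1$ to exhibit an unclipped user outside $A_t^1$, whose smaller boost ($e^{\eta}$ or $1$ rather than $e^{\eta(1+\lambda)}$) produces the $\Theta(\varepsilon\eta\lambda/N)$ gap in the normalizing constant that yields $h_{t+1}(i)\geq h_t(i)(1+c')$. Your case split on $A_t^2\neq\emptyset$ versus $A_t^2=\emptyset$ is only a minor repackaging of the paper's single pigeonhole step (the paper directly finds a user with $h_t(j)\geq \beta(j)$, who is automatically unclipped and outside $A_t^1$), and your bookkeeping via $E$ and $C$ matches the paper's denominator estimate up to the same factor-of-two slack.
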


\begin{proof}[Proof of Theorem \ref{thm:main2}]
	Let $\widetilde{p}= \left\lceil\frac{\ln(N/\varepsilon)}{\ln(1+c')} \right\rceil$, where $c'$ is defined in Lemma~\ref{lem:monotonicity2}. Now, we proceed by induction on $t$ to prove that $\sum_{\tau=1}^{t+\widetilde{p}} w_\tau(i) \geq \sum_{\tau=1}^t w_\tau'(i)$, where $\w_t'$ is the work done by the allocations $\h_1',\ldots,\h_T'$. Clearly, the case $t=0$ is direct.
	
	Take $t\geq 1$ and suppose the result is true for $t-1$. If there exists $r\in[t, t+\widetilde{p}]$ such that user $i$'s queue is empty, then
	\[
	\sum_{\tau=1}^{t+\widetilde{p}} w_\tau(i) \geq \sum_{\tau=1}^{r} w_{\tau}(i) = \sum_{\tau=1}^{r}  L_\tau(i) \geq \sum_{\tau=1}^t w_\tau'(i).
	\]
	Therefore, assume that for all $\tau \in [t,t+\widetilde{p}]$ we have that user $i$'s queue is non-empty. By the induction hypothesis
	\[
	\sum_{\tau=1}^{t-1+\widetilde{p}} w_t(i) \geq  \sum_{\tau=1}^{t-1} w_\tau'(i).
	\]
	In order to complete the proof, we need to prove that  $w_{t+\widetilde{p}}(i) \geq w_t'(i)$. We proceed as follows. Suppose that for all $\tau\geq t$ we have $ w_\tau(i)< (1-\varepsilon)\beta(i)$. By Lemma~\ref{lem:monotonicity2}, at each time $\tau \in [t,t+\widetilde{p}]$ the allocation of user $i$ increases multiplicatively by a rate $(1+c')$. Therefore,
	\[
	w_{t+\widetilde{p}}(i) \geq \frac{\varepsilon}{N}(1+c')^{\widetilde{p}} \geq 1 \geq \beta(i),
	\]
	a contradiction. From the previous analysis we obtain the existence of $\tau^\star\in [t,t+\widetilde{p}]$ such that $w_{\tau^\star}(i)\geq (1-\varepsilon) \beta(i)$. By using Lemmas \ref{lem:monotonicity} and \ref{lem:monotonicity2}, we can show that the allocation $h_\tau(i)$ will never go below $(1-\varepsilon c)(1-\varepsilon)\beta(i)$ for all $\tau\geq \tau^\star$. In particular $w_{t+\widetilde{p}}(i)\geq (1-\varepsilon c)(1-\varepsilon)\beta(i) \geq (1-2\varepsilon)\beta(i) \geq w_t'(i)$.
\end{proof}

%
%

\subsection{Extension to Proportionality and Over-commitment}\label{sec:Extension}

In the previous subsections, we have introduced the first version of the multiplicative weights algorithm. We explained how we deduced our algorithm using mirror descent and proved its theoretical guarantees. Even though Algorithm~\ref{alg:SMB1} guarantees individual SLA satisfaction, this simple policy can lead to undesirable results that do not respect the ratio between allocations. If one user has an SLA twice the size of another, it would be reasonable for the former to expect allocations at least twice as big as the latter's if both are consistently busy. Likewise, the second user would expect allocations no less than half of the first user's. In other words, both users should expect shares that respect the ratio between their SLAs.

To illustrate this unsatisfactory behavior in Algorithm~\ref{alg:SMB1}, we run it with three users having SLAs $\beta(1)=0.5$, $\beta(2)=0.3$ and $\beta(3)=0.2$. We set $\eta\leq 1/3$ and $\varepsilon\leq 1/10$. For simplicity, the initial allocation will be uniform. In our example, user $1$ is always idle. User $2$ consistently demands $1$ unit of resource. User $3$ begins idle and remains so until user $2$'s allocation reaches $1-\varepsilon$. This takes roughly $\frac{1}{\eta \varepsilon}$ time steps; call this time $t_0$. Starting at time $t_0$, user $3$ demands unit loads every time step for the rest of the horizon. Initially, the allocations are uniformly $1/3$ for everyone. Between time $1$ and $t_0$, user $2$'s allocation increases until it hits $1-\varepsilon$, since they are the only active user. After $t_0$, user $3$ becomes active, and has an allocation below their SLA. Therefore, the algorithm redistributes allocation from user $2$ to $3$ until user $3$'s allocation hits $0.2$. After this, allocations remain stable at approximately $h_t(1)= \frac{\varepsilon}{3}$, $h_t(2)= 0.8- \frac{\varepsilon}{3}$ and $h_t(3)=0.2$. User $2$ receives about $4$ times the allocation of user $3$ if $\varepsilon$ is small enough. However, a better allocation for user $2$ and $3$ is $\frac{\beta(2)}{\beta(2)+\beta(3)} = \frac{3}{5}$ and $\frac{\beta(3)}{\beta(2) + \beta(3)} = \frac{2}{5}$ respectively. These allocations reflect the ratio $\frac{\beta(2)}{\beta(3)}$ between active users.

Given this, we propose a slight modification of Algorithm~\ref{alg:SMB1}, shown in Algorithm~\ref{alg:SMB2}. As before, the plan is always to benefit active users. However, this time, we boost active users slightly more if they fall behind their ``proportional SLA'' among active users. Intuitively, if there are $n<N$ active users for a long period of time, the allocation of these active users should converge to their proportional share.
\begin{center}
	\begin{algorithm}[H]
		\KwIn{Parameters $0<\varepsilon\leq \frac{1}{10}, 0  <\eta < \frac{1}{3} $.}
		Initialization: $\h_1$ any allocation over $\Delta_\varepsilon$ and $\lambda= \frac{\varepsilon^2}{8 N}$. \\
		\For{$t=1,\ldots,T$}
		{
			Set allocation $\h_t$.\\
			Read active and inactive users $A_t$ and $B_t$. $A_t^1=\left\{ i \in A_t : h_t(i)< (1-\varepsilon)\frac{\beta(i)}{\sum_{j\in A_t} \beta(j)} \right\}$, $A_t^2= A_t\setminus A_t^1$. \\
			Set gain function $g_t(i)=\begin{cases}
			1 + \lambda & i \in A_t^1\\
			1  & i\in A_t^2\\
			0  & i \in B_t
			\end{cases}$.\\
			Update allocation:
			$\widehat{h}_{t+1}(i)  = h_t(i)e^{\eta g_t(i)}, \forall i $ and $\h_{t+1} = \pi_{\Delta_\varepsilon}(\widehat{h}_{t+1})$
		}
		\caption{Extended Multiplicative Weight Update Algorithm} \label{alg:SMB2}
	\end{algorithm}
\end{center}

For technical reasons, the set $A_t^1$, the active users with allocation below their proportional share among active users at time $t$, has to be defined as $\left\{ i \in A_t : h_t(i)< (1-\varepsilon)\frac{\beta(i)}{\sum_{j\in A_t} \beta(j)} \right\}$. The reason behind this choice is to ensure that if $A_t^1\neq \emptyset$ and the resource is nearly fully utilized, i.e., $\sum_{i\in A_t} h_t(i)\geq 1-\varepsilon$, then there is a different active user $j\neq i$ from which we can move allocation to $i$. This is fundamental in the proof of Theorem~\ref{thm:SLAextension} below.

In terms of work maximization and SLA satisfaction, Algorithm~\ref{alg:SMB2} provides exactly the same guarantees as Algorithm~\ref{alg:SMB1}.

\begin{theorem}\label{thm:mainextension}
	Given loads $L_1,\ldots,L_T$, for any $\varepsilon>0$ and $\eta >0$ such that $\varepsilon \leq 1/10$, Algorithm \ref{alg:SMB2} guarantees
	\[
	\work_{\h_1^*,\ldots,\h_t^*} - \work_{\text{alg},t} \leq 8\frac{N}{\varepsilon^2 \eta} \ln(N/\varepsilon),
	\]
for any time $1\leq t\leq T$, where $\h_1^*,\ldots,\h_T^*$ is an optimal offline sequence of $(1-4 \varepsilon )$-allocations and $\work_{\text{alg},t}=\sum_{i}\sum_{\tau=1}^t w_{\tau}(i)$ is the overall work done by Algorithm~\ref{alg:SMB2} until time $t$.
\end{theorem}

The proof of Theorem~\ref{thm:mainextension} is exactly the same as the proof of Theorem~\ref{thm:main1}. To see this, observe that the proof of Theorem~\ref{thm:main1} only uses the fact that the allocations of \emph{every} active user get a multiplicative boost whenever the usage is below $1-\varepsilon$. The last statement is true since Lemma~\ref{lem:monotonicity} also holds in this case.

For SLA satisfaction, we have the following stronger statement.
\begin{theorem}\label{thm:SLAextension}
	Let $0< \varepsilon\leq 1/10$, $0< \eta \leq 1/3$, $\lambda = \frac{\varepsilon^2}{8N}$ and $\tilde{p}=32 \frac{N^2}{\varepsilon^3 \eta} \ln (N/\varepsilon)$. Take any SLAs $\beta(1),\ldots,\beta(N)$ such that $\frac{\beta(i)}{\sum_k \beta(k)} \geq \frac{e^{\eta(1+\lambda)}}{1-\varepsilon} \frac{\varepsilon}{N}$. Then, for any user $i$ and time $t$, if we take $\h_1',\ldots,\h_T'$ to be the allocations such that $ h_t'(i)=\left(1-2\varepsilon\right) \frac{\beta(i)}{\sum_k \beta(k)}$, the work done by Algorithm \ref{alg:SMB1} for user $i$ satisfies
	\[
	\sum_{\tau=1}^{t+\tilde{p}} w_\tau(i) \geq \sum_{\tau = 1 }^t w_\tau'(i),
	\]
where $\w_t'$ is the work done by the allocations $\h_1',\ldots,\h_T'$. Moreover $\sum_{\tau=1}^t w_\tau(i) \geq \sum_{\tau=1}^t w_\tau'(i) - \frac{\beta(i)}{\sum_k \beta(k)}\tilde{p}$.
\end{theorem}
The proof of this result is similar to the proof of Theorem~\ref{thm:main2}. A subtle difference is that the analogue of Lemma~\ref{lem:monotonicity2} holds if we add the hypothesis $\sum_{i\in A_t} h_t(i) > 1-\varepsilon$. We skip the proof for brevity.
\begin{lemma}\label{lem:monotonicity3}
	Assume $\varepsilon\leq \frac{1}{10}$, $\eta \leq \frac{1}{3}$ and $\frac{\beta(i)}{\sum_k \beta(k)} \geq \frac{e^{\eta(1+\lambda)}}{1-\varepsilon} \frac{\varepsilon}{N}$ for all users. In Algorithm~\ref{alg:SMB2}, if $\sum_{k\in A_t} h_t(k)> 1- \varepsilon$ then for any $i\in A_t^1$ we have $h_{t+1}(i) \geq (1+c') h_t(i)$, where $c'= \frac{\varepsilon \eta \lambda}{2N}$.
\end{lemma}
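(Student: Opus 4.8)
The plan is to track a single step of the update for a user $i\in A_t^1$ and reduce everything to understanding the entropic projection $\pi_{\Delta_\varepsilon}$. Since $i\in A_t^1$ receives the maximal gain $g_t(i)=1+\lambda$, the pre-projection value is $\widehat h_{t+1}(i)=h_t(i)e^{\eta(1+\lambda)}$, the largest multiplicative boost among all users. By Proposition~\ref{prop:projection}(b), writing $S$ for the clamped set and $m=\tfrac\varepsilon N|S|$, every non-clamped coordinate is rescaled by the common factor $C=\frac{1-m}{\sum_{j\notin S}\widehat h_{t+1}(j)}$, so if $i\notin S$ then $h_{t+1}(i)=h_t(i)e^{\eta(1+\lambda)}C$ and it suffices to prove \emph{(i)} that $i$ is not clamped and \emph{(ii)} that $e^{\eta(1+\lambda)}C\ge 1+c'$.

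The crucial combinatorial step, and the only place where the hypothesis $\sum_{k\in A_t}h_t(k)>1-\varepsilon$ and the proportional definition of $A_t^1$ enter, is to produce a non-clamped ``donor'' inside $A_t^2$. Summing the defining inequality $h_t(k)<(1-\varepsilon)\frac{\beta(k)}{\sum_{j\in A_t}\beta(j)}$ over $k\in A_t^1$ and subtracting from the utilization hypothesis yields
\[
\sum_{k\in A_t^2}h_t(k)\;>\;(1-\varepsilon)\,\frac{\sum_{A_t^2}\beta}{\sum_{A_t}\beta}\;\ge\;\frac{\varepsilon}{N},
\]
using $\beta(k)\ge 2\varepsilon/N$ and $\sum_{A_t}\beta\le 1$. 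In particular $A_t^2\neq\emptyset$ (this is exactly the ``different active user $j$'' promised in the text), and moreover \emph{every} $j\in A_t^2$ satisfies $h_t(j)\ge(1-\varepsilon)\frac{\beta(j)}{\sum_{A_t}\beta}\ge\frac{2\varepsilon(1-\varepsilon)}{N}$, strictly above the floor $\varepsilon/N$. I will use this slack to show that the $A_t^2$ users are never pushed to the floor: since $C\ge(1-\varepsilon)e^{-\eta(1+\lambda)}$ (from $\sum_{j\notin S}\widehat h_{t+1}(j)\le Z:=\sum_j\widehat h_{t+1}(j)\le e^{\eta(1+\lambda)}$ and $1-m\ge 1-\varepsilon$), their projected value $h_t(j)e^{\eta}C\ge\frac{2\varepsilon(1-\varepsilon)}{N}(1-\varepsilon)e^{-\eta\lambda}>\varepsilon/N$ for $\varepsilon\le\tfrac1{10},\eta\le\tfrac13$, so $A_t^2\cap S=\emptyset$.

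For fact (i) I use the water-filling structure of Proposition~\ref{prop:projection}(a): clamping proceeds from the smallest $\widehat h_{t+1}$ values upward, so it suffices to check that after clamping all coordinates strictly below $\widehat h_{t+1}(i)$ — call this set $S_0$ with mass $m_0=\tfrac\varepsilon N|S_0|$ — the value of $i$ still clears the floor. This is the clean inequality $\sum_{j\notin S_0}\widehat h_{t+1}(j)=\sum_{j\notin S_0}h_t(j)e^{\eta g_t(j)}\le e^{\eta(1+\lambda)}(1-m_0)$, where I used $g_t\le 1+\lambda$ and $h_t(j)\ge\varepsilon/N$ on the clamped coordinates; rearranged it says precisely $\widehat h_{t+1}(i)\,\frac{1-m_0}{\sum_{j\notin S_0}\widehat h_{t+1}(j)}\ge\frac\varepsilon N$, i.e.\ $i\notin S$. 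For fact (ii), because all of $A_t^2$ is non-clamped and $1-m\ge\sum_{j\notin S}h_t(j)$, the gap between boosted and actual mass is
\[
e^{\eta(1+\lambda)}(1-m)-\!\!\sum_{j\notin S}\!\widehat h_{t+1}(j)\ \ge\ \sum_{j\in A_t^2}h_t(j)\bigl(e^{\eta(1+\lambda)}-e^{\eta}\bigr)\ \ge\ \frac{\varepsilon}{N}\,\eta\lambda,
\]
using $e^{\eta(1+\lambda)}-e^\eta=e^\eta(e^{\eta\lambda}-1)\ge\eta\lambda$ together with the donor bound. Dividing by $\sum_{j\notin S}\widehat h_{t+1}(j)\le e^{\eta(1+\lambda)}(1-m)\le 2$ gives $e^{\eta(1+\lambda)}C\ge 1+\frac{\varepsilon\eta\lambda}{2N}=1+c'$, whence $h_{t+1}(i)\ge(1+c')h_t(i)$.

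I expect the main obstacle to be the truncation, namely fact (i) together with the bookkeeping in (ii): the naive normalization $1/Z$ degrades to $C\le 1/Z$ whenever coordinates are clamped, and the crude bound $C\ge(1-\varepsilon)/Z$ loses an additive $\Theta(\varepsilon\eta)$ term that would swamp the target gain $c'=\Theta(\varepsilon\eta\lambda/N)$. The way around this is the observation that clamped coordinates contribute at least $\varepsilon/N$ to the mass both before and after projection, so they cancel from both sides and what remains is the untruncated computation restricted to the non-clamped users, where the guaranteed floor-avoiding $A_t^2$ donor still supplies the multiplicative slack. This is precisely why $A_t^1$ is defined with the proportional threshold $(1-\varepsilon)\beta(i)/\sum_{A_t}\beta$ rather than $\beta(i)$: under $\sum_{A_t}h_t>1-\varepsilon$ it forces the donor's allocation safely above $\varepsilon/N$, which is what both prevents $i$ from being clamped and yields the strict gain.
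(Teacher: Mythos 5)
Your proof is correct and takes essentially the same route as the paper: the paper skips this proof for brevity, but its template (the proof of Lemma~\ref{lem:monotonicity2}, adapted as the text indicates) rests on exactly your two key ingredients --- a non-clamped donor in $A_t^2$, which you obtain by averaging the defining inequality of $A_t^1$ against the hypothesis $\sum_{k\in A_t}h_t(k)>1-\varepsilon$, and a multiplicative gain of order $\varepsilon\eta\lambda/N$ extracted from the gap between the boosts $e^{\eta(1+\lambda)}$ and $e^{\eta}$ applied to the donor's mass of at least $\varepsilon/N$. One remark: your fact~(i) is unnecessary, since Proposition~\ref{prop:projection}(b) gives $h_{t+1}(i)=\widehat h_{t+1}(i)e^{\mu_i}C\geq \widehat h_{t+1}(i)C$ for \emph{every} user because $\mu_i\geq 0$, which is how the paper's argument sidesteps having to show $i\notin S$ (that fact then follows for free once the lower bound exceeds $\varepsilon/N$).
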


Lemma~\ref{lem:monotonicity} and \ref{lem:monotonicity3} ensure that any active user gets a multiplicative boost of at least $(1+c')$. Therefore, any user that is active $\widetilde{p}$ consecutive times will have an allocation of at least $(1-2\varepsilon) \frac{\beta(i)}{\sum_k \beta(k)}$. Then, by following the same inductive proof of Theorem~\ref{thm:main2}, we obtain Theorem~\ref{thm:SLAextension}.

If the resource is not over-committed, $\sum_k\beta (k)\leq 1$, this result implies that Algorithm~\ref{alg:SMB2} ensures for each user $i$ an amount of work comparable with $\frac{\beta(i)}{\sum_k \beta(k)} \geq \beta(i)$; that is, we obtain the standard SLA satisfaction guarantee.
In the over-commitment regime, $\sum_{i=1}^N \beta(i)> 1$, we do retain some performance guarantees.  The update according to almost-normalized SLAs in Algorithm~\ref{alg:SMB2} still works and Theorem~\ref{thm:mainextension}'s work maximization guarantee still applies, as its proof does not rely on the SLAs. On the other hand, Theorem~\ref{thm:SLAextension} states that individually, each user does work comparable to their normalized SLA. If the level of over-commitment is not large, each user is still guaranteed service ``almost'' at their SLA; for example, if the resource is over-committed by $10\%$, each user receives service comparable to $1.1^{-1} \approx 91\%$ of their SLA.

Another interesting byproduct of the work maximization guarantee is the following result.
\begin{corollary}
Under the assumptions of Theorem~\ref{thm:main2}, 
suppose there is a time $1\leq \tau\leq T$ with $\sum_{t=1}^\tau \sum_{i} w_\tau'(i)= \sum_{t=1}^\tau \sum_{i} L_t(i)$; i.e.\ the optimal offline $(1-4 \varepsilon)$-allocation is able to finish all work up until $\tau$. Then, the sum of queue lengths at time $\tau$ induced by Algorithm \ref{alg:SMB2} is at most $8\frac{N}{\varepsilon^2 \eta} \ln(N/\varepsilon)$. 
In particular, at time $\tau$ each user's queue length is at most this value. 
\end{corollary}

In practical settings, it is commonplace to assume an arrival rate is lower than the work processing rate. In stochastic settings, stationary states cannot be achieved without this assumption; see, e.g.,\ \cite{tassiulas1990stability}. In our context, this can be reinterpreted as having times within the operating horizon where the optimal offline solution is able to finish all work arriving up until that time. At these particular times, the corollary guarantees that Algorithm \ref{alg:SMB2}'s queue lengths are constant. In other words, the algorithm does not starve individual users to achieve work maximization and keeps their queues short, an appealing property in cloud systems. 

\section{Experiments}\label{sec:Numerical}

In this section, we empirically test Algorithm~\ref{alg:SMB2} against a family of offline and online algorithms. We aim to measure the performance on both synthetic data as well as real CPU traces from a production service in Microsoft's cloud. We quantify the performance on the following three criteria.


\begin{itemize}
	\item \textbf{Work maximization.} We compare the overall work done by Algorithm~\ref{alg:SMB2} against various benchmark algorithms.
	
	\item \textbf{SLA guarantee.} We examine the extent to which our algorithm achieves the cumulative work of the static SLA policy for each user. We do so by measuring the cumulative work over plausible time windows.

	
	\item \textbf{Queue length.} We compare the $2$-norm of the individual queues over time. 
 We use this metric as a proxy for the system latency, which is not captured by our theoretical results. 
\end{itemize}


We consider the following \emph{online algorithms}, against which we benchmark our algorithm.
\begin{itemize}
\item \textbf{Static SLA Policy (Static)}. Each user gets their SLA as a constant, static allocation. We call this algorithm Static.

\item \textbf{Proportional Online (PO)}. In each iteration, every active user will get their SLA normalized by the sum of SLAs of active users (just their SLA if there are no active users). This simple algorithm seems suitable for a practical implementation; however, its performance can be arbitrarily bad in terms of work maximization. The formal description appears in Algorithm~\ref{alg:OnlProp} in Appendix~\ref{sec:Add_Alg}. We call this Algorithm PO.

\item \textbf{Online Work Maximizing (OWM)}. This algorithm divides users into three categories: $A,B$ and $I$ (active users with allocation, actives users without allocation and inactive users). At each iteration, the resource is divided uniformly among users in $A$. If a user in $A$ becomes inactive, they are moved to $I$. If a user in $I$ becomes active, they are moved to $B$. When $A$ becomes empty, we move all users from $B$ to $A$. In Appendix~\ref{sec:localgreedy} we prove that this method is work maximizing. However, this greedy strategy is not guaranteed to satisfy SLA constraints for general input loads. We call this algorithm OWM.
\end{itemize}
We also consider the following \emph{offline algorithms}, against which we benchmark our algorithm. 
\begin{itemize}
\item[*] \textbf{Optimal $1$-allocations (PG)}. The optimal offline solution to the work maximization problem. This solution is computed using Algorithm~\ref{alg:PropGreedy}, which we call Proportional Greedy (PG). This algorithm can be considered as the offline counterpart of Proportional Online.

\item[*] \textbf{Optimal $(1-\varepsilon)$-allocation (restPG)}. Offline solution to the work maximization problem with resource restricted to $1-\varepsilon$, where $\varepsilon$ is the parameter of Algorithm~\ref{alg:SMB2}. 
\end{itemize}

\subsection{Synthetic Experiment}

\subsubsection{Description of the Experiment}~
In this experiment, we consider a synthetically generated input sequence, which we use to examine how online algorithms adapt to different conditions. 
Specifically, our system consists of three users with SLAs of $\beta(1) = 0.2$, $\beta(2)=0.3$ and $\beta(3) = 0.5$. We consider a time horizon of $T=3,\!000,\!000$. The load input sequence is divided into six periods: $P_i=\left[ \frac{(i-1)T}{6},  \frac{iT}{6} \right)$ for $i=1,\ldots,6$. In each period, only two users demand new resources. During the first $3$ periods, the random demand has a mean proportional to the users' SLA. In the following $3$ periods, the random demand changes to a distribution with uniform mean among users demanding resources. Specifically:
\begin{itemize}
	
\item During $P_1$, only users $2$ and $3$ demand the following loads. At the beginning of $P_1$, 
i.e., $t=1$, user $2$ demands a large load of $L_1(2) \sim \frac{T}{6}\cdot \mathrm{Gamma}\left(2000, \frac{1}{2000}\cdot\frac{\beta(2)}{\beta(2)+ \beta(3)} \right)$ and $L_1(3)=0$. During the rest of period $P_1$, $L_t(2)=0$ and $L_t(3) 
\sim \mathrm{Gamma}\left(2000, \frac{1}{2000}\cdot\frac{\beta(3)}{\beta(2)+ \beta(3)} \right)$. User $1$ demands nothing during this entire period. Similar loads are set for period $P_2$ and $P_3$. 
	
\item Similarly, during $P_4$ users $2$ and $3$ demand $L_t(i) \sim \mathrm{Gamma}\left(2000, \frac{1}{2000}\cdot \frac{1}{2}\right)$ for $i=2,3$. During $P_5$ users $1$ and $2$ demand $L_t(i) \sim \mathrm{Gamma}\left(2000, \frac{1}{2000}\cdot \frac{1}{2}\right)$ for $i=1,2$. During $P_6$ users $1$ and $3$ demand $L_t(i) \sim \mathrm{Gamma}\left(2000, \frac{1}{2000}\cdot \frac{1}{2}\right)$ for $i=1,3$.
	
\end{itemize}

%

The expectation of a Gamma$(k,\theta)$ random variable is given by $k\theta$ and the variance is given by $k\theta^2$ (see e.g.\ \cite{feller1957introduction}). For example, in period $P_1$ user $2$'s expected load is $ \frac{T}{6} \frac{\beta(2)}{\beta(2)+\beta(3)}$, with variance $\frac{1}{2000} \left( \frac{\beta(2)}{\beta(2)+\beta(3)} \right)^2$. Similarly, user $3$'s expected total load is $ \frac{T}{6} \frac{\beta(3)}{\beta(2)+\beta(3)}$. Thus, the expected overall load is $T/6$, exactly the length of the period. 
 A brief summary of Gamma distribution's properties is given in Appendix \ref{sec:Gamma}.





We instantiate Algorithm~\ref{alg:SMB2} with $\eta=\frac{1}{3}$, $\varepsilon=0.02$ and $T=3,\!000,\!000$.

\subsubsection{Results}

\textbf{Work maximization.} In Figure~\ref{fig:char3users} we present the cumulative work difference between PG and Algorithm~\ref{alg:SMB2} (solid blue line with star), restPG and Algorithm~\ref{alg:SMB2} (red dashed line with triangle), PO and Algorithm~\ref{alg:SMB2} (solid magenta line), Static and Algorithm~\ref{alg:SMB2} (solid green line with small circle) and OWM and Algorithm~\ref{alg:SMB2} (solid cyan line with large circle). Intuitively, one positive unit of difference implies the corresponding algorithm is ahead of Algorithm~\ref{alg:SMB2} by one unit of time.

First, we consider the comparison to online algorithms  Static, PO and OWM.  Algorithm~\ref{alg:SMB2} outperforms Static significantly, by roughly $700,\!000$ units of time. During the first half of the experiment PO shows good performance, but in the second half of the experiment (when the load distribution changes), Algorithm~\ref{alg:SMB2} outperforms PO. 
This shows that Algorithm~\ref{alg:SMB2} can adapt to changing input sequences that PO cannot adapt to. Finally, OWM surpasses Algorithm~\ref{alg:SMB2} during the whole experiment, with a performance similar to PG; 
this is an expected result since OWM is work maximizing.

With respect to offline algorithms, PG outperforms Algorithm~\ref{alg:SMB2} by roughly $10,\!000$ time units. On the other hand, Algorithm~\ref{alg:SMB2} surpasses restPG by approximately $20,\!000$ units. This shows that Algorithm~\ref{alg:SMB2} indeed performs better than the offline optimum with a slightly reduced amount of resources, as guaranteed by the theoretical results.
\begin{figure}[h!]
	\centering
		\centering
	\includegraphics[width=0.75\linewidth]{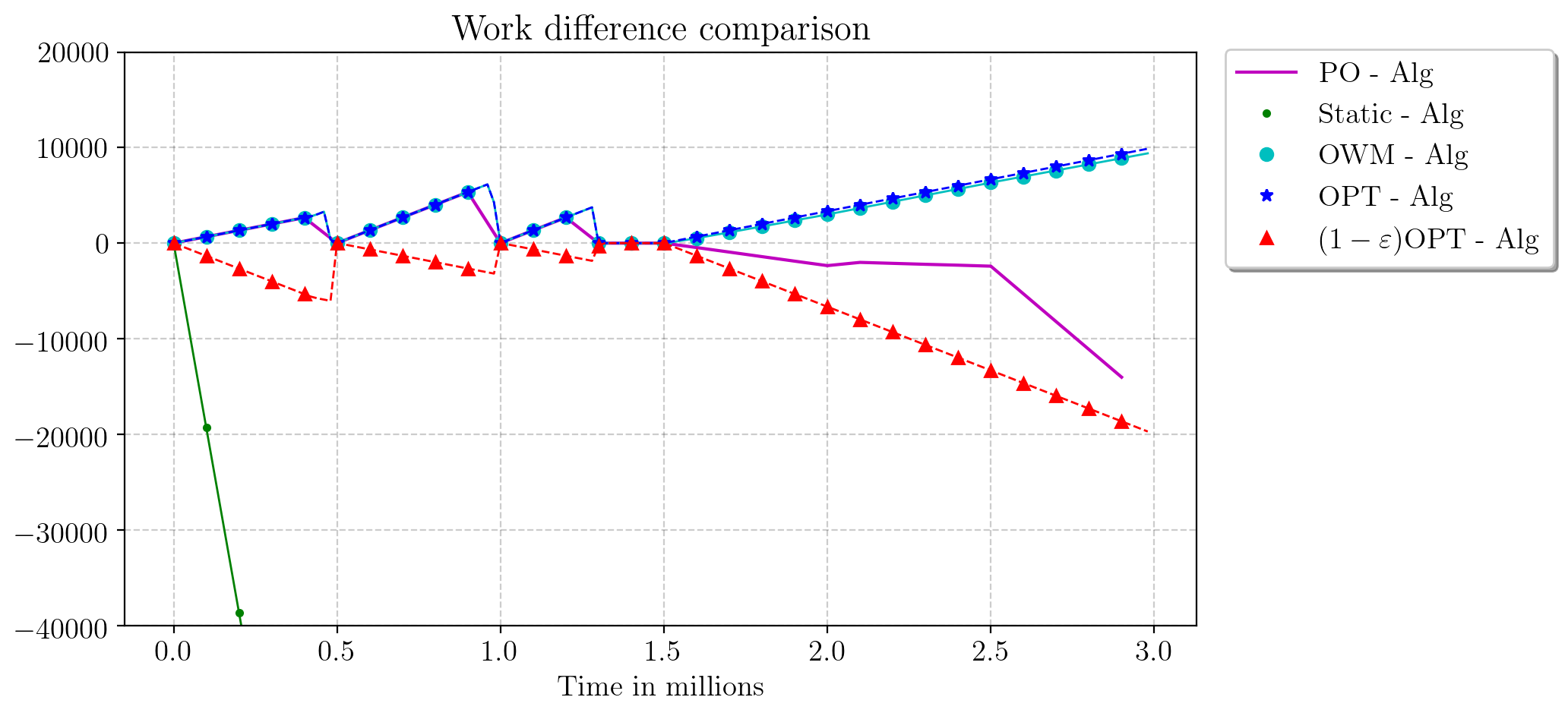}
	\caption{Difference between algorithms' work. Alg is short for Algorithm~\ref{alg:SMB2}. Observe that the differences ``OPT - Alg'' ``and ``OWM - Alg'' slightly overlap.}
	\label{fig:char3users}
\end{figure}

\textbf{SLA satisfaction.} Among online algorithms, Static and PO satisfy SLA restrictions by design, but as seen earlier they are not competitive in terms of work maximization. We focus on the comparison between OWM and Algorithm~\ref{alg:SMB2} as far as SLA satisfaction is concerned. Although OWM performs extremely well in work maximization, this comes at a significant price in SLA satisfaction. In Figure~\ref{fig:work} we depict empirically this behavior by plotting the instantaneous work done by users $2$ and $3$ by OWM and Algorithm~\ref{alg:SMB2} during period $P_1$. We empirically observe that Algorithm~\ref{alg:SMB2} approximately satisfies user $3$'s SLA,  but OWM does not allocate the user any resources. Such an extreme behavior arises because OWM is geared towards work maximization rather than SLA satisfaction.
%
%
\begin{figure}[h!]
	\centering
	\begin{minipage}[l]{0.7\linewidth}
		\includegraphics[width=\textwidth]{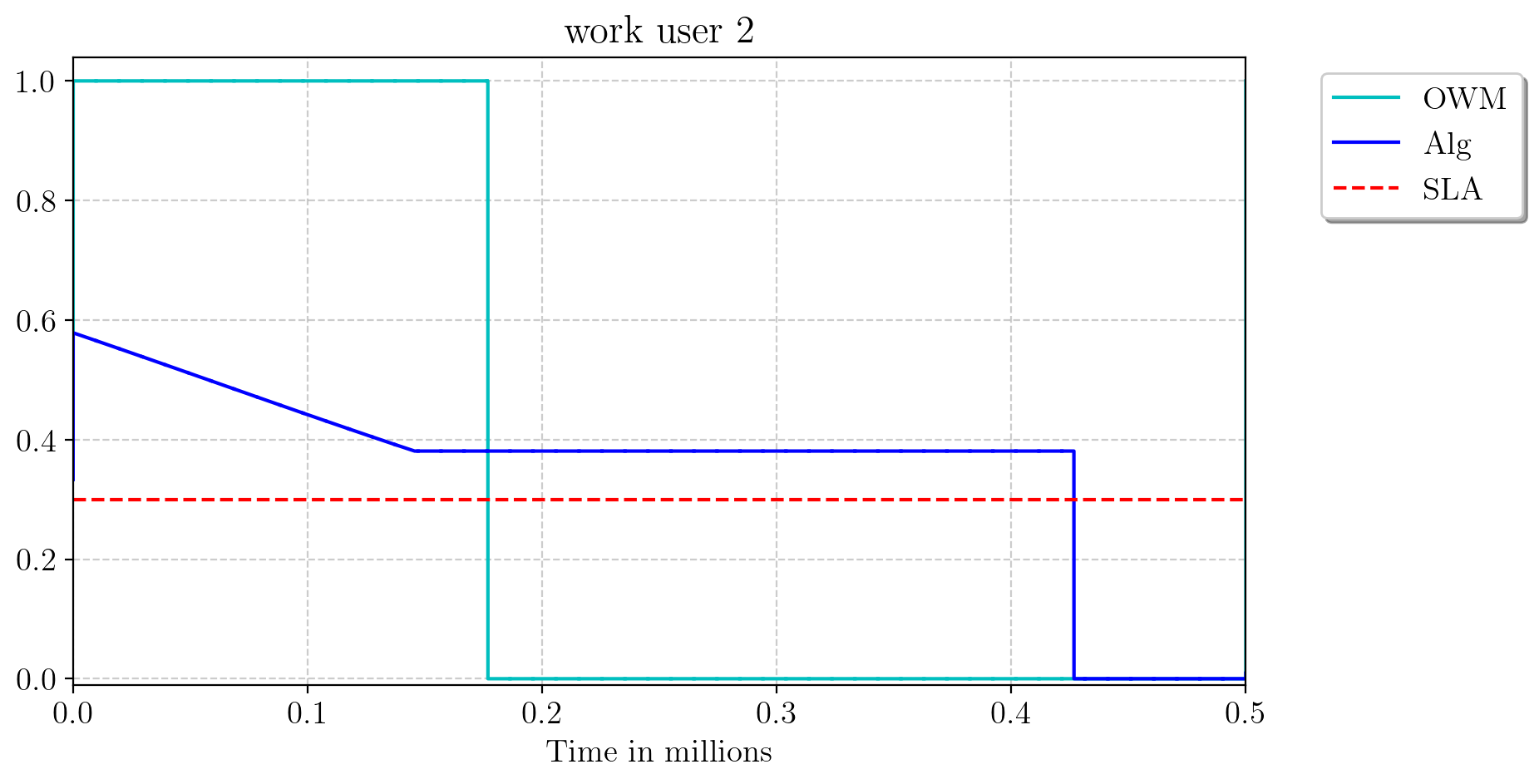}
	\end{minipage}\\
	\begin{minipage}[l]{0.7\linewidth}
	\includegraphics[width=\textwidth]{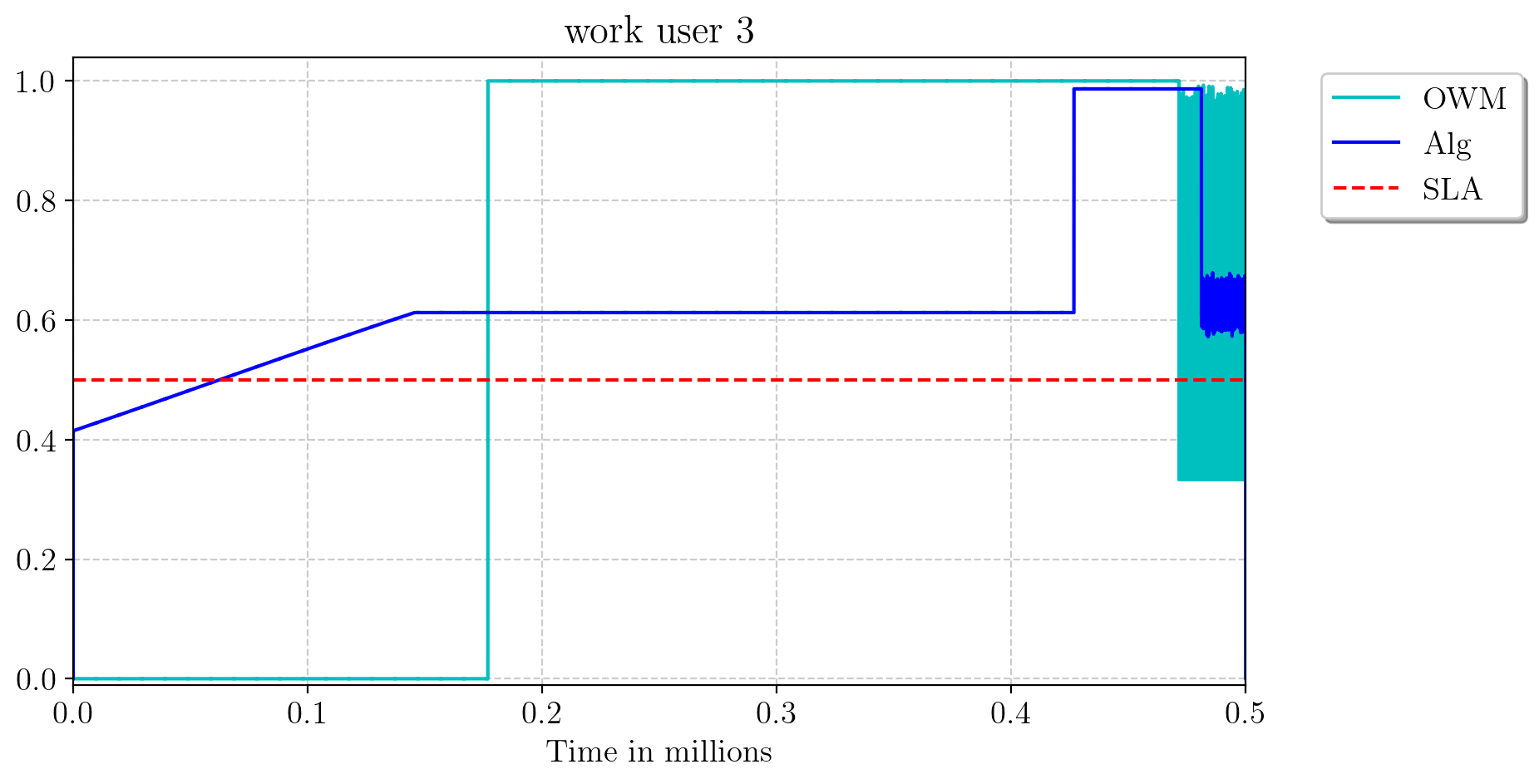}
\end{minipage}\\
	\caption{Instantaneous work for user $2$ and $3$ during period $P_1$. User $3$ does not receive any allocation in OWM until user $2$ finishes all of their work.}
	\label{fig:work}
\end{figure}

\textbf{Queue lengths.} In Figure~\ref{fig:queuessquared} we present the $2$-norm of queues induced by Algorithm~\ref{alg:SMB2} (solid blue), Static (solid magenta with star), PO (solid cyan with large circle) and OWM (solid black with triangle). Once again, we can interpret one unit of norm as one unit of latency. Experimentally, we observe that Static shows the worst performance with a final $2$-norm of $524,\!000$ units. Algorithm~\ref{alg:SMB2} ends with a $2$-norm of $10,\!000$ units, PO with $26,\!970$ units, and OWM with $381$ units. As remarked above, even though OWM induces very small queues, this comes at the cost of not satisfying SLA requirements. 
\begin{figure}[h!]
		\centering
	\includegraphics[width=0.75\linewidth]{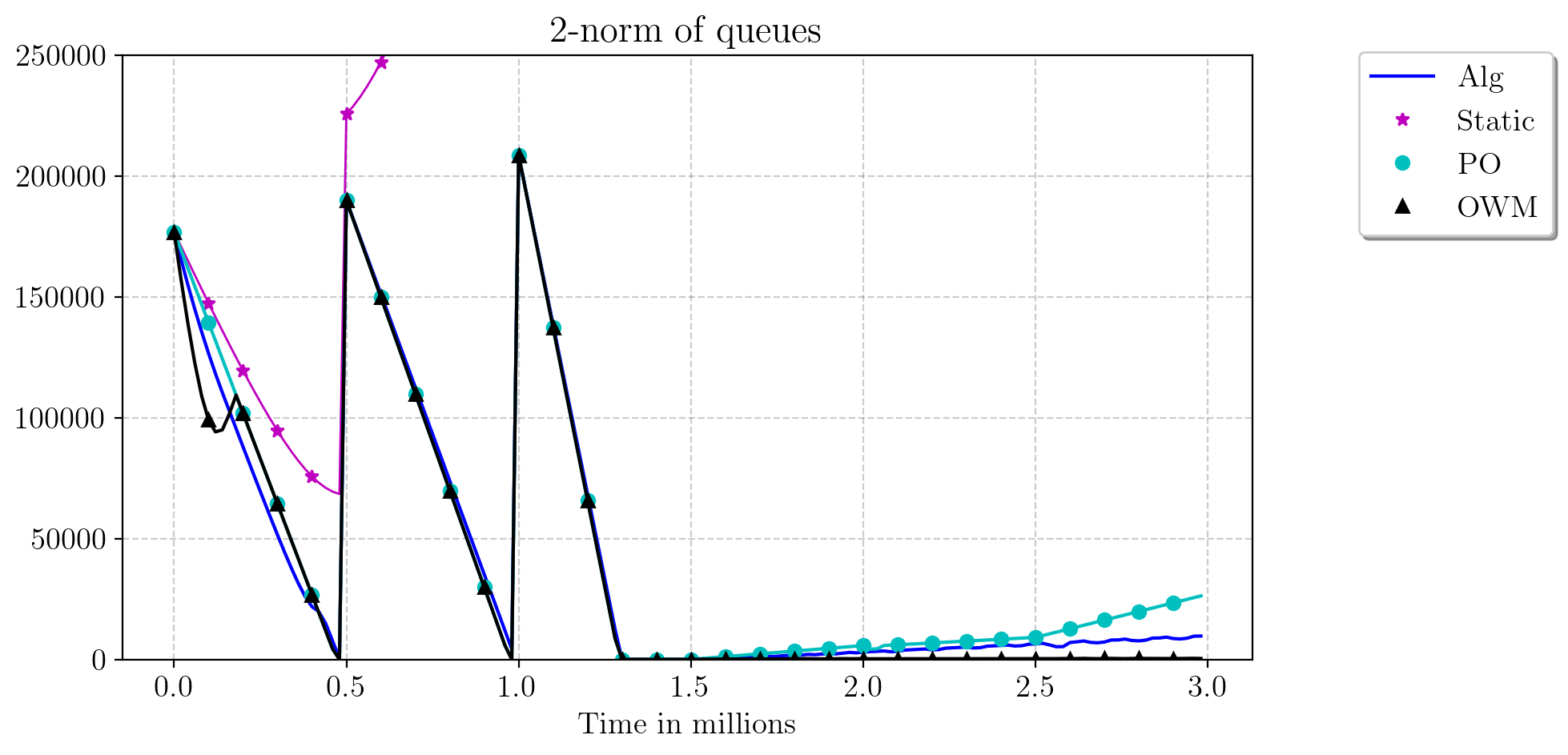}
	\caption{$2$-norm of queues.}
	\label{fig:queuessquared}
\end{figure}

{\textbf{Summary.} Among all online algorithms, Algorithm~\ref{alg:SMB2} is able to best balance work maximization and SLA satisfaction. In particular, Algorithm~\ref{alg:SMB2} is only slightly worse in terms of work maximization compared to OWM; this is expected since Algorithm~\ref{alg:SMB2} always reserves a small fraction of the resource for each user, regardless of activity. Furthermore, our results show that the actual total work done by Algorithm~\ref{alg:SMB2} is much better than the theoretical guarantee of Theorem~\ref{thm:mainextension}, as it substantially outperforms $(1-\varepsilon)$OPT (restPG). Finally, we observe small queues throughout the entire horizon, which is key for maintaining reasonable latencies.}

\subsection{Experiment with Real Data}


We next describe the results of our computational study using real world data. For these experiments, we obtained CPU traces of a production service on Azure, Microsoft's public cloud. The data consists of demand traces of six different users over a time window of approximately ten days.

To show Algorithm~\ref{alg:SMB2}'s robustness with respect to (short-term) real data, we also consider the following measurement:
\begin{itemize}
%
\item \textbf{Instantaneous SLA.} We focus on a modified SLA satisfaction criterion because of the relatively short horizon (about $14,000$ minutes); 
we assess Algorithm~\ref{alg:SMB2}'s performance in the following way. For a user $i$ and any time $t$, we compare the cumulative work done by user $i$ in Algorithm~\ref{alg:SMB2} during a time window $[t,t+\tau)$ versus the cumulative work done by the same user $i$ under a Static SLA Policy during the time window $[t,t+\tau)$. In order to have a meaningful comparison, at time $t$, we run the Static SLA Policy with the queues of Algorithm~\ref{alg:SMB2} at time $t$. The motivating question is, \emph{what happens if at time $t$ and the next $\tau$ time steps we run the static policy instead of Algorithm~\ref{alg:SMB2}?} For this experiment, we used $\tau = 500$ minutes.
	
	%
%

\end{itemize}

The data set consists of demand traces of six users of exactly $14,\!628$ minutes (approximately ten days). Each user is assigned their normalized average workload as SLA. (Since the data is proprietary, we cannot disclose actual SLAs.) For the purpose of the experiment, we run Algorithm~\ref{alg:SMB2} with parameters $\varepsilon=0.01$ and $\eta = \frac{1}{3}$.

\subsubsection{Results}

\textbf{Work maximization.} We depict in Figure~\ref{fig:MS_work_users_diff} the following differences: cumulative work done until time $t$ by optimal $1$-allocations (PG) and Algorithm~\ref{alg:SMB2}, $(1-\varepsilon)$-allocations (restPG) and Algorithm~\ref{alg:SMB2}, PO and Algorithm~\ref{alg:SMB2}, Static and Algorithm~\ref{alg:SMB2}, and OWM and Algorithm~\ref{alg:SMB2}. In a similar fashion to the previous experiment, one positive unit can be interpreted as Algorithm~\ref{alg:SMB2} being one unit (minute) of work behind, and one negative unit means Algorithm~\ref{alg:SMB2} is ahead by a minute.

We observe that Algorithm~\ref{alg:SMB2} outperforms all online benchmarks. Against Static, the final difference is $105$ units, with a maximum difference of $167$ units. Against PO the final difference is $25$ units with a maximum difference of $37$. Finally, against OWM, the final difference is $20$ with a maximum difference of $21$. Surprisingly, for this data set Algorithm~\ref{alg:SMB2} is able to surpass even OWM.

Regarding the offline algorithms, PG surpasses Algorithm~\ref{alg:SMB2} during the whole experiment as expected, with a final difference of $14$ units. On the other hand, Algorithm~\ref{alg:SMB2} outperforms restPG by $26$ units by the end of the experiment.
\begin{figure}[h!]
	\centering
	\includegraphics[width=0.75\linewidth]{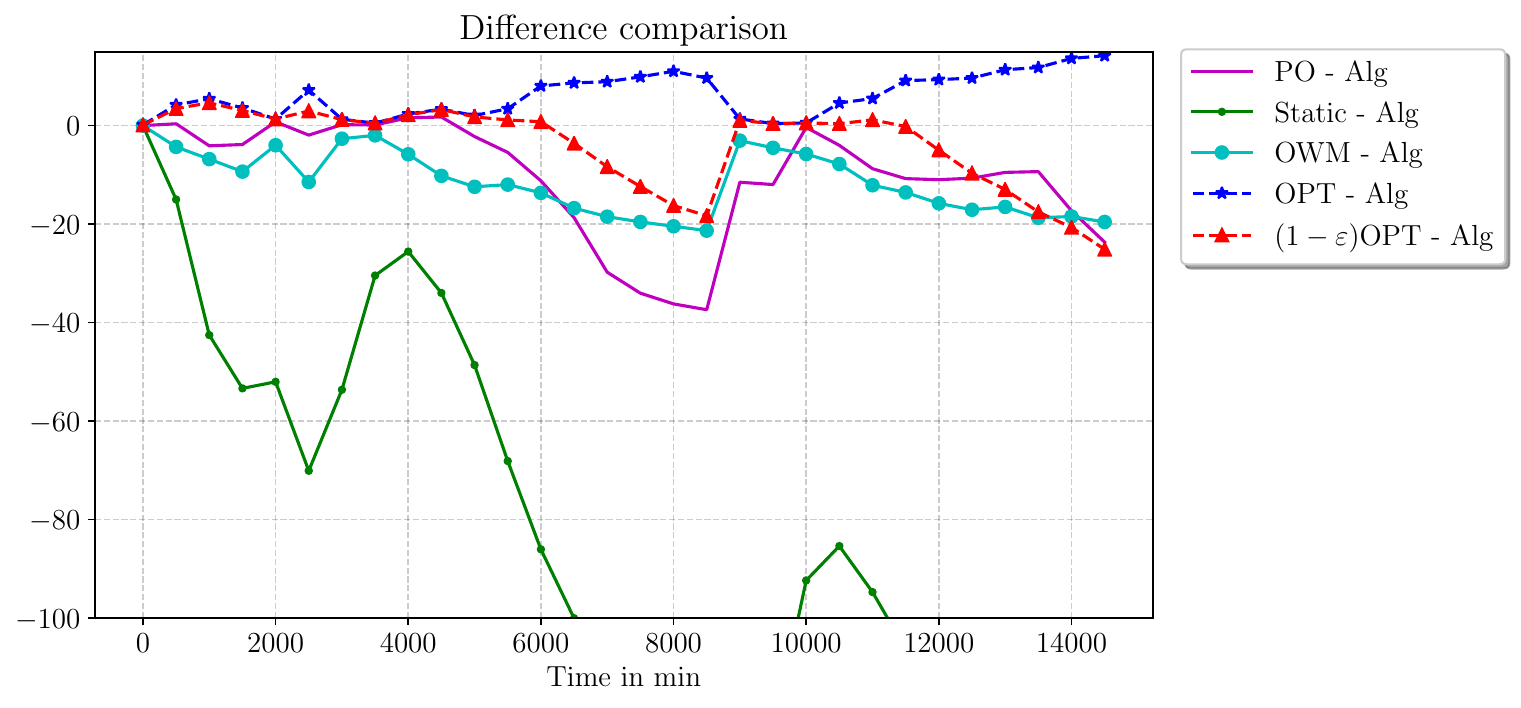}
	\caption{Difference of cumulative works.}
	\label{fig:MS_work_users_diff}
\end{figure}

\textbf{Instantaneous SLA.} In Table~\ref{tab:MS_diff_SLA} we report statistics on the differences between the cumulative work done in time windows $[t,t+\tau)$  by Static and Algorithm~\ref{alg:SMB2} for each user. For each $t$ and each user $i$, the exact formula is $r_i(t)={\sum_{r=t}^{t+\tau -1} w_r'(i)}-{\sum_{r=t}^{t+\tau -1} w_r(i)}$, where $w_r(i)$ is the work done by user $i$ under Algorithm~\ref{alg:SMB2} and $w_r'(i)$ is the work done by user $i$ with Static (with queues at $t$ given by Algorithm~\ref{alg:SMB2}). The table shows the min, max, average and standard deviation of $\{  r_i(t)\}_t$ when $\tau=500$ minutes. A positive unit means Algorithm~\ref{alg:SMB2} is outperformed by Static during $[t,t+\tau)$ under the same initial conditions by one unit of time. In general, we observe that all users show negative empirical average difference.  This result empirically suggests that Algorithm~\ref{alg:SMB2} ensures approximate SLA satisfaction, even for small time windows. For instance, user $3$ occasionally has a high difference (46.4 units), mostly due to times $t$ where Algorithm~\ref{alg:SMB2} allocates the user a small amount of resource but a huge load is incoming during the window $[t,t+\tau)$. The experiment tells us that averaging out these ``bad'' times ensures good performance under the SLA criterion. Furthermore, we tested values of $\tau=60$, $500$ and $1000$ minutes; larger windows improve the results, with lower maximum and average values.
\begin{table}
	\caption{Statistics for the difference between the cumulative works of our algorithm and static over time windows $[t,t+\tau)$. For each user $i$ we present the min, max, average, and standard deviation over the sequence of differences $\{ r_i(t) \}_{t}$.}
	\label{tab:MS_diff_SLA}
	
	\centering
	\begin{tabular}{|l|l|l|l|l|}
		\hline
		User   & min    & max  & mean  & std  \\
		\hline
		User 1 & -31.1  & 21.7 & -4.6  & 13.8 \\
		User 2 & -122.3 & 46.9 & -31.2 & 49.2 \\
		User 3 & -90.8  & 46.4 & -7.2  & 29.5 \\
		User 4 & -49.7  & 18.5 & -0.8  & 12.9 \\
		User 5 & -42.6  & 22.6 & -5.85 & 14.0 \\
		User 6 & -21.9  & 14.9 & -0.6  & 6.4 \\
		\hline
	\end{tabular}
\end{table}

\textbf{Queue lengths.} In Figure~\ref{fig:MS_Queues_squared} we present the $2$-norms of queues given by the online benchmark algorithms and Algorithm~\ref{alg:SMB2}. As usual, we can interpret one unit as the respective algorithm's latency, that is, lateness with respect to the overall users' demand. Compared against the online algorithms, we empirically observe the superiority of Algorithm~\ref{alg:SMB2}, as it has the smallest latency most of the time. Algorithm~\ref{alg:SMB2} ends with a $2$-norm of roughly 44 units, average length of 22 and a maximum length of 92. PO ends with a $2$-norm of approximately 68, average length of 32 and a maximum length of 126. OWM ends with a $2$-norm of 68, average of 32 and maximum of 113. Finally, Static shows the worst behavior, with a final $2$-norm of $103$, average of $91$ and maximum of $231$. For this data set, Algorithm~\ref{alg:SMB2} shows a remarkable performance, considering particularly that Algorithm~\ref{alg:SMB2} always reserves $\varepsilon/N$ resource for each user.

\begin{figure}[h!]
	\centering
	\includegraphics[width=0.75\linewidth]{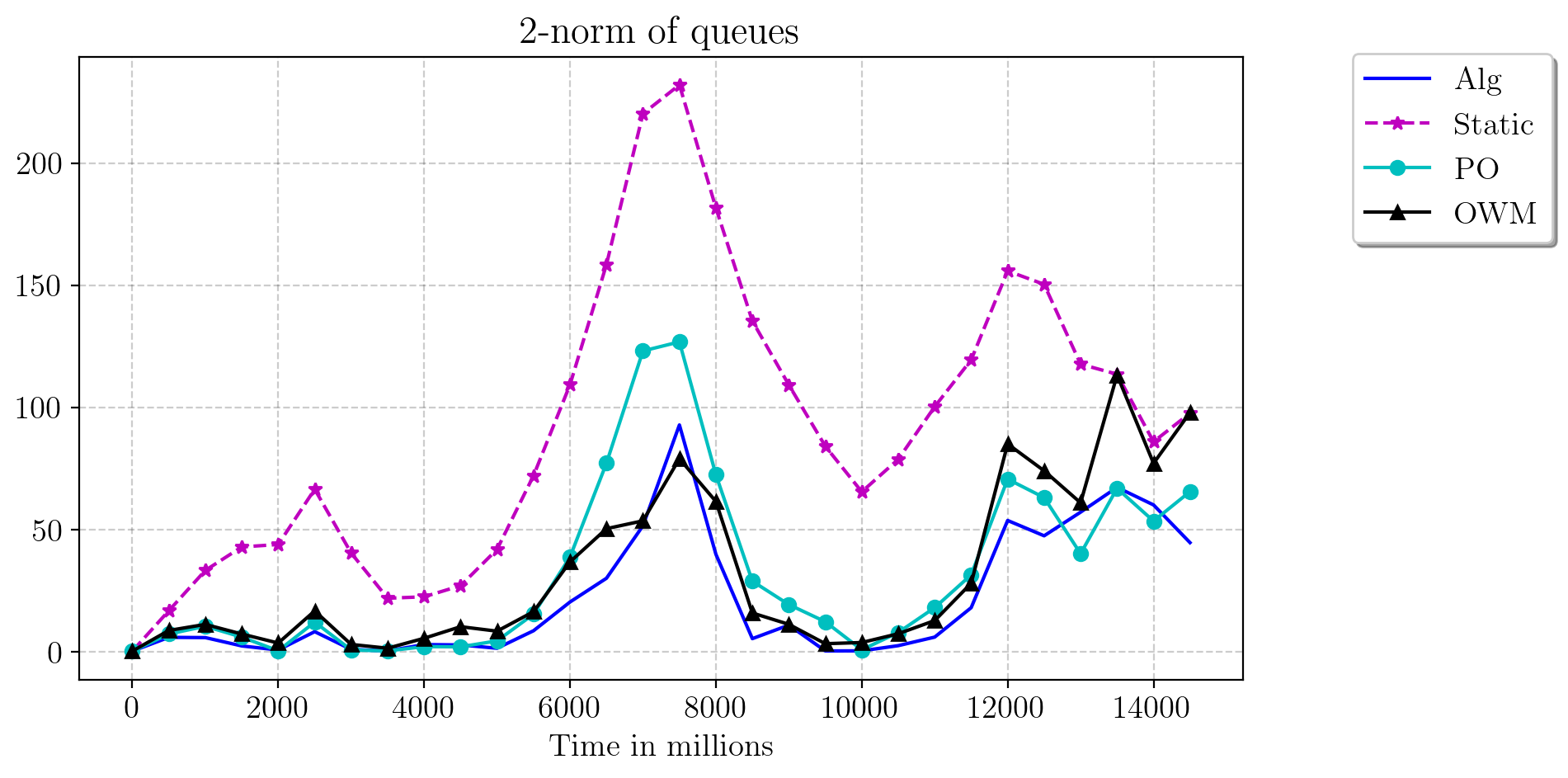}
	\caption{$2$-norm of queues.}
	\label{fig:MS_Queues_squared}
\end{figure}

{\textbf{Summary.} Algorithm \ref{alg:SMB2} performs very well in terms of work maximization compared to all other online algorithms, and exceeds the theoretical guarantees. Furthermore, SLA requirements are typically satisfied, even when measured over relatively short time windows. Finally, as in the previous experiment, the algorithm maintains small queues compared to other online algorithms.
}

\section{Conclusion}
\label{sec:Conclusions}

We have proposed a new online model for dynamic resource allocation of a single divisible resource in a shared system. Our framework captures basic properties of cloud systems, including SLAs, limited system feedback and unpredictable (even adversarial) input sequences. We designed an algorithm that is near-optimal in terms of both work maximization and SLA satisfaction (Theorems~\ref{thm:main}, \ref{thm:mainextension} and \ref{thm:SLAextension}). Furthermore, our second algorithm, Algorithm~\ref{alg:SMB2}, can be applied in an over-commitment regime with similar guarantees, which could be of additional merit for some applications. We derived a simple expression for the offline work maximization problem that allowed us to reinterpret the algorithm's dynamics as an approximate solution of the optimal (offline) work maximization LP. Numerical experiments show that our algorithm is indeed able to achieve a desirable trade-off between work maximization and SLA satisfaction. In particular, comparisons with offline algorithms (PG and restPG) indicate that our algorithm is empirically work maximizing. Further, unlike other plausible online algorithms, our algorithm is able to quickly adapt to unexpected changes in demand and still approximately satisfy the underlying user SLAs.
Our model and results may be extended in various directions of interest to the operations research and cloud computing communities. 

A natural extension for single-resource systems is to model priority among users. Typically, users with higher priority should be given resources before their lower-priority counterparts. A challenge in this setting is how to define the metric corresponding to work maximization. One possibility is to have different weights for different users, corresponding to their priority, and then to maximize the weighted total work while satisfying SLAs. 
Directly extending our algorithms to this case means a user's multiplicative boost depends on their priority. However, our analysis in this paper does not apply, as we use the fact that users' work is interchangeable, whereas the identity of who performs the work is critical in the prioritized case.

Another challenging extension is the management of multiple resources (e.g., CPU, I/O bandwidth, memory), where different users or jobs may require the resources in different proportions. This extension requires a fundamental redefinition of our model, where work done for a user is a function of the multiple resources allocated, and may also depend on a particular job's characteristics. In many real-world scenarios, a job's resource demands are often complementary, e.g.\ RAM and CPU usage. This observation may motivate a possible extension in which we still treat all users' loads as one-dimensional quantities, and the work performed by a user is a relatively simple function of their allocations, e.g.,  a concave non-decreasing function.

\section*{Acknowledgments}

The authors' work was partially supported by the U.S.\ National Science Foundation via grants CMMI 1552479, AF 1910423 and AF 1717947. We thank Vivek Narasayya for useful discussions.
\bibliography{biblio}
\bibliographystyle{ieeetr}

%

%
\appendix
\small

\section{Projecting on $\Delta_\varepsilon$}\label{sec:projection}

\begin{proof}[Proof of Proposition~\ref{prop:projection}]
	Let $\y\in \R_{+}^N$. The projection of $\y$ on $\Delta_\varepsilon$ corresponds to the solution of the convex problem
\[
(Q) \quad \begin{array}{rl}
\min & \sum_i x(i) \ln \left(\frac{x(i)}{y(i)}\right) \\
& \begin{array}{rcll}
\sum_i x(i) & = & 1\\
x(i) & \geq & \varepsilon/N
\end{array}
\end{array}
\]
Its Lagrangian (see \citep{boyd2004convex}) is
\[
\mathcal{L}(\x,\lambda,\mu) = \sum_i x(i) \ln\left( \frac{x(i)}{y(i)}\right) - \lambda\left( \sum_i x(i) -1  \right) - \sum_i \mu_i \left( x(i) -\frac{\varepsilon}{N} \right).
\]
Using the FO conditions:
\[
\forall i : x(i) = y(i) e^{\mu_i +\lambda -1} = y(i) e^{\mu_i} C.
\]
and the SO conditions:
\[
\mu_i \geq 0 ,\,\forall i, \quad \text{ and }\quad x(i) > \frac{\varepsilon}{N} \implies \mu_i =0.
\]
Let $S= \{ i : x(i) = \varepsilon/N  \}$ and $T=[N]\setminus S$. Then, using $\sum_{i} x(i) =1$ we obtain
\[
e^{\lambda-1} = \frac{1- \frac{\varepsilon}{N}|S|}{\sum_{i\in T}y(i)}.
\]
This proves part (b). Now, suppose we have $y(1)\leq \cdots \leq y(N)$. If $i,j\in T$, then $x(i) = y(i) e^{\lambda -1}$ and $x(j) = y(j) e^{\lambda-1}$ and then
\[
x(i) \leq x(j) \iff y(i) \leq y(j).
\]
That is, in $T$ the variables preserve their ordering.

If $i\in S$ and $j\in T$, then $y(i)e^{\lambda-1 +\mu_i} = x(i) = \frac{\varepsilon}{N} < x(j) = y(j) e^{\lambda -1}$, which implies $y(i)<y(j)$ using that $\mu_i\geq 0$. Now, let $k= \min\{ i\in T \}$ which is a well-defined number using constraint $\sum_{i=1}^N x(i)=1$. We claim that for any $ j\geq k$, $j\in T$, that is, $T$ corresponds to the interval $[k,N]$. By contradiction, suppose that $j>k$ does not belong to $T$, then $y(j) < y(k)$ by previous calculus. However $y(j) \geq y(k)$ by the ordering of $\y$. A contradiction. With this, the algorithm to project is clear, we sort $\y$ and then we test increasingly the possible set $S=\{1,\ldots, k-1 \}$ for $k=1,\ldots,N$ and select the best candidate. This proves (a).

We say that $S$ is \emph{feasible} if there is a feasible solution $\x$ such that $S=\{ i : x(i)=\varepsilon/N  \}$. In the following paragraphs we prove that the first feasible solution found in this process is the right one.

Observe that once $S=\{ 1,\ldots,k \}$ is feasible, then $S'=\{ 1,\ldots, j  \}$ remains feasible for all $j\geq k$. Indeed, if $S=\{ 1, \ldots, k \}$ is feasible, then
\[
1 = \frac{\varepsilon}{N}k + \sum_{i\in T} x(i).
\]
Now, increasing $S$ to $S'=\{ 1,\ldots, k+1 \}$ means that we pick $x(k+1)>\frac{\varepsilon}{N}$ and we decrease it to $\frac{\varepsilon}{N}$. Therefore, $x(k+2),\ldots, x(N)$ must increase. Therefore, $S'$ remains feasible. The proof for general case $j\geq k$ follows by induction.

Now, we claim that if $S=\{ 1,\ldots, k \}$ is feasible, then $S'=\{ 1,\ldots, k+1 \}$ cannot have better optimal value. Indeed, the difference between the objective $S'$ and $S$ is
{\small \begin{align*}
	& \sum_{i=1}^{k+1} \frac{\varepsilon}{N}\ln \frac{\varepsilon}{Ny(i)} + \left( 1 - \frac{\varepsilon}{N}(k+1) \right)\ln \frac{1-\frac{\varepsilon}{N}(k+1)}{\sum_{i\geq k+2} y(i) } - \sum_{i=1}^{k} \frac{\varepsilon}{N}\ln \frac{\varepsilon}{Ny(i)} - \left( 1 - \frac{\varepsilon}{N}k \right)\ln \frac{1-\frac{\varepsilon}{N}k}{\sum_{i\geq k+1} y(i) } \\
	=& \frac{\varepsilon}{N}\ln \frac{\varepsilon}{Ny(k+1)} + \left( 1 - \frac{\varepsilon}{N}(k+1) \right)\ln \frac{1-\frac{\varepsilon}{N}(k+1)}{\sum_{i\geq k+2} y(i) } - \left( 1 - \frac{\varepsilon}{N}k \right)\ln \frac{1-\frac{\varepsilon}{N}k}{\sum_{i\geq k+1} y(i) }
	\end{align*}}
The function $f(x)=x\ln x$ is convex for $x>0$. Now, pick $x = \frac{\varepsilon}{N y(k+1)}$, $y = \frac{1-\frac{\varepsilon}{N}(k+1)}{\sum_{i\geq k+2}  y(i)}$ and $\lambda = \frac{y(k+1)}{\sum_{i\geq k+1} y(i)  }$. Then
{\small
	\begin{align*}
	\lambda x + (1-\lambda)y & = \frac{y(k+1)}{\sum_{i\geq k+1} y(i)  }\left( \frac{\varepsilon}{N y(k+1)}\right) + \frac{\sum_{i \geq k+2} y(i) }{\sum_{i\geq k+1} y(i)  } \left(\frac{1-\frac{\varepsilon}{N}(k+1)}{\sum_{i\geq k+2}  y(i)}\right) = \frac{1-\frac{\varepsilon}{N}k}{\sum_{i\geq k+1} y(i).}
	\end{align*}
}
Then, using the convexity of $f$ we obtain the result. This implies that the first feasible prefix $S$ that we find is the optimal one. Therefore, by ordering $\y$ in $\mathcal{O}(N\log N)$ time and then running binary search we can find $S$ in $\mathcal{O}(N\log N)$ time. This finishes the proof of (c).
\end{proof}

\section{Omitted Proofs}\label{sec:Omittedproofs}

\subsection{Proofs of Section~\ref{subsec:LP}}

Here we present dual stated in the offline formulation of the maximum work problem. We have the LP
\[
(P_\varepsilon) \begin{array}{rl}
\max & \sum_{i=1}^N \sum_{t=1}^T w_t(i)\\
& \begin{array}{rcllr}
\sum_{s=1}^t w_s(i) & \leq & \sum_{s=1}^t L_s(i) & \forall t,i  & (1) \\
\sum_{i=1}^{N} w_t(i) & \leq & 1-\varepsilon &\forall t &(2) \\
\w_t & \geq & 0 & \forall t
\end{array}
\end{array}
\]
Using the variables $\alpha_{t}(i)$ for constraint (1) and $\beta_t$ for constraint (2) we obtain the dual
\[
(D_\varepsilon) \begin{array}{rl}
\min & \sum_{i=1}^N \sum_{t=1}^T \alpha_t(i) \sum_{s=1}^t L_s(i) + (1-\varepsilon)\sum_{t=1}^T \beta_t \\
& \begin{array}{rcllr}
\sum_{s=t}^T \alpha_s(i) +\beta_t & \geq & 1 & \forall t,i  & (1') \\
\alpha,\beta & \geq & 0
\end{array}
\end{array}
\]
Using the change of variable $\gamma_t(i) = \sum_{s=t}^T \alpha_s(i)$ we obtain the stated dual
\[
(D_\varepsilon) \begin{array}{rl}
\min & \sum_{i=1}^N \sum_{t=1}^TL_t(i) \gamma_t(i) + (1-\varepsilon)\sum_{t=1}^T \beta_t \\
& \begin{array}{rcllr}
\gamma_t(i) +\beta_t & \geq & 1 & \forall t,i  & (1') \\
\gamma_t & \geq & \gamma_{t+1} & \forall t & (2') \\
\beta, \gamma & \geq & 0
\end{array}
\end{array}
\]
\begin{proof}[Proof of Proposition \ref{prop:optimalLP}]
	
	We prove each inequality separately. Let $0\leq t^\star\leq T$ be such that $\sum_{s=1}^{t^\star} \sum_i L_s(i) + (1-\varepsilon)(T-t^\star)=\min_{0\leq t\leq T} \sum_{s=1}^t \sum_i L_s(i) + (1-\varepsilon)(T-t)$. Consider the dual solution $(\beta,\gamma)$ such that $\gamma_t = 1$, $\beta_t=0$ for $t=1,\ldots,t^\star$ and $\gamma_t = 0,\beta_t=1$ for $t=t^\star+1,\ldots,T$. Then, by weak duality,
	\[
	v_{P_\varepsilon} \leq v_{\text{dual}}(\beta,\gamma) = \sum_{s=1}^{t^\star} \sum_i L_s(i) + (1-\varepsilon)(T-t^\star)=\min_{0\leq t\leq T} \sum_{s=1}^t \sum_i L_s(i) + (1-\varepsilon)(T-t).
	\]
	Now, consider the greedy algorithm that, in each iteration, gives enough allocation to the users in order to complete their work starting with user $1$, then user $2$, and so on. We restrict the algorithms' allocations to $(1-\varepsilon)$-allocations. We denote by $\work_\text{greedy}$ the work done by this algorithm. As usual, we denote by $\w_t$ the vector of work done at time $t$.  Let $t^\star$ be the maximum non-negative $t$ such that $\sum_i w_t(i)<1-\varepsilon$. Observe that $\sum_{s=1}^{t^\star} \sum_i w_s(i) = \sum_{s=1}^{t^\star}\sum_i L_s(i)$. Then
	\[
	\min_{0\leq t\leq T} \sum_{s=1}^t \sum_i L_s(i) + (1-\varepsilon)(T-t) \leq \sum_{s=1}^{t^\star}\sum_i L_s(i) + (1-\varepsilon)(T-t^\star) = \work_{\text{greedy}} \leq v_{P_\varepsilon},
	\]
	since $v_{P_\varepsilon}$ is the optimal solution.
\end{proof}

\begin{remark}
	This max-min result shows that the greedy algorithm is optimal for solving $(P_\varepsilon)$ and also shows how to compute the dual variables. Finally, solving $(P_\varepsilon)$ can be done efficiently in $\mathcal{O}(NT)$ by running the greedy algorithm.
\end{remark}

\subsection{Proofs of Section~\ref{sec:throughput}}

In what follows, we denote by $S_t$ the users with allocation $\frac{\varepsilon}{N}$ at time $t$.

\begin{proof}[Proof of Lemma \ref{lem:monotonicity}]
	\begin{enumerate}
		\item First, for $i\in A_t$ we have
		\[
		h_{t+1}(i) = \frac{h_t(i) e^{\eta g_t(i)} e^{\mu_i} \left( 1- \frac{\varepsilon}{N} |S_{t+1}| \right)  }{\sum_{j\in \overline{S}_{t+1} }\widehat{h}_{t+1}(j) } \geq  h_t(i) \frac{\left( 1- \frac{\varepsilon}{N}|S_{t+1}| \right)}{e^{-\eta}\sum_{j\in \overline{S}_{t+1} }\widehat{h}_{t+1}(j) }.
		\]
		We divide the analysis into two cases: $B_t\cap \overline{S}_{t+1}\neq \emptyset$ and $B_t \subseteq S_{t+1}$.
		
		For $B_t\cap \overline{S}_{t+1}\neq \emptyset$ we have
		\begin{align*}
		e^{-\eta} \sum_{j\in \overline{S}_{t+1}} \widehat{h}_{t+1}(j) & \leq e^{\lambda \eta} \sum_{j\in \overline{S}_{t+1}\cap A_t} h_t(j) + e^{-\eta} \sum_{j\in \overline{S}_{t+1} \cap B_t } h_t(j) \\
		& = e^{ \lambda\eta} \sum_{j\in \overline{S}_{t+1}} h_{t}(j) - (e^{\lambda \eta} - e^{-\eta} ) \sum_{j\in B_t \cap \overline{S}_{t+1}} h_t(j) \\
		& \leq e^{\lambda \eta} \left( 1 -\frac{\varepsilon}{N}|S_{t+1}|  \right) - \frac{\varepsilon}{N}(e^{\lambda \eta} - e^{-\eta} ) \tag{since $h_{t}(i) \geq \frac{\varepsilon}{N}$}\\
		& \leq (1 + 2\lambda \eta) \left( 1 -\frac{\varepsilon}{N}|S_{t+1}|  \right) - \frac{\varepsilon}{N} \left( \lambda\eta + \eta - \frac{\eta^2}{2} \right)  \tag{using $1+ \lambda \eta \leq e^{\lambda \eta}\leq 1+2\lambda \eta$, $e^{-\eta} \leq 1-\eta +\frac{\eta^2}{2}$}\\
		& \leq 1 - \frac{\varepsilon}{N}|S_{t+1}|  + 2\lambda \eta - \frac{\varepsilon}{N} \eta \left( 1 - \frac{\eta}{2}  \right)\\
		& \leq 1 - \frac{\varepsilon}{N}|S_{t+1}| - \frac{\varepsilon \eta}{4N}. \tag{$\eta \leq 1$ and $2\lambda  \leq \frac{\varepsilon}{4N}$}
		\end{align*}
		Therefore,
		\[
		\frac{\left( 1- \frac{\varepsilon}{N}|S_{t+1}| \right)}{e^{-\eta}\sum_{j\in \overline{S}_{t+1} }\widehat{h}_{t+1}(j) } \geq \frac{ 1 - \frac{\varepsilon}{N}|S_{t+1}| }{ 1 - \frac{\varepsilon}{N}|S_{t+1}| - \frac{\varepsilon \eta}{4N} } \geq \frac{1}{1-\frac{\varepsilon \eta}{4N}} \geq 1+ \frac{\varepsilon \eta}{4N},
		\]
		using $\frac{1}{1-x}\geq 1+x$ when $x\in (0,1)$. Hence $h_{t+1}(i) \geq h_{t}(i) (1+\frac{\varepsilon \eta}{4N})$.
		
		Now, if $B_t \subseteq S_{t+1}$, then $\overline{S}_{t+1} \subseteq A_{t}$. We have
		\begin{align*}
		\frac{e^{-\eta } \sum_{j\in \overline{S}_{t+1}} \widehat{h}_{t+1}(j)}{1- \frac{\varepsilon}{N}|S_{t+1}|} & \leq \frac{(1-\varepsilon)e^{\lambda \eta}}{1-\frac{\varepsilon}{N}|S_{t+1}|} \\
		& \leq \frac{(1-\varepsilon)e^{\lambda \eta}}{1-\varepsilon+ \frac{\varepsilon}{N}}\\
		& \leq \frac{(1-\varepsilon)(1+2\lambda \eta)}{1-\varepsilon + \frac{\varepsilon}{N}} \tag{$e^{\lambda \eta} \leq 1+ 2\lambda \eta$ since $2\lambda \eta < 1$}\\
		& \leq \frac{1-\varepsilon + 3\lambda \eta}{1-\varepsilon + \frac{\varepsilon}{N}} = 1 - \frac{\frac{\varepsilon}{N} + 3\lambda \eta }{1-\varepsilon + \frac{\varepsilon}{N}}.
		\end{align*}
		Since $\frac{\frac{\varepsilon}{N} + 3\lambda \eta }{1-\varepsilon + \frac{\varepsilon}{N}} \geq \frac{\varepsilon}{N}$ we obtain
		\[
		h_{t+1}(i) \geq h_t(i) \frac{1}{1- \frac{\varepsilon}{N}} \geq h_{t}(i)\left( 1+\frac{\varepsilon}{N}  \right) \geq h_t(i) \left( 1+ \frac{\varepsilon \eta}{4N} \right).
		\]		
		
		\item The monotonicity of $h_t(i)$ with $i\in A_t^1$ is easy to see. Let us prove the second statement:
		\begin{align*}
		\frac{e^{-\eta} \sum_{j\in \overline{S}_{t+1}} \widehat{h}_{t+1}(j)}{1-\frac{\varepsilon}{N}|S_{t+1}|} & \leq \frac{ e^{\lambda \eta} \sum_{j \in \overline{S}_{t+1}} h_t(j)}{1-\frac{\varepsilon}{N}|S_{t+1}|}\\
		& \leq \frac{e^{\lambda \eta} \left( 1- \frac{\varepsilon}{N}|S_{t+1}|  \right)}{1-\frac{\varepsilon}{N}|S_{t+1}|}\\
		& \leq e^{\lambda \eta}\\
		& \leq 1 + 2\lambda \eta = 1 + \varepsilon c,
		\end{align*}
		since $\lambda = \frac{\varepsilon^2}{8N}$. Then, for $i\in A_t^2$,
		\[
		h_{t+1}(i) \geq h_t(i) \frac{e^{\eta} (1-\frac{\varepsilon}{N}|S_{t+1}|)}{\sum_{j\in \overline{S}_{t+1}} \widehat{h}_{t+1}(j) } \geq h_t(i) \frac{1}{1+\varepsilon c} \geq h_t(i)(1-\varepsilon c).
		\]
	\end{enumerate}

\end{proof}


\begin{proof}[Proof of Claim \ref{cl:gooduser2}]
	Now, let $[s^\star+1, \ldots, T]$ and let us divide this interval into blocks of length $\widetilde{s}$ with a possible last piece of length of length at most $\widetilde{s}$. Let $L$ be one of these blocks and let $i$ be the user given by claim \ref{cl:gooduser1}, that is, $M_{i,r}=1$ for all $r\in L$. Consider $\textstyle L' = \{  t \in L :  \sum_{j\in A_t} h_t(j) < 1- \varepsilon  \}$. By using part 1 of Lemma~$\ref{lem:monotonicity}$, user $i$ increases her allocation multiplicatively in $L'$ by a factor of $(1+c)$. Observe that for $t\notin L'$, user's $i$ allocation can increase or decrease depending on $h_t(i)$. However, by Lemma by part 2 of \ref{lem:monotonicity}, we know that $h_t(i)$ will not decrease by a huge amount. Let $k'=|L'|$, then $i$ increases her allocation for $k'$ times and decreases it for at most $\widetilde{s}-k'$ times. Therefore, $k'$ maximum value is such that
	\[
	\frac{\varepsilon}{N}(1+c)^{k'} (1- \varepsilon c)^{\widetilde{s}-k'} =1
	\]
	and therefore,
	\begin{align*}
	k' & \leq \frac{\ln (N/\varepsilon) +\widetilde{s} \ln(1-\varepsilon c)^{-1}}{\ln((1+c)/(1-\varepsilon c))}\\
	& \leq \frac{1+c}{c(1+\varepsilon)} \ln (N/\varepsilon) + \frac{\varepsilon(1+c)}{(1+\varepsilon)(1-\varepsilon c)} \widetilde{s} \tag{$\ln \frac{1+c}{1-\varepsilon c} \geq \frac{c(1+\varepsilon)}{1+c}$, $\ln \frac{1}{1-\varepsilon c}\leq \frac{\varepsilon c}{1-\varepsilon c}$}\\
	& \leq \frac{\varepsilon(1+c)}{(1+\varepsilon)} \widetilde{s} + \frac{\varepsilon (1+c)}{(1+\varepsilon)(1-\varepsilon c)} \widetilde{s} \tag{$\widetilde{s}=\frac{\ln (N/\varepsilon)}{\varepsilon c}$}\\
	& = \varepsilon \frac{1+c}{1+\varepsilon} \left( 1 +\frac{1}{1-\varepsilon c} \right)\widetilde{s} \\
	& \leq 3 \varepsilon \widetilde{s}. \tag{for $N\geq 2$ and $\varepsilon\leq \frac{1}{10}$}
	\end{align*}
	Hence, $L'$ is at most a fraction of $\widetilde{s}$ and with this
	\[
	\sum_{t\in L} \sum_{i} w_t(i) \geq (1-\varepsilon)(\widetilde{s}-k') \geq  (1-\varepsilon )\left(1- 3\varepsilon   \right) \widetilde{s} \geq (1-4\varepsilon)|L|.
	\]
	Summing over all blocks we conclude the desired result.
\end{proof}

\subsection{Proof of Section~\ref{sec:SLAsat}}

\begin{proof}[Proof of Lemma \ref{lem:monotonicity2}]
	If $A_t^1=\emptyset$, the result is vacuously true. Suppose that $A_t^1\neq \emptyset$. First, we prove that under the assumption of Lemma~\ref{lem:monotonicity2}, we have $\overline{A}_{t}^1 \cap \overline{S}_{t+1} \neq \emptyset$. For $j\in S_{t+1}$ we have
	\begin{align*}
	\frac{\varepsilon}{N} &= h_{t+1}(j) \\
	&=\frac{\widehat{h}_{t+1}(j) e^{\mu_i} (1-\frac{\varepsilon}{N}|S_{t+1}|) }{\sum_{k\in \overline{S}_{t+1} } \widehat{h}_{t+1}(k) }\\
	& \geq \frac{h_{t}(j) (1-\frac{\varepsilon}{N}|S_{t+1}|) }{ \sum_{k\in \overline{S}_{t+1}} h_t(k) e^{\eta(1+\lambda)}  } \tag{since $\mu_i \geq 0$},
	\end{align*}
	This implies $h_{t}(j) \leq \frac{\varepsilon}{N}e^{\eta(1+\lambda)} < \beta(j)$. Since  $\sum_{j} h_t(j) =1 $, $i\in A_t^1$ and $\sum_{j} \beta(j)=1$ we must have that there is a user $j\neq i$ with allocation $h_t(j)\geq \beta(j)$. Clearly, $j\notin S_{t+1}$ and $j\notin A_t^1$. Therefore, $\overline{A}_t^1 \cap \overline{S}_{t+1}\neq \emptyset$.
	
	Following the proof of Lemma~\ref{lem:monotonicity}, for $i\in A_t^1$, we have
	\begin{align*}
	{e^{-\eta(1+\lambda)}\sum_{j\in \overline{S}_{t+1}} \widehat{h}_{t+1}(j)  } & \leq {\sum_{j\in \overline{S}_{t+1}\cap A_{t}^1} h_{t}(j)   + e^{-\eta\lambda} \sum_{j\in \overline{S}_{t+1} \cap \overline{A}_t^1 } h_{t}(j) }\\
	& = \sum_{j\in \overline{S}_{t+1}} h_t(j) - (1-e^{-\lambda \eta}) \sum_{j\in \overline{S}_{t+1}\cap \overline{A}_t^1 } h_t(j) \\
	& \leq 1-\frac{\varepsilon}{N}|S_{t+1}| - \frac{\varepsilon}{N}(1-e^{-\lambda\eta}) \tag{since $\overline{S}_{t+1}\cap \overline{A}_t^1 \neq \emptyset$}\\
	& \leq 1- \frac{\varepsilon}{N}|S_{t+1}| - \frac{\varepsilon \lambda \eta}{2N}. \tag{$1-e^{-x}\geq \frac{x}{2}$ for $x\in [0,1]$}
	\end{align*}
	Therefore,
	\[
	h_{t+1}(i) \geq h_{t}(i) \frac{1-\frac{\varepsilon}{N}|S_{t+1}|}{1-\frac{\varepsilon}{N}|S_{t+1}| - \frac{\varepsilon\lambda\eta}{2N} } \geq h_{t}(i) \left( 1+ \frac{\varepsilon\lambda\eta}{2N} \right).
	\]
\end{proof}

\section{Greedy Online Algorithm}\label{sec:localgreedy}

In this section, we prove that the following greedy allocation strategy is almost optimal in work maximization. The algorithm divides the users into 3 categories: $A$ non-empty queue users with non-zero allocation, $B$ non-empty queue users with zero allocation and $I$ empty queue users with zero allocation. At time $t$, a user $i\in A$ is left in $A$ if she still has non-empty queue, otherwise we will move her to $I$; a user $i\in I$ will be moved to $B$ if her queue is becomes non-empty, otherwise she will remain in $I$; finally, if all users from $A$ are moved to $I$, then we will move all $B$ to $A$, otherwise we will left $B$ untouched. In any case, we will distribute uniformly among the users that remain in $A$.

Users move from $A$ to $I$, $I$ to $B$ and $B$ to $A$. Let $\w_t$ be the work done by the algorithm and let $\w_t'$ be the optimal offline work.

\begin{theorem}
	For any loads $L_1,\ldots,L_T\in \R_{\geq 0}^N$, and any $\varepsilon>0$, this greedy Algorithm guarantees
	\[
	\sum_{t=1}^T \sum_i w_t(i) + 2\frac{N^2}{\varepsilon} \geq \sum_{t=1}^T \sum_i w_t'(i)
	\]
	where $\w_t'$ is the work done by the optimal offline sequence of $(1-2\varepsilon/N)$-allocations.
\end{theorem}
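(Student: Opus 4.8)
The plan is to mirror the proof of Theorem~\ref{thm:main1}, replacing the multiplicative-weight utilization estimate (Lemma~\ref{lem:monotonicity}) by a combinatorial count of the idle steps of the greedy batches. Set $\widetilde{s}=\lceil N^2/\varepsilon\rceil$ and let $M_{i,t}$ record whether user $i$'s queue is nonempty at time $t$, exactly as in the proof of Theorem~\ref{thm:main1}. Define $s^\star$ to be the largest $s\geq 0$ with $\sum_{t=1}^{s}\sum_i L_t(i)\leq \sum_{t=1}^{s+\widetilde s}\sum_i w_t(i)$. Since the identity $Q_{t+1}(i)=Q_t(i)+L_t(i)-w_t(i)$ and Claim~\ref{cl:gooduser1} use only the queue dynamics and not the specific allocation rule, Claim~\ref{cl:gooduser1} holds verbatim for the greedy algorithm: every block $[r,r+\widetilde s]$ with $r>s^\star$ contains a user $i^\star$ that is active at every step of the block.

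The heart of the argument is the analogue of Claim~\ref{cl:gooduser2}: in any such block the greedy algorithm does at least $(1-2\varepsilon/N)\widetilde s$ units of work. Call a maximal run of steps during which the set $A$ is served without being emptied a \emph{batch}; a new batch begins exactly when $A$ becomes empty and $B$ is promoted. The two structural facts I would establish are, first, that within a single batch $|A|$ is non-increasing --- users leave $A$ for $I$ when their queue empties, while newly active users enter $B$ rather than $A$ --- and second, that at any step of a batch in which no user of $A$ empties its queue, every $i\in A$ completes its full share $1/|A|$, so the utilization $\sum_{i\in A}w_t(i)$ equals $1$. Consequently the only steps of a batch with utilization below $1$ are the at most $N$ steps at which $|A|$ strictly decreases (an emptying step drops $|A|$ by at least one, and $|A|\leq N$ initially). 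Since $i^\star$ is active throughout the block, once $i^\star$ enters $A$ it never leaves and $A$ is never emptied for the rest of the block; hence at most two batches meet the block (the one running at step $r$, and the one containing $i^\star$ after the single promotion of $B$). This gives at most $2N$ steps of utilization below $1$, and therefore work at least $\widetilde s-2N\geq (1-2\varepsilon/N)\widetilde s$ by the choice of $\widetilde s$.

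With this block estimate in hand, I would partition $(s^\star,T]$ into consecutive blocks of length $\widetilde s$ (the last possibly shorter) and sum, absorbing the final partial block into an additive $\widetilde s$, to obtain $\sum_{t=s^\star+1}^T\sum_i w_t(i)\geq (1-2\varepsilon/N)(T-s^\star)-\widetilde s$, exactly as in Claim~\ref{cl:gooduser2}. Finally I would invoke weak duality against the two-phase feasible solution of $(D_{2\varepsilon/N})$ given by $\gamma_t(i)=1,\ \beta_t=0$ for $t\leq s^\star$ and $\gamma_t(i)=0,\ \beta_t=1$ for $t>s^\star$. By Proposition~\ref{prop:optimalLP} this dual value equals $\sum_{t\leq s^\star}\sum_i L_t(i)+(1-2\varepsilon/N)(T-s^\star)$; bounding the first term by $\sum_{t\leq s^\star+\widetilde s}\sum_i w_t(i)$ (the definition of $s^\star$) and the second by $\sum_{t=s^\star+1}^T\sum_i w_t(i)+\widetilde s$ (the block estimate), together with $\sum_{t=s^\star+1}^{s^\star+\widetilde s}\sum_i w_t(i)\leq\widetilde s$, yields $\sum_{t}\sum_i w_t'(i)\leq \sum_t\sum_i w_t(i)+2\widetilde s=\sum_t\sum_i w_t(i)+2N^2/\varepsilon$.

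I expect the batch bookkeeping in the second paragraph to be the main obstacle: one must verify carefully that $A$ never gains members inside a batch (so that emptying steps are the only source of idleness and are bounded by $N$), that loads arriving to users already in $A$ can only prevent them from emptying, and that an always-active user forces at most one promotion of $B$ inside the block, so that only two batches can intersect it. The remaining duality bookkeeping is then identical to the proof of Theorem~\ref{thm:main1}.
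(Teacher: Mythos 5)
Your proposal is correct and follows essentially the same route as the paper's proof: the same choice of $s^\star$, the same reuse of Claim~\ref{cl:gooduser1} (which indeed depends only on the queue dynamics), the same partition of $(s^\star,T]$ into blocks of length $N^2/\varepsilon$ with at most $2N$ under-utilized steps per block, and the same weak-duality step against the dual of $(P_{2\varepsilon/N})$. The only difference is that your batch bookkeeping supplies the justification for the bound $|L'|\leq 2N$ that the paper dismisses as ``easy to see,'' and that verification is correct.
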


\begin{proof}
	Let $t^\star$ be the maximum $t\geq 0$ such that $\sum_{s=1}^{t+N^2} \sum_{i} w_s(i)\geq \sum_{s=1}^{t}\sum_i L_s(i)$. By claim \ref{cl:gooduser1} we know that each interval $[r,r+N^2/\varepsilon)$, with $r>t^\star$, has a user with no-empty queue. As in claim \ref{cl:gooduser2} we divide the interval $[t^\star+1,T]$ into blocks of length $N^2/\varepsilon$ with a last block of length at most $N^2/\varepsilon$. Pick any of these blocks, say $L$, and let $L'=\{ t\in L : \sum_i w_t(i)<1 \}$. It is easy to see that $|L'|\leq 2 N$ and therefore, summing over all block, we have $\sum_{t\geq t^\star+1}\sum_{i} w_t(i)+N^2/\varepsilon\geq (T-t^\star)(1-2\varepsilon/N)$. The conclusion follows applying weak duality to $(P_{2\varepsilon/N})$.
\end{proof}

\begin{remark}
	Against the best $1$-allocations we can optimize $\varepsilon$ and obtain $\varepsilon = \sqrt{N^3/T}$. This greedy strategy will be $\mathcal{O}(\sqrt{N T})$ far from the optimal dynamic work. Observe that this matches the lower bound in Theorem~\ref{thm:GeneralLB}.
\end{remark}

\section{Lower Bound}\label{sec:Lower}

\begin{theorem}\label{thm:GeneralLB}
	For any online deterministic algorithm $\mathcal{A}$ setting at each time $1$-allocations, with an underlying queuing system, and with the same limited feedback as Algorithm~\ref{alg:SMB1}, there exists a sequence of online loads $L_1,\ldots,L_T$ such that $\work_{\h_1^*,\ldots,\h_T^*} - \work_\mathcal{A} = \Omega\left(\sqrt{T}\right)$, where $\h_1^*,\ldots,\h_T^*$ are the optimal offline dynamic $1$-allocations.
\end{theorem}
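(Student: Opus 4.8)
The plan is to reduce the work gap to a single clean quantity, namely the total amount of resource the algorithm leaves unused, and then to build an adversarial load sequence forcing that quantity to be $\Omega(\sqrt T)$. For the reduction I would first arrange the loads so that the offline optimum completes a full unit of work at every step. By Proposition~\ref{prop:optimalLP} (with $\varepsilon=0$, since $\mathcal A$ and the benchmark both use $1$-allocations), the optimal offline value is $\min_{0\le t\le T}\bigl(\sum_{s\le t}\sum_i L_s(i)+(T-t)\bigr)$, so if I guarantee the cumulative inflow satisfies $\sum_{s\le t}\sum_i L_s(i)\ge t$ for all $t$, then $\work_{\h_1^*,\ldots,\h_T^*}=T$. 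Since $\mathcal A$ uses $1$-allocations, $\sum_i h_t(i)=1$, and hence $\work_{\mathcal A}=\sum_t\sum_i w_t(i)=T-U$ where $U:=\sum_{t=1}^T\bigl(1-\sum_i w_t(i)\bigr)\ge 0$ is the total \emph{unused} resource. Thus $\work_{\h_1^*,\ldots,\h_T^*}-\work_{\mathcal A}=U$, and it suffices to exhibit loads that keep the inflow condition and force $U=\Omega(\sqrt T)$. I would use only two active users (leaving any remaining users idle with zero load, which only helps the adversary), so the lower bound is already present for $N=2$.

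Next I would design the adaptive adversary. Because $\mathcal A$ is deterministic and receives only the binary active/inactive pattern, the adversary can simulate $\mathcal A$ and, crucially, exploit that the feedback never reveals queue \emph{magnitudes}: an active user may hold a huge backlog or a vanishing one, and these are indistinguishable. The adversary keeps the system fed (maintaining $\sum_{s\le t}\sum_i L_s(i)\ge t$, so offline stays at full work and the gap equals $U$) by always keeping one user genuinely ``rich,'' while coupling two hidden queue configurations that produce the \emph{same} feedback up to a chosen commitment step. Unused resource $1-\sum_i w_t(i)=\sum_i\bigl(h_t(i)-w_t(i)\bigr)$ is strictly positive exactly when the algorithm places allocation on a user whose available work $Q_t(i)+L_t(i)$ falls below that allocation; the adversary's job is to route allocation onto such ``nearly empty'' users without the feedback betraying them in advance.

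The heart of the argument is a tradeoff controlling how fast $U$ can be accumulated, and I would phrase it in terms of the algorithm's per-step \emph{reserve} $r_t=\min_i h_t(i)$ (equivalently its commitment $1-r_t=\max_i h_t(i)$). There are two competing sources of unused resource. First, whenever the algorithm commits mass $1-r_t$ to a single active user, the adversary may reveal that user to have backlog just under its allocation, producing a ``cash-out'' that wastes $\approx 1-r_t$ in one step; but such an event empties that user, and re-creating an indistinguishable rich/poor ambiguity requires rebuilding a hidden reservoir, which separates consecutive cash-outs by an interval whose length grows as the reservoir does. Second, any reserve $r_t$ the algorithm holds back to hedge against a cash-out is \emph{itself} unused at rate $\approx r_t$ whenever the adversary makes the reserved user the empty one. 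The plan is to show that, against any (possibly history-dependent) choice of the $r_t$, the adversary can schedule reservoir sizes of order $1/\rho$ and keep the reserve-targeted user poor, so that one source wastes $\Omega(\rho)$ per step and the other $\Omega(1)$ per event with events $\Theta(1/\rho)$ apart; choosing the balance point $\rho=\Theta(1/\sqrt T)$ then yields $U=\Omega(\sqrt T)$ regardless of how $\mathcal A$ splits between committing and hedging. This matches the $O(\sqrt{NT})$ upper bound of the greedy algorithm in Appendix~\ref{sec:localgreedy}.

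The step I expect to be the main obstacle is making this tradeoff rigorous while respecting all constraints simultaneously. Concretely, I must maintain two queue configurations that are (i) feedback-indistinguishable for $\mathcal A$ up to each commitment step, which I would establish by an explicit coupling and induction on $t$ using that $\mathcal A$'s decisions depend only on the common active/inactive history; (ii) consistent with the exact queue recursion $Q_{t+1}(i)=\max\{0,Q_t(i)+L_t(i)-h_t(i)\}$; and (iii) compatible with the inflow condition $\sum_{s\le t}\sum_i L_s(i)\ge t$, so that the offline benchmark provably finishes all work and the reduction $\work_{\h_1^*,\ldots,\h_T^*}-\work_{\mathcal A}=U$ holds. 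The tension is that keeping offline fully fed forces real load into the system (tending to make users \emph{rich}, which suppresses waste), while forcing waste requires presenting \emph{poor} users as ambiguously active; reconciling these two pressures and proving the $\Omega(\sqrt T)$ bound holds uniformly over all adaptive hedging strategies of $\mathcal A$ is the delicate core of the proof.
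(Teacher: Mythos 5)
Your reduction and your central insight coincide with the paper's own proof: the paper also restricts to $N=2$, keeps the per-step inflow equal to one so that the offline optimum equals $T$ (hence the gap equals the total unused resource, your $U$, which sits in the final queues), and exploits exactly the fact that a deterministic algorithm fed identical active/inactive patterns must produce identical allocations, so the adversary can couple two load sequences and defer its commitment. However, the part you defer as ``the delicate core'' is the actual content of the proof, and the rate-based tradeoff you sketch in its place has two concrete defects. First, your second waste source --- a per-step bleed of $\Omega(\rho)$ on a reserved-but-poor user --- is not realizable against a feedback-adaptive opponent: to waste allocation on a user, her available work $Q_t(i)+L_t(i)$ must fall below $h_t(i)$, but then she finishes it, her queue empties, the feedback reveals this, and the algorithm can zero her allocation at the very next step; re-creating the ambiguity requires reloading her, and an algorithm such as OWM of Appendix~\ref{sec:localgreedy} (which gives newly active users nothing) never bleeds this way. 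So the claim ``one source wastes $\Omega(\rho)$ per step'' fails in general. Second, the adversary cannot ``schedule reservoir sizes of order $1/\rho$'' for a $\rho$ of its choosing: under the inflow condition the hidden reservoir is nothing other than the backlog created by past waste, so its size is an outcome of the game, not a free parameter, and there is no externally chosen balance point $\rho=\Theta(1/\sqrt{T})$.

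The paper's construction resolves both issues by concentrating all waste in phase-ending ``cash-out'' events and letting the growth of inter-event intervals come for free from a drain-time argument. Each phase starts with one queue empty and the other holding backlog $q$; a first load makes both users active, the poor one holding exactly $\varepsilon$. The adversary then couples branch (a), loads $L_t=\h_t$ (both queues frozen, zero waste), with branch (b), loads $(1,0)$ (the rich user, receiving allocation $1-h_t(1)$, drains). Both branches generate identical feedback, hence identical allocations, for roughly $2q$ steps. If the algorithm ever places $h_t(1)\geq 1/2$ within the window, the adversary commits to branch (a) and cashes out at least $1/2-\varepsilon$ against the $\varepsilon$-queue; otherwise it commits to branch (b), waits at most $2q+1$ steps until the rich user is about to empty, and cashes out at least $1/4$ against her. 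Either way the backlog grows by at least $1/4$ per phase while phase $i$ lasts at most $2q_i+2$ steps, giving $q_i\geq i/4$, $T\leq 40q_m^2$, and thus a gap of $q_m\geq\sqrt{T/40}$, uniformly over all algorithms and with no case analysis on hedging rates. This two-branch phase construction with its recurrence is precisely the piece missing from your proposal.
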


\begin{proof}
	We consider the case with $N=2$ users, the general case reduces to $N=2$ by loading jobs only to two users. Let $\mathcal{A}$ be an online algorithm for allocating a divisible resource for $2$ users and with underlying queuing system and limited feedback. Without loss of generality, we can assume that the allocations sets by $\mathcal{A}$ always add up to $1$ at every time step.

	We will construct a sequence of loads $L_t = (L_t(1), L_t(2))$ that at every time will add up to $1$. This will ensure that the overall work done by the optimal offline dynamic policy will be $T$. On the other hand, we will show that this sequence of loads will lead to large queue length for at least one of the users.  The main ingredient is to use the fact the algorithm receives limited feedback about the state of the system, i.e., which users have empty queue. In particular, this implies that if there are two distinct set of load vectors $L_t$ and $L'_t$ for  some interval $t\in [r,s]$ such that the queues remain non-empty on both these sequences, then the resource allocation to the users in the two load sequences must be identical. 
	
	We will divide the time window $[1,T]$ into phases. Each phase will begin with a configuration of queues, say $Q= (Q(1), Q(2) )$, where one of the queues is empty and the other one nonempty. We set $q= Q(1)+Q(2)$ and we denote by $q_i$ the $q$ at phase $i$. We define $q_0 = 0$. We will prove that at the end of each phase $i\geq 1$, $q_{i+1}\geq q_i+\frac{1}{4}$ with all $q_{i+1}$ cumulated in one queue and the other queue empty.
	
	Initially, the algorithm has a fixed deterministic allocation $\h_1 = (h_1(1), h_1(2) )$. If $h_1(1)\leq h_1(2)$, then we load $L_1 =(0,1)$. Otherwise, we load $L_1 = (1,0)$. In any case, we have $q_1\geq \frac{1}{2}$ and all $q_1$ in one queue.
	
	Now, we will describe how the general phases work. For the sake of simplicity, we will describe the phase starting at time $t=1$. We have queue configuration $Q  =  ( Q(1), Q(2) ) $ with $q>0$. By the initial phase, we can assume $q\geq 1/2$. Moreover, we can assume that only one of the queues is nonempty, this point will be clear after we describe how the phase works and it is clearly true for phase $1$. Phase $i$ with $q=q_i$ will last at most $2q + 2$ time steps.
	
	Suppose that $Q(1)=0$ and $Q(2)>0$. If $h_1(1)=1$, then we load $L_1=(0,1)$ and the phase ends with $q$ increased by $1$ and user $1$'s queue empty. Therefore, we can assume $h_1(1)<1$. Our first load will be $L_1 =  (h_1(1)+\varepsilon, h_2(2)-\varepsilon) $ with $0<\varepsilon<\frac{1}{4}$ small enough and such that both queues are nonempty. The following loads will be $L_t=\h_t$, the allocation of $\mathcal{A}$ at time $t$. Observe that the first load will ensure that both users see nonempty queues until the end of the phase. Moreover, user $1$ always has exactly $\varepsilon$ remaining in her queue.
	
	\begin{itemize}
		\item If there is a time $\tau^\star \in [1, 2q+1]$ such that $h_{\tau^\star}(1)\geq 1/2$, then we change the load at time $\tau^\star$ for $L_{\tau^\star}' = (0,1)$. This will increment $q$ by at least $1/2-\varepsilon \geq 1/4$ and the phase ends. Observe that $Q_{\tau^\star+1}(1)=0$.
		
		\item We can assume now that for the loads $L_t=\h_t$ we always have $h_t(1)<1/2$ for all $t\in [2,2q+1]$. We change the loads to $L_t'=(1,0)$ until time $\tau^\star$ in which user $2$ empties her queue. Recall that the feedback of the algorithm is only the set of empty queues at every time step. Thus the behavior of $\mathcal{A}$ under $L_t$ and $L_t'$ will be the same until time $\tau^\star$. Now, we change load $L_{\tau^\star}$ by $L_{\tau^\star}' = ( 1 - h_{\tau^\star}(2) + Q_{\tau^\star}(2) -\varepsilon'  ,  h_{\tau^\star}(2) - Q_{\tau^\star}(2) + \varepsilon'  )$ with $0<\varepsilon'<1/4$ small enough. This will ensure that queue $2$ will be exactly $\varepsilon'$. Now, in an extra step, we load $L_{\tau^\star+1}=(1,0)$. Again, we have $q$ increased by at least $1/4$ and this ends the phase. Observe that $Q_{\tau^\star+2}(2)=0$.
	\end{itemize}

	The analysis is similar for $Q(1)>0$ and $Q(2)=0$. Observe that at the end of each phase, only one queue is nonempty and the other one is empty. In any case, we have the desired increment. With this, we can set the following recurrence, $
	q_0 = 0, \quad q_{i-1} + \frac{1}{4} \leq q_{i} \quad \forall i\geq 1$.	We deduce that $q_i \geq i/4$. Now, let $m$ be the number of phases. By construction, each phase last at most $2 q_i +2$. Then $T \leq \sum_{i=1}^m (2q_i +2) \leq 40 q_m^2$, where we have used $ 4q_m\geq4 q_i \geq i \geq 1$. From here we deduce that $q_m \geq \sqrt{T/40}$.
	
	Now, the work done by the algorithm and the unfulfilled work in the queues must add up the overall load. Then $q_m + \work_\mathcal{A} = T = \work_{\h_1^*,\ldots,\h_T^*}$	from which we obtain the result.
\end{proof} 

\section{Additional Algorithms}\label{sec:Add_Alg}

\begin{center}
	\begin{algorithm}[H]
		\KwIn{Sequence of loads $(L_t)_{t=1}^T$ and SLAs $\beta(1),\ldots,\beta(N)$.}
		\For{$t=1,\ldots,T$}
		{
			$w_t(i) \leftarrow 0$, $\forall i\in [N]$. \\
			$\mathrm{Rem} \leftarrow 1$.\\
			$\mathrm{Rem}(i) \leftarrow Q_{t-1}(i) + L_t(i)$, $\forall i \in [N]$. \\
			\Repeat{$\mathrm{Rem}=0$ or $\{ i : \mathrm{Rem}(i)> 0 \}= \emptyset$}{
				$A \leftarrow \{  i : \mathrm{Rem}(i) >0  \}$.\\
				$\Sigma \leftarrow  \sum_{i\in A_t} \beta(i)$.\\
				$i^*\leftarrow \argmin_{k\in A} \mathrm{Rem}(k) $.\\
				\If{$\mathrm{Rem}(i)< \frac{\beta(i)}{\Sigma} \mathrm{Rem} $}{
					$w_t(i^*)\leftarrow w_t(i^*)+ \mathrm{Rem}(i^*)$.\\
					$\mathrm{Rem} \leftarrow \mathrm{Rem} - \mathrm{Rem}(i^*)$.\\
					$\mathrm{Rem}(i^*)\leftarrow 0$.
				}
				\Else{
					\For{$k\in A$}{
						$w_t(k) \leftarrow w_t(k) + \frac{\beta(k)}{\Sigma}  \mathrm{Rem}$.\\
						$\mathrm{Rem}(k) \leftarrow \mathrm{Rem}(k) - \frac{\beta(k)}{\Sigma}  \mathrm{Rem}$.
					}
					$\mathrm{Rem} \leftarrow 0.$
				}
			}
		}
		\caption{Proportional Greedy} \label{alg:PropGreedy}
	\end{algorithm}
\end{center}

\begin{center}

	\begin{algorithm}[H]
		\KwIn{Sequence of loads $(L_t)_{t=1}^T$ and SLAs $\beta(1),\ldots,\beta(N)$.}
		Initial distribution $\h_1= (\beta(1),\ldots, \beta(N))$,\\
		\For{$t=1,\ldots,T$}
		{
			Set $\h_t$ and obtain $A_t=\{ i : Q_t(i)\neq 0  \}$,\\
			Update $h_{t+1}(i) = \frac{\beta(i)}{\sum_{j\in A}\beta(j)}$\\
			If $A=\emptyset$, then $\h_{t+1}= \h_1$.
		}
		\caption{Online Proportional} \label{alg:OnlProp}
	\end{algorithm}
\end{center}

\section{Gamma Distribution}\label{sec:Gamma}

Recall that a Gamma distribution (\cite{feller1957introduction}) is characterized by two parameters: the \emph{shape} $k> 0$ and the \emph{scale} $\theta> 0$. The PDF of a Gamma$(k,\theta)$ is given by $\frac{1}{\Gamma(k) \theta^k} x^{k-1} e^{-x/\theta} $ where $\Gamma(k)= \int_{0}^{\infty} u^{k-1} e^{-u} \mathrm{d}u$ is the standard Gamma function. See Figure~\ref{fig:gammaplot} for PDFs of different Gamma distribution for various choices of $k$ and $\theta$.

\begin{figure}[h!]
	\centering
	\includegraphics[width=0.75\linewidth]{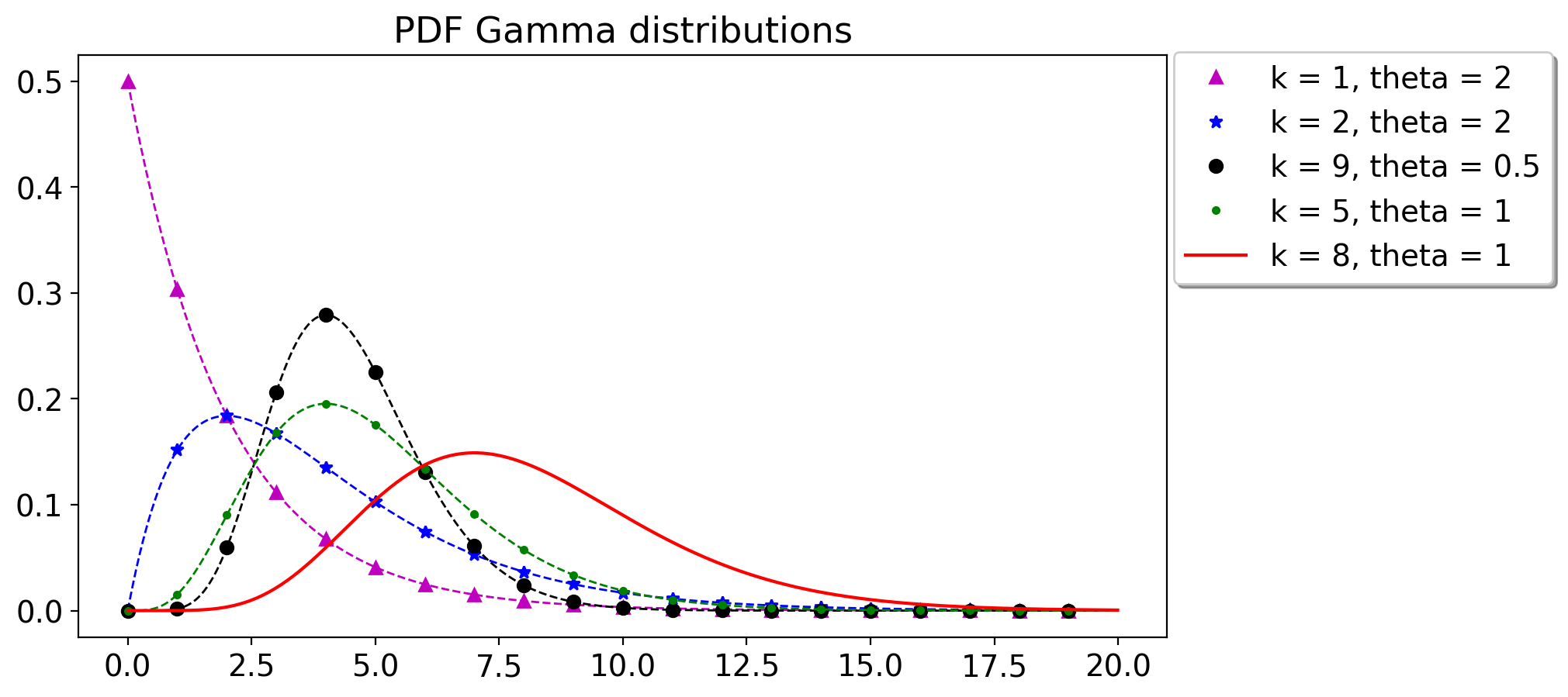}
	\caption{PDF of different Gamma distributions.}
	\label{fig:gammaplot}
\end{figure}

\begin{proposition}
	Let $X\sim \mathrm{Gamma}(k,\theta)$. Then, $\E[X] = k\theta$ and $\mathrm{Var}(X) = k\theta^2$.
\end{proposition}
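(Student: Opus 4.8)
The plan is to compute the raw moments of $X$ directly from the density $f(x)=\frac{1}{\Gamma(k)\theta^k}x^{k-1}e^{-x/\theta}$ on $x>0$, reduce each moment integral to an evaluation of the Gamma function via a single rescaling substitution, and then read off the mean and variance. The only structural fact I would isolate first is the functional equation $\Gamma(k+1)=k\Gamma(k)$, which follows from integration by parts applied to $\Gamma(k+1)=\int_0^\infty u^{k}e^{-u}\,\mathrm{d}u$ (differentiating $u^k$ and integrating $e^{-u}$, the boundary term vanishes at both endpoints for $k>0$). Iterating this recursion gives $\Gamma(k+n)/\Gamma(k)=k(k+1)\cdots(k+n-1)$ for integer $n\geq 1$.

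Next I would compute the general moment. For any integer $n\geq 0$,
\[
\E[X^n]=\frac{1}{\Gamma(k)\theta^k}\int_0^\infty x^{\,n+k-1}e^{-x/\theta}\,\mathrm{d}x,
\]
and after the substitution $u=x/\theta$ (so $x=\theta u$ and $\mathrm{d}x=\theta\,\mathrm{d}u$) the integral becomes $\theta^{\,n+k}\int_0^\infty u^{\,n+k-1}e^{-u}\,\mathrm{d}u=\theta^{\,n+k}\Gamma(k+n)$. Cancelling $\theta^k$ yields the clean formula $\E[X^n]=\theta^n\,\Gamma(k+n)/\Gamma(k)$. (Taking $n=0$ also reconfirms that $f$ integrates to $1$, a free sanity check on the normalization constant.)

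Finally I would specialize. Setting $n=1$ gives $\E[X]=\theta\,\Gamma(k+1)/\Gamma(k)=k\theta$ by the recursion, which is the first claim. Setting $n=2$ gives $\E[X^2]=\theta^2\,\Gamma(k+2)/\Gamma(k)=k(k+1)\theta^2$. Then
\[
\mathrm{Var}(X)=\E[X^2]-\E[X]^2=k(k+1)\theta^2-k^2\theta^2=k\theta^2,
\]
completing the proof.

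Honestly, there is no real obstacle here: the computation is routine once the Gamma recursion is in hand, and the substitution $u=x/\theta$ does all the work of converting the moment integrals into Gamma-function values. The only point that warrants a moment's care is justifying that the boundary term in the integration by parts for $\Gamma(k+1)=k\Gamma(k)$ vanishes (which holds since $k>0$ forces $u^k e^{-u}\to 0$ as $u\to 0^+$ and as $u\to\infty$), and keeping track of the powers of $\theta$ so that the normalization $\theta^{-k}$ cancels correctly against the $\theta^{\,n+k}$ produced by the substitution.
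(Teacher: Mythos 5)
Your proof is correct and complete: the substitution $u=x/\theta$ cleanly reduces each moment integral to a Gamma-function evaluation, the recursion $\Gamma(k+1)=k\Gamma(k)$ is properly justified (including the vanishing boundary term, which indeed requires $k>0$), and the specializations to $n=1$ and $n=2$ give the mean and variance as claimed. For comparison: the paper does not actually prove this proposition at all --- it states the fact and defers to the standard reference (Feller), treating it as textbook material. So there is no paper argument to measure yours against; your self-contained computation via the general moment formula $\E[X^n]=\theta^n\,\Gamma(k+n)/\Gamma(k)$ is the standard one, and it has the minor added value of verifying the normalization constant (the $n=0$ case) for free.
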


\end{document}